\numberwithin{equation}{section}
\newcommand{\D}{\displaystyle}
\newcommand{\car}{\mathbf{1}}
\newcommand{\R}{{\mathbb R}}
\newcommand{\Z}{{\mathbb Z}}
\newcommand{\N}{{\mathbb N}}
\newcommand{\esp}{{\mathbb E}}
\newcommand{\pro}{{\mathbb P}}
\newcommand{\Gc}{{\mathcal G}}
\newcommand{\Ec}{{\mathcal E}}
\newcommand{\mb}[1]{\mathbf{#1}}
\newcommand{\tr}{{\rm tr}\,}
\newcommand{\dist}{{\rm dist}\,}
\DeclareMathOperator*{\essinf}{ess\,inf}
\theoremstyle{plain}
\newtheorem{Th}{Theorem}[section]
\newtheorem{Le}{Lemma}[section]
\newtheorem{Pro}{Proposition}[section]
\newtheorem{Cor}{Corollary}[section]
\theoremstyle{definition}
\newtheorem{Rem}{Remark}[section]
\newtheorem*{Rems*}{Remarks}
\title{On the spatial extent of localized eigenfunctions for random Schr\"odinger operators}
\author{Fr{\'e}d{\'e}ric Klopp} \address[Fr{\'e}d{\'e}ric
Klopp]{\vskip.1cm Sorbonne Universit{\'e}, Universit{\'e} de Paris, CNRS,
  Institut de Math{\'e}matiques Jussieu - Paris Rive Gauche, F-75005,
  Paris, France}
\email{\href{mailto:frederic.klopp@imj-prg.fr}{frederic.klopp@imj-prg.fr}}
\author{Jeffrey Schenker}
\address[Jeffrey Schenker]{\vskip.1cm Michigan State University Department of Mathematics, 619 Red Cedar Road, East Lansing MI 48824, USA}
\email{\href{mailto:schenke6@msu.edu}{schenke6@msu.edu}}
\keywords{}
\subjclass{}
\begin{document}
% \setpagewiselinenumbers \modulolinenumbers[1] \linenumbers \french
%
\begin{abstract}
  The present paper is devoted to new, improved bounds for the
  eigenfunctions of random operators in the localized regime.  We
  prove that, in the localized regime with good probability, each
  eigenfunction is exponentially decaying outside a ball of a certain
  radius, which we call the ``localization onset length''.  For
  $\ell>0$ large, we count the number of eigenfunctions having onset
  length larger than $\ell$ and find it to be smaller than
  $\exp(-C\ell)$ times the total number of eigenfunctions in the
  system.  Thus, most eigenfunctions localize on finite size balls
  independent of the system size.
\end{abstract}

\maketitle
\section{Introduction}
\label{intro}
Single particle Anderson localization is extremely well studied in
both the physics and mathematics literature.  The phenomenon was first
identified by Anderson in 1958, who argued that for tight-binding
models with sufficiently strong on-site disorder the eigenstates
corresponding to a spectral band may be exponentially localized in
space \cite{PhysRev.109.1492}.  As was later clarified in the
mathematics
literature~\cite{MR0470515,Fr-Sp:83,Fr-Ma-Sc-Sp:85,Si-Wo:86}, this can
be understood as almost-sure pure-point spectrum for the corresponding
Hamiltonians along with exponential decay of the corresponding
eigenfunctions.

Extensive random systems, with sufficient decorrelation between
distant regions, are characterized by the dictum ``anything that can
happen does happen, infinitely often.''  In particular, there are
arbitrarily large regions exhibiting atypical local behavior.  As a
consequence, among the eigenfunctions of a random Hamiltonian
exhibiting Anderson localization, one expects to find examples which
fail to decay over arbitrarily large scales.  For instance an
eigenfunction may have a majority of its mass in a region where the
Hamiltonian is extremely close to a periodic operator over a box of
size $\ell$.  Over this box, the eigenfunction will be close to an
eigenfunction of the corresponding periodic operator, which is
extended over the whole box.  Although the eigenfunction will
eventually decay exponentially, we may need to look far from its
center of localization to observe that decay.

In \cite{MR97m:47002}, del Rio, Jitomirskaya, Last and Simon observed
that dynamical localization is related to spectral localization
through a quantitative bound allowing for the rare a-typical behavior
described in the previous paragraph.  They noted that, for various
models for which localization had been proved, the eigenfunctions were
shown to have the following strong property, which they dubbed
\emph{Semi-Uniformly Localized Eigenfunctions} or SULE.
An operator has a SULE basis if there is a basis $\varphi_j,$
$j=1,\ldots,\infty$, of eigenfunctions such that for some $\mu >0$ it
holds that
\begin{equation}
  \label{eq:SULE0}
  |\varphi_j(x)| \ \le \ C_{\epsilon} e^{\epsilon |x_j|} e^{-\mu|x-x^{(j)}|}
\end{equation}
where $x^{(j)}$ is a \emph{localization center} for $\psi_j$ and
$C_{\epsilon}$ is finite for any $\epsilon>0$.  In the random context,
the decay constant $\mu$ is assumed not to depend on the realization
of the disorder, but the constants $C_\epsilon$ may be disorder
dependent (although finite almost surely).  In subsequent work, it was
observed that in many cases one may obtain the sharper bound
\begin{equation}
  \label{eq:polySULE}
  |\varphi_j(x)| \ \le \ C_{\nu} (1+ |x^{(j)}|)^\nu e^{-\mu|x-x^{(j)}|}
\end{equation}
for $\nu > \nicefrac{d}{2}$, with $C_{\nu}$ finite almost surely.

The exponentially growing prefactor in \eqref{eq:SULE0} is relevant
only if
$|x-x^{(j)}| > \frac{\epsilon}{\mu}|x^{(j)}| + \log
C_{\omega;\epsilon}$; for $x$ closer to $x^{(j)}$, we may replace
\eqref{eq:SULE0} by the simple bound $|\psi_j(x)|\le 1$ valid for any
$\ell^2$-normalized lattice function. Thus the presence of the
prefactor $e^{\epsilon |x^{(j)}|}$ accounts for the possibility, at
large scales, that some eigenfunctions may be extended over a large
region outside of which exponential decay sets in.  However, this
possibility is dealt with coarsely in eq.\ \eqref{eq:SULE0}, since all
eigenfunctions with localization centers far from the origin are
painted with the same brush.  In fact, one expects many of these
eigenfunctions to satisfy a better bound, without the prefactor
$e^{\epsilon |x^{(j)}|}$.

Our main goal here is to present a refinement of SULE that manifests
the fact that most eigenfunctions, wherever they may be localized, are
well localized on a region of size $O(1)$.  Roughly, we accomplish
this by associating to each eigenfunction a ``localization volume'',
which informally is the size of the smallest lattice box outside of
which the eigenfunction exhibits exponential decay. Our main result is
the following improvement of \eqref{eq:SULE0}, which we prove holds
throughout the localization regime of the Anderson model:
\begin{quote}
  there is $\mu >0$ such that, with probability one, the eigenfunctions
  of the Anderson model satisfy
  \begin{equation}
    \label{eq:mainresult} |\varphi_j(x)| \ \leq\|\varphi_j\|_\infty \
    e^{-\mu(|x-x^{(j)}|-\ell_j)_+} ,
  \end{equation}
  where $x^{(j)}\in \Z^d$, $\ell_j\ge 0$, $(a)_+=\nicefrac{(|a|+a)}{2}$
  denotes the positive part and
  $\D\|\varphi_j\|_\infty=\max_{x\in\Z^d}|\varphi_j(x)|$; furthermore,
  there are $C>0$ and $\ell_0>0$, such that, for $\ell>\ell_0$, one
  has
  \begin{equation}
    \label{eq:mainDOSresult}\limsup_{L\rightarrow
      \infty} \frac{\# \left \{ j \ \middle | \ |x^{(j)}|\le L  \quad
        \text{and} \quad \ell_j \geq\ell \right \} }{(2L+1)^d } \ \le \
    e^{-C \ell}.
  \end{equation} 
\end{quote}

The key point here is \eqref{eq:mainDOSresult}, which bounds the
number of eigenfunctions $\psi_j$ for which the length $\ell_j$ is
large.  We emphasize that $\ell_j$ is \emph{not} the ``localization
length" of $\phi_j$.  Indeed, the localization length is a disorder
independent function of the energy of the eigenfunction.  It is the
scale over which exponential decay occurs in the tail of the
eigenfunction, and is bounded by $\nicefrac{1}{\mu}$ for \emph{all}
functions in a SULE basis.  Rather, $\ell_j$ is the length scale at
which localization phenomenon sets in for the particular eigenfunction
$\phi_j$.  For this reason, we refer to $\ell_j$ as the
\emph{localization onset length}, or \emph{onset length}, of $\phi_j$.
As will be clear from the proof, eigenfunctions $\phi_j$ with large
onset lengths $\ell_j$ are associate to rare behavior over the region
$|x-x_E|\le \ell_j$, which can be controlled by large deviation
estimates.  Thus we expect the local behavior observed at scales below
the onset length to be stochastic and highly dependent on the local
environment.

Note that eq.\ \eqref{eq:mainresult} is useful only if
$|x-x_j| \ > \ \ell_j ,$ since for smaller $|x-x_j|$ the bound
saturates to become $|\varphi_j(x)|\le \|\varphi_j\|_\infty$. Thus
eq.\ \eqref{eq:mainresult} implies that the localization volume of
$\varphi_j$ is smaller than $C \ell_j^d$, and the quantity on the
left-hand side of eq.\ \eqref{eq:mainDOSresult} is roughly the
``density of states with localization volume larger than
$C \ell^{d}$.'' Bu eq.\ \eqref{eq:mainDOSresult}, the density of
states with localization volume larger than $V$ is therefore bounded
above by $ \exp (- C V^{\nicefrac{1}{d}}).$ Below we formulate the
above notions in a way that is local in energy, allowing for operators
that have localization only in part of their spectrum.

Finally, let us note that the present result provides a structural
description of the eigenfunction of a random system in the
localization phase that is akin (though quite different for obvious
reasons) to the one found for one dimensional quasi-periodic models
in~\cite{MR3779957}.

\subsection{Outline of the paper}
We formulate precise statements of our main results for the discrete
Anderson model (Thms. \ref{thr:1} and \ref{thr:3}) and their extension
to continuum Schr{\"o}dinger operators and more general tight-binding
models (Thm. \ref{thr:8}) in \S\ref{sec:form-stat-results}.  A brief
overview of the proof of Thm. \ref{thr:1} is given in
\S\ref{sec:outline}.  In \S\ref{sec:numerics} we present the results
of numerical calculation of the onset length for the 1D Anderson model
on finite intervals.  The proofs of Thm.\ \ref{thr:1} and Thm.\
\ref{thr:8} are in \S\ref{sec:proof-theorem} and the proof of
Thm. \ref{thr:3} is in \S\ref{sec:proof-theorem-3}.  In two appendices
we present: 1) a derivation of the SULE estimate from known bounds on
eigenfunction correlators (\S\ref{sec:SULE}) and 2) a large deviation
estimate that is at the heart of the proof of Thm.\ \ref{thr:1} (\S
\ref{sec:LDP}).

\section{Formal statement of results}
\label{sec:form-stat-results}
\subsection{Assumptions}
We focus on the lattice Anderson model, the Hamiltonian
$H_\omega=-\Delta+V_\omega$ on $\ell^2(\Z^d)$, where
\begin{equation}
  \label{eq:11}
  -\Delta \psi(x) \ = \ \sum_{|x-y|=1} \psi(y)
\end{equation}
is the discrete Laplacian and
\begin{equation*}
  V_\omega\psi(x) \ = \ \omega_x \psi(x)  
\end{equation*}
with $\omega=(\omega_\gamma)_{\gamma\in\Z^d}$ a collection of i.i.d. random
variables.  We formulate our results in terms of the restrictions of
$H_\omega$ to regions $\Omega \subset \Z^d$.  For simplicity we take Dirichlet
boundary conditions
\begin{equation*}
  (H_\omega)_{\Omega} \ = \ I_{\Omega}^TH_\omega I_{\Omega }
\end{equation*}
where $I_{\Omega }$ denotes the injection
$I_{\Omega}:\ell^2(\Omega)\rightarrow \ell^2(\Z^d).$ Boundary
conditions play little role in our analysis; the proofs given below
could easily be adapted to other standard conditions, e.g., Neumann or
periodic.

Given a region $\Omega \subset \Z^d$, let
\begin{equation}\label{eq:eigenvalues}
  \Ec((H_\omega)_{\Omega}) \ := \ \left \{ \text{the set of eigenvalues of $(H_\omega)_{\Omega}$} \right \} \ .
\end{equation}
Of course, for finite $\Omega$ we have
$\Ec((H_\omega)_{\Omega})=\sigma((H_\omega)_\Omega)$, the spectrum of
$(H_\omega)_\Omega$.  For infinite $\Omega$, $\Ec((H_\omega)_\Omega)$
is a countable dense subset of the point spectrum
$\sigma_p((H_\omega)_\Omega)$.  It is well known (see
e.g.~\cite[Chapter 3]{MR3364516}) that,
\begin{enumerate}
\item[(A1)] There exist closed sets $\Sigma_p\subset\Sigma\subset\R$
  such the spectrum $\sigma(H_\omega)=\Sigma$ almost surely and the
  point spectrum $\sigma_p(H_\omega)=\Sigma_p$ almost surely.
\item[(A2)] For any $E\in\R$, $\omega$-a.s., the integrated density of
  states
  \begin{equation}\label{eq:DOS}
    N(E) \ :=\ \lim_{\substack{\Omega \uparrow \Z^d \\ \Omega \text{ finite}}} \, \frac{\#\Ec((H_\omega)_{\Omega})\cap(-\infty,E] }{\#\Omega}
  \end{equation}
  exists and is almost surely independent of $\omega$; it is the
  cumulative distribution function of a probability measure supported
  on $\Sigma$.
\end{enumerate}
\begin{Rem}
  The notation $\lim_{\Omega \uparrow \Z^d}$ in \eqref{eq:DOS} denotes
  convergence along the net of finite subsets partially ordered by
  inclusion.  To compute the limit, it suffices to take a sequence of
  centered cubes.  We denote the density of states measure also by
  $N$, taking $N(I)=\int_I dN(E)$.
\end{Rem}

We require a version of the SULE estimates for the random operators
under consideration.  These are conveniently expressed in terms of
weighted norms of eigenfunctions with ``localization center'' in a
given region.  To make the notion of ``localization center'' precise,
to any function $\varphi\in \ell^2(\Omega)$, with $\Omega\subset \Z^d$, we
associate the \emph{set of localization centers}:
\begin{equation}
  \label{eq:loccenters}
  \mathcal{C}(\varphi) \ := \ \left \{ x\in \Omega \ : \ |\varphi(x)| = \|\varphi\|_\infty \right \}.
\end{equation}
Note that this is a non-empty, finite set for any
$\varphi\in \ell^2(\Omega)$.  We have:
\begin{enumerate}
\item[(A3)] There exists $I_{\mathrm{AL}}\subseteq\Sigma$, a union of
  finitely many open intervals of positive length, such that for any
  region $\Omega$, with probability one $(H_\omega)_\Omega$ has pure
  point spectrum on $I_{\mathrm{AL}}$. Furthermore, there are
  constants $A_{\mathrm{AL}}<\infty$, $\mu >0$ such that if $\Omega\subset \Z^d$
  is a region, $S \subset \Omega$ is a finite set, and $0<\varepsilon<1$,
  then, with probability larger than $1-\varepsilon$, any
  $\ell^2$-normalized eigenfunction $\varphi_E$ of
  $(H_\omega)_{\Omega}$ with eigenvalue
  $E\in I_{\mathrm{AL}}\cap \Ec ((H_\omega)_{\Omega})$ and
  $\mathcal{C}(\varphi_E)\cap S\neq \emptyset$ satisfies
  \begin{equation}
    \label{eq:SULE}
    \max_{y\in \mathcal{C}(\varphi_E)\cap S} \left (\sum_{x\in \Omega}e^{2 \mu |x-y|} |\varphi_E(x)|^2 \right )^{\frac{1}{2}}\ \leq\ A_{\mathrm{AL}} \,\left (\tfrac{\# S}{\varepsilon}\right )^{\nicefrac{1}{2}} \ .
  \end{equation} 
\end{enumerate}
\begin{Rem}
  \label{rem:3} 1) This result follows from well known estimates on
  eigenfunction correlators (see \cite[Chapter 7]{MR3364516}). For
  completeness we recall the proof in the appendix. 2) The
  $\ell^2$-estimate \eqref{eq:SULE} directly implies the pointwise
  bound
  $|\varphi_E(x)| \le A_{\mathrm{AL}} \bigl ( \frac{\# S}{\varepsilon}
  \bigr )^{\nicefrac{1}{2}} e^{-\mu|x-x_E|}$ for
  $x_E\in \mathcal{C}(\varphi_E)\cap S$. Since the statement is restricted to
  eigenfunctions with centers in a finite region $S$, we have the
  uniform bound
  $\bigl( \frac{\# S}{\varepsilon} \bigr )^{\nicefrac{1}{2}}$ in place
  of the growing prefactor $(1+|x|)^\nu$ in \eqref{eq:polySULE}. 3) We
  use the \emph{max-norm} on $\Z^d$:
  $$|x| \ = \ \max_{m=1,\ldots,d} |x_m|.$$ 
\end{Rem}

Finally, we need the well known Minami estimate for the Anderson model
(see \cite[Chapter 17]{MR3364516}):
\begin{enumerate}
\item[(A4)] There is a constant $A_M >0$, such that for any finite
  region $\Omega \subset \Z^d$, one has
  \begin{equation}
    \label{eq:40}
    \sup_{E\in I_{\mathrm{AL}}}\pro\left(\left\{\tr(\car_{[E-\varepsilon,E+\varepsilon]}
        ((H_\omega)_{\Omega}))\geq2\right\}\right) \  \le \
    A_M \, |\Omega|^2 \varepsilon^2 \ 
  \end{equation}
  for any $\epsilon>0$.
\end{enumerate}
\begin{Rem} A weaker version would be sufficient to derive our results
  (see \S\ref{sec:proof-theorem} and \S\ref{sec:general-setting}).
\end{Rem}

\subsection{Main Results} 
We begin by reformulating (A3) in terms of the onset lengths for the
eigenfunctions, based on some general notions for functions in
$\ell^2(\Omega)$. Given $\mu>0$, define a family of weighted $\ell^2$
norms by
\begin{equation}\label{eq:Mellphi}
  M_\ell^\mu (\varphi;y) \ := \ \left ( \sum_{x\in \Omega} e^{2\mu(|x-y|-\ell)_+} |\varphi(x)|^2 \right )^{\frac{1}{2}},
\end{equation}
where $\ell=0,1,2,\ldots$. If $M^\mu_\ell(\varphi;y)$ is finite, then
$\D\lim_{\ell\rightarrow
  \infty}M^\mu_\ell(\varphi;y)=\|\varphi\|_{\ell^2}$, by dominated
convergence.  Thus it makes sense to define
$$\ell_\mu(\varphi;y) \ := \ \min \{ \ell \ : \ M_\ell^\mu(\varphi;y) \le 2\|\varphi\|_2\},$$
where the threshold $2$ could be replaced by any fixed number $>1$.

Applying these notions to the SULE estimate in Assumption (A3), we
obtain the following
\begin{Pro}\label{proposition} Let $I_{\mathrm{AL}}$, $\mu$, and
  $A_{\mathrm{AL}}$ be as in (A3) and let $\Omega\subset \Z^d$ be region and
  $S\subset \Omega$ a finite set.  If $0<\varepsilon<1$, then, with
  probability larger than $1-\varepsilon$, any $\ell^2$-normalized
  eigenfunction $\varphi_E$ of $(H_\omega)_{\Omega}$ with eigenvalue
  $E\in I_{\mathrm{AL}}\cap\Ec((H_\omega)_{\Omega})$ and
  $\mathcal{C}(\varphi_E)\cap S\neq \emptyset$ satisfies
  \begin{equation}\label{eq:onsetSULE}
    \left (\sum_{x\in \Omega}e^{2 \mu (|x-x_E|-\ell_E)_+ } |\varphi_E(x)|^2 \right )^{\frac{1}{2}}\ \leq\  \ 2 ,
  \end{equation} 
  for any $x_E\in \mathcal{C}(\varphi_E)\cap S$ with
  $\ell_E = \ell_\mu(\varphi_E;x_E) < \tfrac{1}{\mu}\left (\log
    A_{\mathrm{AL}} + \tfrac{1}{2} \log \#S - \tfrac{1}{2} \log
    3\varepsilon \right ) +1$.
\end{Pro}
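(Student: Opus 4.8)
The statement really splits into two parts: the weighted-norm bound \eqref{eq:onsetSULE} and the upper bound on the onset length $\ell_E$. The first is immediate once $\ell_E=\ell_\mu(\varphi_E;x_E)$ is in hand: by the very definition of $\ell_\mu$ as the least $\ell$ with $M_\ell^\mu(\varphi_E;x_E)\le 2\|\varphi_E\|_2$, and since $\varphi_E$ is $\ell^2$-normalized so that $2\|\varphi_E\|_2=2$, we get exactly \eqref{eq:onsetSULE}. So the whole task is to bound $\ell_E$ from above, and for that I would work on the probability-$(1-\varepsilon)$ event furnished by (A3).

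First I would invoke (A3) with the given $\varepsilon$: with probability larger than $1-\varepsilon$, every normalized eigenfunction $\varphi_E$ with $E\in I_{\mathrm{AL}}\cap\Ec((H_\omega)_\Omega)$ and a center in $S$ satisfies \eqref{eq:SULE}. Since \eqref{eq:SULE} controls the maximum over $y\in\mathcal{C}(\varphi_E)\cap S$, each individual center $x_E\in\mathcal{C}(\varphi_E)\cap S$ also obeys $\sum_{x\in\Omega} e^{2\mu|x-x_E|}|\varphi_E(x)|^2\le B^2$ with $B:=A_{\mathrm{AL}}(\#S/\varepsilon)^{\nicefrac12}$. I then fix such an eigenfunction and center and argue on this event for the remainder.

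The key step is to estimate $M_\ell^\mu(\varphi_E;x_E)$ by splitting the defining sum according to whether $|x-x_E|\le\ell$ or $|x-x_E|>\ell$. On the near set $(|x-x_E|-\ell)_+=0$, so that contribution is bounded by $\|\varphi_E\|_2^2=1$. On the far set $(|x-x_E|-\ell)_+=|x-x_E|-\ell$, so that contribution equals $e^{-2\mu\ell}\sum_{|x-x_E|>\ell}e^{2\mu|x-x_E|}|\varphi_E(x)|^2\le e^{-2\mu\ell}B^2$ by the SULE bound. Hence $M_\ell^\mu(\varphi_E;x_E)^2\le 1+e^{-2\mu\ell}B^2$, which is $\le 4$ as soon as $e^{-2\mu\ell}B^2\le 3$, i.e. as soon as $\ell\ge\ell^\ast:=\tfrac1\mu(\log B-\tfrac12\log 3)=\tfrac1\mu(\log A_{\mathrm{AL}}+\tfrac12\log\#S-\tfrac12\log 3\varepsilon)$.

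Since $\ell_E$ is the smallest nonnegative integer $\ell$ for which $M_\ell^\mu(\varphi_E;x_E)\le 2$, and every real $\ell\ge\ell^\ast$ meets this, I obtain $\ell_E\le\lceil\ell^\ast\rceil<\ell^\ast+1$, which is precisely the claimed bound. There is no genuine obstacle here—the argument is a direct computation—but the two points to get right are the bookkeeping of the constant (the threshold $4$ splits as $1$ from the near part plus $3$ from the far part, and it is this $3$ combined with the $\varepsilon$ from the SULE denominator that produces the factor $3\varepsilon$ inside the logarithm) and the passage from the real threshold $\ell^\ast$ to the integer onset length, which accounts for the additive $+1$.
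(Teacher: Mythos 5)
Your proof is correct and follows essentially the same route as the paper: invoke (A3) on the probability-$(1-\varepsilon)$ event, split the weighted sum at $|x-x_E|=\ell$ to get $M_\ell^\mu(\varphi_E;x_E)^2\le 1+e^{-2\mu\ell}A_{\mathrm{AL}}^2\,\#S/\varepsilon$, and observe this is at most $4$ once $2\mu\ell\ge 2\log A_{\mathrm{AL}}+\log\#S-\log 3\varepsilon$. Your explicit handling of the integer-valued onset length via $\ell_E\le\lceil\ell^\ast\rceil<\ell^\ast+1$ is a slightly more careful rendering of the paper's closing remark that ``the result follows from the definition of $\ell_\mu$,'' but the argument is the same.
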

\begin{proof}
  Note that
  \begin{equation*}
    \sum_{x\in \Omega} e^{2\mu(|x-x_E|-\ell)_+} |\varphi_E(x)|^2  \ \le \ \sum_{|x-x_E|\le \ell}  |\varphi_E(x)|^2   + e^{-2\mu \ell }\sum_{|x-x_E|> \ell}  e^{2\mu|x-x_E|} |\varphi_E(x)|^2  \
    \le 1 + e^{-2\mu \ell} A_{\mathrm{AL}}^2 \,\tfrac{\#
      S}{\varepsilon}
  \end{equation*}
  Thus $M_\ell^\mu(\varphi_E;x_E) \le 2 $ provided
  $ 2 \log A_{\mathrm{AL}} + \log \#S -\log \varepsilon -2 \mu \ell  \le \log 3$.  The
  result follows from the definition of $\ell_\mu(\varphi_E;x_E)$.
\end{proof}
 
Although an eigenfunction $\varphi_E$ may have more than one
localization center, given two localization centers
$x_E,x'_E\in\mathcal{C}(\varphi_E)$ one has
\begin{equation}
  \label{eq:9}
  |\ell_\mu (\varphi_E;x_E)-\ell_\mu (\varphi_E;x'_E)|\leq |x_E-x'_E|
\end{equation}
This follows immediately from the definition of the onset length and
the bound
\begin{equation*}
  \sum_{x\in \Omega} e^{2\mu(|x-x_E|-(\ell_\mu (\varphi_E;x'_E)+|x_E-x'_E|))_+} |\varphi_E(x)|^2 \ \leq \
  \sum_{x\in \Omega} e^{2\mu(|x-x'_E|-\ell_\mu (\varphi_E;x'_E))_+}
  |\varphi_E(x)|^2 \ \leq \ 4.
\end{equation*}
We note that the onset length $\ell_\mu (\phi_E;x_E)$ also gives an upper
bound on the diameter of $\mathcal{C}(\varphi_E)$, namely,
\begin{Pro}
  \label{pro:1}
  Pick $\kappa>0$ such that $8e^{-\mu \kappa}=1$. If
  $(x_E,x'_E)\in\mathcal{C}(\varphi_E)^2$ then
  \begin{equation}
    \label{eq:8}
    |x_E-x'_E|\leq\ell_\mu (\varphi_E;x_E)+
    \frac d{2\mu }\log\left(2\ell_\mu (\varphi_E;x_E)+2\kappa+1\right)
    +\frac{3 \log2}{2\mu }
  \end{equation}
\end{Pro}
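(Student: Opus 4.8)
The plan is to combine two pointwise consequences of the onset-length bound at $x_E$: an upper bound on $\|\varphi_E\|_\infty$ obtained by evaluating the weighted norm at the second center $x'_E$, and a lower bound on $\|\varphi_E\|_\infty$ coming from the normalization $\|\varphi_E\|_2=1$. Write $m=\|\varphi_E\|_\infty$ and $\ell_E=\ell_\mu(\varphi_E;x_E)$, and recall that, by the very definition of the onset length, $M_{\ell_E}^\mu(\varphi_E;x_E)\le 2\|\varphi_E\|_2=2$, that is,
\[
\sum_{x\in\Omega}e^{2\mu(|x-x_E|-\ell_E)_+}|\varphi_E(x)|^2\ \le\ 4 .
\]

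First I would keep only the single term $x=x'_E$ in this sum. Since $x'_E\in\mathcal{C}(\varphi_E)$ gives $|\varphi_E(x'_E)|=m$, this yields $m^2 e^{2\mu(|x'_E-x_E|-\ell_E)_+}\le 4$, i.e.\ $m\le 2\,e^{-\mu(|x'_E-x_E|-\ell_E)_+}$. If $|x'_E-x_E|\le\ell_E$ then \eqref{eq:8} holds trivially, since its right-hand side already exceeds $\ell_E$; so from now on I assume $|x'_E-x_E|>\ell_E$ and drop the positive part.

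Next I would produce a matching lower bound for $m$. Splitting $1=\|\varphi_E\|_2^2$ over the ball $B=\{x:|x-x_E|\le R\}$ and its complement, with $R=\ell_E+\kappa$, and factoring out $e^{-2\mu(R-\ell_E)}$ on the complement gives
\[
\sum_{|x-x_E|>R}|\varphi_E(x)|^2\ \le\ e^{-2\mu(R-\ell_E)}\sum_{x\in\Omega}e^{2\mu(|x-x_E|-\ell_E)_+}|\varphi_E(x)|^2\ \le\ 4\,e^{-2\mu\kappa}\ =\ \tfrac1{16},
\]
where the last equality is exactly the defining relation $8e^{-\mu\kappa}=1$. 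Hence at least $\tfrac{15}{16}$ of the mass lies in $B$, and bounding each term there by $m^2$ and the cardinality of $B$ by $(2R+1)^d=(2\ell_E+2\kappa+1)^d$ gives $m^2(2\ell_E+2\kappa+1)^d\ge\tfrac{15}{16}$.

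Finally I would combine the two bounds: inserting this lower bound into the upper bound and taking logarithms yields
\[
\mu\,(|x'_E-x_E|-\ell_E)\ \le\ \tfrac{d}{2}\log(2\ell_E+2\kappa+1)+\log\tfrac{8}{\sqrt{15}},
\]
and since $\log\tfrac{8}{\sqrt{15}}\le \tfrac32\log 2$ this is precisely \eqref{eq:8}. The only delicate step is the lower bound on $\|\varphi_E\|_\infty$: one must guarantee that a fixed fraction of the $\ell^2$-mass sits inside a ball whose number of lattice points grows only polynomially in $\ell_E$, and the constant $\kappa$ is chosen exactly so that the tail outside this ball carries a negligible fixed fraction of the mass. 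Everything else is bookkeeping; a minor point to check is that $R=\ell_E+\kappa$ need not be an integer, but this only helps, since the number of lattice points in $B$ is at most $(2R+1)^d$.
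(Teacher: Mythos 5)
Your proof is correct and follows essentially the same route as the paper: the upper bound on $\|\varphi_E\|_\infty$ by keeping the single term $x=x'_E$ in the weighted sum is exactly the paper's step, and your mass-splitting lower bound over the ball of radius $\ell_E+\kappa$ is precisely the paper's Lemma~\ref{le:3}, which you simply re-derive inline (with the marginally sharper tail constant $\tfrac{1}{16}$ instead of $\tfrac12$, since you use $8e^{-\mu\kappa}=1$ rather than the weaker $8e^{-2\mu\kappa}\le 1$ assumed there). Your resulting constant $\tfrac1\mu\log\tfrac{8}{\sqrt{15}}$ is slightly better than $\tfrac{3\log 2}{2\mu}$, so \eqref{eq:8} follows, and your handling of the trivial case $|x_E-x'_E|\le\ell_E$ is fine.
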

\begin{proof}
  By the definition of $\ell_\mu (\varphi_E;x_E)$, if
  $(x_E,x'_E)\in\mathcal{C}(\varphi_E)^2$ with
  $|x_E-x'_E|\geq \ell_\mu (\varphi_E;x_E)$, then
  \begin{equation*}
    \frac1{(2\ell_\mu (\varphi_E;x_E)+2\kappa+1)^{d/2}}
    \ \leq\ \sqrt{2} \|\varphi_E\|_\infty \ = \ \sqrt{2} |\varphi_E(x'_E)| \
    \leq \ 2^{\nicefrac{3}{2}} e^{-\mu (|x'_E-x_E|-\ell_\mu (\varphi_E;x_E))} \ ,
  \end{equation*}
  with the first inequality coming from Lemma~\ref{le:3} below. Taking
  the logarithm, we get~\eqref{eq:8}.
\end{proof}

The bounds~\eqref{eq:9} and~\eqref{eq:8} show that, when the
localization center is not unique, the onset lengths for two distinct
localization centers differ at most by a universal constant factor.
Moreover, one obtains the pointwise bound
\begin{equation}
  \label{eq:13}
  |\varphi_E(x)|\ \leq \ \|\varphi_E\|_\infty
  e^{-\mu (|x-x_E|-\tilde{\ell}_{\mu,E})_+}\ , \quad \text{ where
  }\tilde{\ell}_{\mu,E} =\ell_\mu (\varphi_E;x_E)+\frac{d}2
  \log(2\ell_\mu (\varphi_E;x_E)+2\kappa+1) \ ,
\end{equation}
corresponding to eq.\ \eqref{eq:mainresult}.

Our main result quantifies the distribution of onset lengths for
eigenfunctions with localization centers in a given bounded region.
It is convenient to take these bounded regions to be lattice cubes:
\begin{equation}
  \Lambda_L(x_0) \ = \ \left (x_0 + \left
      ]-\tfrac{L}{2},\tfrac{L}{2} \right ]^d \right )\cap \Z^d \ ,
\end{equation}
for $L=1,2,3,\ldots$.  We note that these cubes have diameter
$\operatorname{diam} \Lambda_L(x_0)=L$ (in the max-norm metric) and
volume $\# \Lambda_L(x_0)=L^d$.  We call $x_0$ the \emph{center} of
the cube $\Lambda_L(x_0)$; for odd $L$ it is the geometric center, but
for even $L$ it is integer part of the geometric center.  Our main
result quantifies the eigenvalues associated to eigenfunctions with
onset length larger than a given number.
\begin{Th}
  \label{thr:1}
  Let $\mu$ and $I_{\mathrm{AL}}$ be as in (A3) and let
  $[a,b]\subset I_{\mathrm{AL}}$.  Then, for any $\nu <\mu$ and any $p>0$
  there exist $\ell_0>0$, and $L_{0}>0$ such that if
  $\Omega\subset\Z^d$ is a region with
  $\Lambda = \Lambda_L(x_0)\subset \Omega$ with $L\ge L_0$, then with
  probability larger than $1-L^{-p}$, for all $\ell\ge \ell_0$, one
  has
  \begin{multline}
    \label{eq:4}
    \# \bigl \{ E \in \Ec((H_\omega )_{\Omega})\cap [a,b] \ : \
    \mathcal{C}(\varphi_E)\cap \Lambda\neq \emptyset \text{ and }
    \ell_\nu(\varphi_E;x_E) \ge \ell \text{ for
    }x_E\in \mathcal{C}(\varphi_E)\cap \Lambda \bigr \} \\ \leq \ L^d e^{-C_\nu \ell }
  \end{multline}
  where $\D C_\nu=\frac13 \min\left(1,\frac{\mu -\nu}{\nu} \right)$.
\end{Th}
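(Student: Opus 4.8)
The plan is to translate the statement about onset lengths into one about an exponentially weighted norm whose largeness is a rare event controlled by (A3), then to count the offending eigenfunctions box by box and upgrade the resulting first moment bound to a high-probability bound via a concentration (large deviation) argument.

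\textbf{Step 1: a deterministic reduction.} I would first record the purely algebraic fact that a large onset length at the \emph{test} rate $\nu$ forces the weighted norm at the \emph{larger} rate $\mu$, with no shift, to be exponentially large. If $\varphi_E$ is $\ell^2$-normalized and $\ell_\nu(\varphi_E;x_E)\ge\ell$, then $M^\nu_{\ell-1}(\varphi_E;x_E)>2$; splitting the defining sum at $|x-x_E|=\ell-1$ and using $e^{2\nu(|x-x_E|-(\ell-1))}\le e^{2\mu|x-x_E|}e^{-2\mu(\ell-1)}$ on the tail (valid since $\nu<\mu$ and $|x-x_E|>\ell-1$) gives
\begin{equation*}
  4 \ < \ M^\nu_{\ell-1}(\varphi_E;x_E)^2 \ \le \ 1 + e^{-2\mu(\ell-1)}\, M^\mu_0(\varphi_E;x_E)^2 ,
\end{equation*}
hence $M^\mu_0(\varphi_E;x_E)\ge\sqrt3\,e^{\mu(\ell-1)}$. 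Thus every eigenfunction counted in \eqref{eq:4} witnesses the event that its $\mu$-weighted norm has size $\asymp e^{\mu\ell}$, which by the SULE bound (A3) is atypical.

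\textbf{Step 2: localization to scale $\ell$ and a first-moment bound.} I would partition $\Lambda=\Lambda_L(x_0)$ into $\sim(L/\ell)^d$ sub-cubes $B_i$ of side comparable to $\ell$, so that the count in \eqref{eq:4} is $\sum_i n_i$, where $n_i$ counts the $E\in[a,b]$ whose eigenfunction has a localization center in $B_i$ and satisfies $M^\mu_0(\varphi_E;x_E)\ge\sqrt3\,e^{\mu(\ell-1)}$. Applying (A3) with $S=B_i$ and $\varepsilon_i\asymp\ell^d e^{-2\mu\ell}$ — chosen so that the bound $A_{\mathrm{AL}}(\#B_i/\varepsilon_i)^{1/2}$ falls just below the threshold $\sqrt3\,e^{\mu(\ell-1)}$ — yields $\pro(n_i\ge1)\le\varepsilon_i$. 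On the rare complementary event $n_i$ is only bounded a priori by the number of eigenvalues in $[a,b]$ with center in $B_i$, and here the Minami estimate (A4) controls the second moment of this local eigenvalue count; a Cauchy--Schwarz splitting $\esp[n_i]\le\esp[n_i^2]^{1/2}\,\pro(n_i\ge1)^{1/2}$ then gives $\esp[n_i]\lesssim\mathrm{poly}(\ell)\,e^{-c_1\ell}$ and so $\esp\big[\sum_i n_i\big]\lesssim L^d\,\mathrm{poly}(\ell)\,e^{-c_1\ell}$ for some $c_1>0$.

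\textbf{Step 3: from first moment to high probability.} This first moment bound alone is insufficient: at the smallest scales $\ell\approx\ell_0$ the target $L^d e^{-C_\nu\ell}$ is a fixed fraction of $L^d$, so Markov's inequality only yields a failure probability bounded by a constant. The decisive input is therefore a concentration estimate for $\sum_i n_i$. Because each $n_i$ is governed, up to the exponentially small SULE error, by the disorder in an $O(\ell)$-neighborhood of $B_i$, the family $(n_i)_i$ has rapidly decaying dependence and the large deviation estimate of the appendix applies: the probability that $\sum_i n_i$ exceeds its mean by the large factor needed to reach $L^d e^{-C_\nu\ell}$ is super-polynomially small in the number $\sim(L/\ell)^d$ of boxes. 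Taking $C_\nu=\tfrac13\min(1,\tfrac{\mu-\nu}{\nu})$ strictly below the rate produced there leaves room to absorb the polynomial prefactors and to union over the $O(\log L)$ relevant scales. Finally the scales $\ell>\ell_{\max}\asymp\tfrac{d+p}{2\mu}\log L$ are handled deterministically: a single use of (A3) with $S=\Lambda$ and $\varepsilon\asymp L^{-p}$ forces every onset length to be at most $\ell_{\max}$ on an event of probability $1-L^{-p}$, so the count in \eqref{eq:4} vanishes there.

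\textbf{Main obstacle.} The genuinely hard step is Step 3, the large deviation bound giving the polynomially small failure probability at \emph{small} onset scales, where the counted quantity is a macroscopic fraction of all eigenvalues. The difficulty is that the local counts $n_i$ are only \emph{approximately} functions of the local disorder — global eigenfunctions feel the entire sample — so the concentration estimate must be robust against the exponentially small but nonzero delocalization tails, and the constant $C_\nu$ precisely records the trade-off between the SULE rate $\mu$, the test rate $\nu$, and the rate furnished by the large deviation estimate.
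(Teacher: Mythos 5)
Your Step 1 reduction is correct and your treatment of the largest scales (a single application of (A3) with $S=\Lambda$ and $\varepsilon\asymp L^{-p}$, capping all onset lengths by $O(\log L)$) matches the paper's initial step \eqref{eq:initial}. But there is a genuine gap exactly where you flag ``the main obstacle'', and it is not a technicality that can be absorbed: your Step 3 invokes the appendix large deviation estimate (Prop.~\ref{prop:LDP}) for the box counts $n_i$, yet that proposition requires the summands to be \emph{exactly} independent within each class of a finite partition. The paper achieves this because the Bernoulli variables it feeds into Prop.~\ref{prop:LDP} are the indicators that a cube $Q$ is $\epsilon$-bad, and badness is defined entirely through the \emph{finite-volume} operator $(H_\omega)_{\widetilde Q}$, hence is a measurable function of the disorder inside $\widetilde Q$ alone; cubes whose expansions are disjoint give exactly independent indicators. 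Your $n_i$, by contrast, count eigenfunctions of the \emph{global} operator $(H_\omega)_\Omega$, which depend on the potential everywhere; ``rapidly decaying dependence'' is asserted but no mechanism is given to replace the $n_i$ by exactly-local dominating variables. Constructing that mechanism is the actual content of the paper's proof: Lemma~\ref{le:1} shows that a global eigenfunction with center in $Q$ and controlled weighted norm is well approximated by an eigenpair of $(H_\omega)_{\widetilde Q}$, and Lemma~\ref{le:iteration} iterates this down a geometric hierarchy of scales so that an eigenfunction with large onset length forces an $\epsilon$-bad (local!) cube at a comparably large scale; only then does the LDP apply, to the bad-cube indicators, uniformly over all generations at once — which is also what yields the bound for \emph{all} $\ell\ge\ell_0$ on a single event of probability $1-cL^{-p}$.

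A second, related misstep is your assignment of the Minami estimate (A4) to a second-moment bound on $n_i$. Minami controls the probability of two eigenvalues of a finite-volume operator in a small interval; it says nothing about the number of global eigenvalues with localization center in a box (for that one would need something like Lemma~\ref{le:3} plus orthonormality plus an a priori onset-length cap — again an argument about global eigenfunctions). In the paper, Minami's role is structural and indispensable: it guarantees condition \eqref{eq:minamilemma}, i.e.\ that the nearby local eigenvalue produced in Lemma~\ref{le:1} is simple and isolated at scale $\delta$, which is what makes the local \emph{eigenvector} (not merely eigenvalue) close to the restricted global one and makes the map $E\mapsto(Q_E^{j},\psi_E^{j},\lambda_E^j)$ one-to-one — the injectivity on which the whole count rests. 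Without that spacing input your box-by-box count has no injectivity and can miscount. Incidentally, the constant $C_\nu=\frac13\min\bigl(1,\frac{\mu-\nu}{\nu}\bigr)$ in the paper is not the outcome of a trade-off with an LDP rate as you suggest; it comes from the geometry of the multiscale construction, via the constraint $(1+\alpha)\nu<\alpha\mu-3\epsilon$ forcing the expansion parameter $\alpha\gtrsim\frac{\nu}{\mu-\nu}$, with final rate $\sim\frac{1}{2\alpha}$ — another sign that the scheme you propose, which has no such expansion structure, has not yet met the place where the hard work happens.
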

\noindent Theorem \ref{thr:1} is proved in \S\ref{sec:proof-theorem}
below. As an immediate consequence of this result and the ergodic
properties of the Hamiltonian $H_\omega$, we have the following:
\begin{Cor}
  \label{cor:2}
  For $\nu <\mu$, $a<b$ real such that $[a,b]\subset I_{\mathrm{AL}}$, and
  $\ell>0$, the limit
  \begin{multline*}
    % \label{eq:2}
    N_\nu([a,b],\ell) \ := \\ \lim_{L\to+\infty} \frac{\#\{ E
      \in \Ec(H_\omega )\cap [a,b] \ : \ \mathcal{C}(\varphi_E)\cap \Lambda\neq
      \emptyset \text{ and } \ell_\nu(\varphi_E;x_E) \ge \ell \text{ for
        some }x_E\in \mathcal{C}(\varphi_E)\cap \Lambda \}}{N(I)\cdot L^d }
  \end{multline*}
  exists almost surely and is a.s.\ independent of $\omega$.  Moreover
  there is a Borel probability measure $P_\nu$ on $I\times \N $ such that
  $$ N_\nu([a,b],\ell) \ = \ P_\nu ([a,b]\times [\ell,\infty)),$$
  and there exists $\ell_0>0$ such that, for $\ell\geq \ell_0$ and
  $a<b$ real such that $[a,b]\subset I$
  \begin{equation}
    \label{eq:3}
    N_\nu([a,b],\ell)\leq
    \frac{N([a,b])}{N(I)}e^{-C_\nu\ell}
  \end{equation}
\end{Cor}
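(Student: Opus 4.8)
The plan is to read the numerator defining $N_\nu([a,b],\ell)$ as an ergodic average of a translation-covariant lattice functional and to invoke the Birkhoff ergodic theorem for the $\Z^d$-action under which $H_\omega$ is ergodic. Since the infinite-volume operator has pure point spectrum on $I_{\mathrm{AL}}$, the relevant $\varphi_E$ with $E\in\Ec(H_\omega)\cap I$ are genuine $\ell^2$ eigenfunctions. First I would remove the ambiguity caused by non-unique localization centers by assigning to each such $\varphi_E$ a single representative $x(E)$, the lexicographically smallest element of $\mathcal{C}(\varphi_E)$. Because the shift intertwines $H_\omega$ with $H_{\tau_\gamma\omega}$ and carries center sets to their translates, the map $\omega\mapsto F(\omega):=\#\{E\in\Ec(H_\omega)\cap[a,b]:x(E)=0,\ \ell_\nu(\varphi_E;x(E))\ge\ell\}$ is covariant, and $\#\{E:x(E)\in\Lambda_L(0),\dots\}=\sum_{z\in\Lambda_L(0)}F(\tau_{-z}\omega)$. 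Birkhoff then gives $L^{-d}\sum_{z\in\Lambda_L(0)}F(\tau_{-z}\omega)\to\esp[F]$ almost surely, with a deterministic limit by ergodicity. Integrability of $F$ holds because $F\le\#\{E:x(E)=0,\ E\in[a,b]\}$, whose expectation equals $N([a,b])<\infty$ by the identity relating the site-summed representative count to the integrated density of states. Replacing the representative count by the ``some center in $\Lambda$'' count of the statement costs only eigenfunctions whose center set straddles $\partial\Lambda_L(0)$; by Proposition \ref{pro:1} the diameter of $\mathcal{C}(\varphi_E)$ is controlled by its onset length, so the boundary layer has vanishing density and the two normalized counts share the same limit.

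Next I would build the measure. Define $P_\nu(A):=N(I)^{-1}\,\esp\bigl[\#\{E\in\Ec(H_\omega):x(E)=0,\ (E,\ell_\nu(\varphi_E;x(E)))\in A\}\bigr]$ for Borel $A\subseteq I\times\N$. This is countably additive by monotone convergence; its value on $[a,b]\times[\ell,\infty)$ equals $\esp[F]/N(I)=N_\nu([a,b],\ell)$ by the previous step; and its total mass is $N(I)^{-1}\esp[\#\{E:x(E)=0,\ E\in I\}]=1$. Hence $P_\nu$ is a Borel probability measure with energy marginal $N(\cdot)/N(I)$ on $I$, and $N_\nu([a,b],\ell)=P_\nu([a,b]\times[\ell,\infty))$ as required.

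For the decay bound I would apply Theorem \ref{thr:1} with $\Omega=\Z^d$ and $\Lambda=\Lambda_L(0)$. Fixing $p>1$ and a sequence $L_k\uparrow\infty$ with $\sum_k L_k^{-p}<\infty$, Borel--Cantelli makes the finite-volume estimate $L^{-d}\#\{\dots\}\le e^{-C_\nu\ell}$ hold eventually almost surely, so that $\esp[F]\le e^{-C_\nu\ell}$ and therefore $N_\nu([a,b],\ell)\le e^{-C_\nu\ell}/N(I)$ for $\ell\ge\ell_0$. The delicate point, which I expect to be the main obstacle, is the factor $N([a,b])$. Writing $\mu_\ell(\cdot):=P_\nu(\cdot\times[\ell,\infty))$, the claimed bound is equivalent to the energy-density estimate $d\mu_\ell/dN\le e^{-C_\nu\ell}/N(I)$, i.e.\ to $\rho_E([\ell,\infty))\le e^{-C_\nu\ell}$ for $N$-a.e.\ $E$, where $\rho_E$ is the conditional onset distribution obtained by disintegrating $P_\nu$ over its energy marginal. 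The estimate of Theorem \ref{thr:1} is uniform over the energy window and only bounds $\mu_\ell([a,b])$ by the window-independent quantity $e^{-C_\nu\ell}/N(I)$; a uniform bound on $\mu_\ell([a,b])$ for all subintervals does \emph{not} yield a bound on the $N$-density. To obtain the energy-resolved statement I would re-run the large-deviation estimate of \S\ref{sec:LDP} window-by-window and let the window width shrink, comparing the number of high-onset states in a small energy window to the \emph{local} density of states there rather than to the full volume $L^d$; passing to the limit then produces the pointwise bound on $\rho_E$ and hence the factor $N([a,b])$.
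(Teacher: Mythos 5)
Your first two steps---the covariant center-counting functional, Birkhoff's theorem for the $\Z^d$-action, the Palm-type construction of $P_\nu$ with energy marginal $N(\cdot)/N(I)$, and the bookkeeping for non-unique centers and boundary effects via \eqref{eq:9} and Prop.~\ref{pro:1}---are exactly what the paper intends: it offers no written proof of this corollary, asserting only that it is ``an immediate consequence'' of Theorem~\ref{thr:1} and the ergodic properties of $H_\omega$, and your sketch fills in that one-liner correctly. The Borel--Cantelli transfer of the finite-volume probabilistic estimate to the almost-sure (hence deterministic) limit is likewise the intended argument, and it does deliver $N_\nu([a,b],\ell)\le e^{-C_\nu\ell}/N(I)$ for $\ell\ge\ell_0$.

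The difficulty you flag at the end is genuine, and your diagnosis is correct: Theorem~\ref{thr:1} bounds the count on \emph{any} window by the window-independent quantity $L^d e^{-C_\nu\ell}$, so what follows immediately is \eqref{eq:3} with prefactor $1$, equivalently $N_\nu([a,b],\ell)\le \min\bigl(N([a,b]),e^{-C_\nu\ell}\bigr)/N(I)$, whereas the product form asserts a conditional tail bound, $N$-a.e.\ in energy. As you observe, writing $\mu_\ell(\cdot)=P_\nu(\cdot\times[\ell,\infty))$, a uniform bound $\mu_\ell([a,b])\le e^{-C_\nu\ell}/N(I)$ over all subwindows is consistent with $\mu_\ell$ carrying an atom at an energy where $N$ has none, so no $N$-density bound can be extracted from the statement of Theorem~\ref{thr:1} alone; since the paper supplies no argument here, your proposal is not missing anything the paper provides, but your remedy remains a sketch. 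To execute it, note that in the proof of Theorem~\ref{thr:1} the energy window enters only through the factor $L_n^d$ in the count $\#\{E\in\Sigma : n_E=n\}\le \#\Gc^{n+1}(\Lambda)\,e^{-\frac{\epsilon}{4}L_{n+1}}\,(1+2\alpha)^d\,L_n^d$, where $L_n^d$ bounds the \emph{total} number of eigenvalues of $(H_\omega)_{\widetilde{Q}}$ on a generation-$n$ cube; by \eqref{eq:lambdajinI} one may restrict to eigenvalues of $(H_\omega)_{\widetilde{Q}_E^{n}}$ in a slightly enlarged window, whose \emph{expected} number is of order $N([a,b])(2\alpha+1)^d L_n^d$---but this count must be controlled on the bad event, with which it is correlated, so one needs, e.g., Cauchy--Schwarz together with a second-moment (Wegner-type) bound on the local eigenvalue count, or a refinement of Prop.~\ref{prop:LDP} counting pairs of a bad cube and an eigenvalue in the window; done naively this costs square roots in both factors. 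A cheaper fallback for a \emph{fixed} window: the proof of Theorem~\ref{thr:1} actually yields the decay rate $\frac{1}{2(\alpha+3\epsilon)}$, strictly larger than $C_\nu=\frac13\min\bigl(1,\frac{\mu-\nu}{\nu}\bigr)$, and this slack absorbs the constant $1/N([a,b])$ once $\ell\ge\ell_0([a,b])$---though not uniformly in $[a,b]$ as the corollary's order of quantifiers ($\ell_0$ before $[a,b]$) literally demands. So your write-up proves the corollary except for the prefactor in \eqref{eq:3}, and correctly identifies that this prefactor requires reopening the proof of Theorem~\ref{thr:1} rather than quoting it.
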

\begin{Rem} Here we take $\Omega=\Z^d$ and recall that $N$ is the
  integrated density of states of $H_\omega$.
\end{Rem}

Let us now turn to the question of finding a lower bound for the left
hand side of \eqref{eq:4}.
% optimality of the upper bound~\eqref{eq:4}, that is, to finding
% lower bounds for the left hand side in~\eqref{eq:4}.
To find such a bound, we must construct sufficiently many states with
large onset length. Recalling the classical heuristics of Lifshits
tails, the states that immediately spring to mind are those located
near the edges of the almost sure spectrum. It is well known that the
parts of the spectrum close to its boundary, in particular to the
infimum of the spectrum, belong to the localization region
$I_{\mathrm{AL}}$. We have the following
\begin{Th}
  \label{thr:3}
  Let $\mu$ and $I_{\mathrm{AL}}$ be as in (A3). Let $E_-$ be the
  infimum of $\Sigma$ the almost sure spectrum of $H_\omega$ and
  assume $E_->-\infty$.  Then, there exist $\ell_0>0$ and $L_0>0$ such
  that, for any $\nu <\mu$, for $\Lambda = \Lambda_L$ with $L\ge L_0$,
  and all $\ell\geq \ell_0$, with probability 1, one has
  \begin{multline}
    \label{eq:19}
    \#\{ E \in \Ec(H_\omega) \ : \mathcal{C}(\varphi_E)\cap \Lambda\neq
    \emptyset \text{ and } \ell_\nu(\varphi_E;x_E) \ge \ell \text{ for
      some }x_E\in \mathcal{C}(\varphi_E)\cap \Lambda \} \ \\\geq\ \#\{ E
    \in \Ec(H_\omega)\cap [E_-,E_-+c\ell^{-d-1}]\ : \
    \mathcal{C}(\varphi_E)\cap \Lambda\neq \emptyset \}
  \end{multline}
  where $c$ can be taken such that
  $5(4d)^{\frac2d}c^{\frac{2}{d+1}}=1$.
\end{Th}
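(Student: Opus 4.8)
The plan is to prove Theorem~\ref{thr:3} by showing that \emph{every} eigenfunction with eigenvalue very close to the bottom of the spectrum $E_-$ automatically has a large onset length, so that the left-hand side of \eqref{eq:19} (which counts eigenfunctions with onset length $\ge \ell$) dominates the right-hand side (which counts \emph{all} eigenfunctions in the narrow window $[E_-,E_-+c\ell^{-d-1}]$). The mechanism is the variational/uncertainty principle: an eigenfunction with small onset length is spatially concentrated on a box of side $\sim\ell$, and such a function cannot have energy too close to $E_-$ because confining a state to a finite box costs kinetic energy. Thus small onset length forces the energy to sit an amount $\gtrsim \ell^{-2}$ (or, accounting for the $\ell^d$ volume factor and the precise constant, $\gtrsim \ell^{-(d+1)}$) above $E_-$; contrapositively, energy within $c\ell^{-d-1}$ of $E_-$ forces onset length $\ge \ell$.

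\medpagebreak
Concretely, I would first fix an eigenfunction $\varphi_E$ with $E\in[E_-,E_-+c\ell^{-d-1}]$ and a localization center $x_E\in\mathcal{C}(\varphi_E)\cap\Lambda$, and \emph{assume for contradiction} that $\ell_\nu(\varphi_E;x_E)<\ell$. By the definition of the onset length via \eqref{eq:Mellphi}, this means $M^\nu_{\ell'}(\varphi_E;x_E)\le 2\|\varphi_E\|_2$ for some $\ell'<\ell$, which quantitatively says that the mass of $\varphi_E$ outside the box $\Lambda_{2\ell}(x_E)$ is exponentially small. The next step is to replace $\varphi_E$ by a compactly supported trial function: take $\chi$ a smooth (or Lipschitz) cutoff equal to $1$ on $\Lambda_{2\ell}(x_E)$ and supported on $\Lambda_{4\ell}(x_E)$, and set $\psi=\chi\varphi_E$. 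The small tail mass guarantees $\|\psi\|_2$ is comparable to $\|\varphi_E\|_2=1$, and the IMS-type localization (Leibniz) formula controls $\langle\psi,(H_\omega-E)\psi\rangle$ by the commutator term $\|[\,-\Delta,\chi]\varphi_E\|$, which is bounded by $\|\nabla\chi\|_\infty\sim 1/\ell$ times the tail mass — hence is exponentially small in $\ell$, and in any case $o(\ell^{-(d+1)})$.

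\medpagebreak
I would then invoke the variational characterization of $E_-$: since $\psi$ is supported in the finite cube $\Lambda_{4\ell}(x_E)$ of volume $(4\ell)^d$, the quantity $\langle\psi,H_\omega\psi\rangle/\|\psi\|_2^2$ is an upper bound for the lowest eigenvalue of $(H_\omega)_{\Lambda_{4\ell}(x_E)}$. Combining this with the energy estimate above yields a lower bound of the form $E\ge E_- + (\text{spectral gap of a box of size }4\ell) - (\text{exponentially small correction})$. The key quantitative input is that the bottom of the spectrum of the Dirichlet Laplacian on a cube of side $n$ lies a distance $\gtrsim n^{-2}$ above $E_-$; more carefully, combined with the number-of-states counting this produces the constant $c$ with $5(4d)^{2/d}c^{2/(d+1)}=1$. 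Choosing $c$ this way makes $c\ell^{-d-1}$ strictly smaller than this gap for $\ell\ge\ell_0$, contradicting $E\le E_-+c\ell^{-d-1}$. Therefore $\ell_\nu(\varphi_E;x_E)\ge\ell$, so each eigenfunction counted on the right of \eqref{eq:19} is also counted on the left, giving the inequality.

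\medpagebreak
The main obstacle, I expect, will be pinning down the precise arithmetic that yields the stated exponent $\ell^{-(d+1)}$ and the explicit constant $c$ satisfying $5(4d)^{2/d}c^{2/(d+1)}=1$. The naive uncertainty-principle bound gives a gap of order $\ell^{-2}$, not $\ell^{-(d+1)}$; the sharper power must come from converting the \emph{number} of low-lying states — which near $E_-$ scales with the box volume $\ell^d$ through the behavior of $N(E)$ near the band edge — into an effective energy window, so the bookkeeping relating the onset-length threshold $\ell$, the cube size, the tail mass, and the density-of-states scaling near $E_-$ is delicate. I would need to carefully track the dimension-dependent volume factors and the $\nu<\mu$ dependence through the definition of $M^\nu_\ell$ so that the cutoff/commutator errors are genuinely negligible against $c\ell^{-d-1}$ for all $\ell\ge\ell_0$.
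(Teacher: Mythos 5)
Your proposal is correct in its central mechanism and it matches the paper's at that level: both proofs are deterministic and proceed by showing that \emph{every} eigenfunction with energy in $[E_-,E_-+c\ell^{-d-1}]$ automatically has onset length $\ge \ell$, after which \eqref{eq:19} is just a set inclusion. But your execution is genuinely different from the paper's. You truncate $\varphi_E$ to a box $\Lambda_{4\ell}(x_E)$, control the commutator error by the exponential tail coming from $\ell_\nu(\varphi_E;x_E)<\ell$, and compare with the ground state of $(H_\omega)_{\Lambda_{4\ell}(x_E)}$, which sits at least $\sim \ell^{-2}$ above $E_-$ because $V_\omega\ge \essinf \omega_0$ almost surely and the discrete Dirichlet kinetic gap on a box of side $4\ell$ is of order $\ell^{-2}$; this is workable. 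The paper avoids all cutoffs and any finite-volume comparison: it applies the one-line quadratic-form bound $\|\nabla \varphi_E\|_2^2\le \langle (H_\omega-E_-)\varphi_E,\varphi_E\rangle = E-E_-\le c\ell^{-d-1}$ directly to the infinite-volume eigenfunction, then chains two elementary lemmas \textemdash\ Lemma \ref{le:3}, which says that small onset length forces $\|\varphi_E\|_\infty\gtrsim (2\ell+2\kappa+1)^{-d/2}$, and the discrete interpolation inequality of Lemma \ref{le:4}, $\|\varphi\|_\infty\le 4d\,\|\nabla\varphi\|_2^{d/(d+1)}\|\varphi\|_2^{1/(d+1)}$. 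This resolves the ``obstacle'' you flag in your last paragraph, where your diagnosis is mistaken: the exponent $d+1$ does \emph{not} come from density-of-states or number-of-states bookkeeping (no DOS input appears anywhere in the paper's proof); it is simply the artifact of the interpolation exponent $d/(d+1)$ in Lemma \ref{le:4}, and the explicit constant $c$ with $5(4d)^{2/d}c^{2/(d+1)}=1$ falls out of the constants in the two lemmas. Indeed, for $d\ge 2$ the window $\ell^{-(d+1)}$ is \emph{smaller} than the $\ell^{-2}$ your uncertainty-principle route yields, so your approach, carried through, would prove a stronger statement (and for $d=1$ the exponents coincide, with the constants comparing favorably); since $[E_-,E_-+c\ell^{-d-1}]\subset[E_-,E_-+c'\ell^{-2}]$ for $\ell\ge \ell_0$, there is no delicate arithmetic to recover \textemdash\ your route buys a better energy window at the cost of tail/cutoff estimates and an $\ell_0$ depending on $\nu$, while the paper's route buys brevity, no cutoff machinery, and the stated explicit constant.
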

\begin{Rem}
  Here we take $\Omega=\Z^d$.
\end{Rem}

The proof of Theorem~\ref{thr:3} can be found in
\S\ref{sec:proof-theorem-3}.  As will be clear from the proof, the
estimate \eqref{eq:19} is \emph{deterministic}.  Estimating the right
hand side of~\eqref{eq:19}, that is the number of eigenvalues of our
random operator inside $[E_-,E_-+ct^{-d-1}]$ having at least one
localization center in $\Lambda$ in terms of the volume of $\Lambda$
and the integrated density of states, will yield a random
estimate. Such bounds are akin to Lifshitz tail estimates for which it
is usually the operator that is restricted to a finite volume rather
than the localization centers (both approaches are equivalent in the
localization regime; see e.g.~\cite{MR3273314}).

One also has the corresponding infinite volume estimate, namely,
\begin{Cor}
  \label{cor:1}
  Let $E_-$ be the infimum of $\Sigma$ the almost sure spectrum of
  $H_\omega$ and assume $E_->-\infty$.  Then there exists $\ell_0>0$
  such that, for any $\nu <\mu$ and $\ell\geq \ell_0$, one has
  \begin{equation}
    \label{eq:1}
    N_\nu(\Sigma,\ell)\geq N(E_-+c\ell^{-d-1})
  \end{equation}
  where $c$ is taken as in Theorem~\ref{thr:3}.
\end{Cor}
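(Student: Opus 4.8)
The plan is to derive the infinite-volume bound \eqref{eq:1} directly from the \emph{deterministic} finite-volume estimate \eqref{eq:19} of Theorem~\ref{thr:3}, by normalizing both sides by the volume and letting $L\to\infty$. Since \eqref{eq:19} holds, with probability one, simultaneously for all $L\ge L_0$ and all $\ell\ge \ell_0$ (once $H_\omega$ is fixed it is a purely deterministic counting inequality), no further probabilistic input is required beyond the almost-sure convergence of the two normalized counting functions appearing on its two sides. First I would fix $\ell_0$ large enough that $c\ell^{-d-1}$ is small enough to guarantee $[E_-,E_-+c\ell^{-d-1}]\subset I_{\mathrm{AL}}$ for all $\ell\ge \ell_0$; this is possible because the bottom of the spectrum lies in the localization region, as recalled just before Theorem~\ref{thr:3}. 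On this energy window $H_\omega$ has pure point spectrum, so every eigenvalue counted on the right-hand side of \eqref{eq:19} does carry a genuinely localized eigenfunction with a well-defined set of localization centers.

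Next I would pass to the limit $L\to\infty$ in \eqref{eq:19} with $\ell\ge\ell_0$ fixed. Dividing the left-hand side by the normalizing factor $N(I_{\mathrm{AL}})\,L^d$ of Corollary~\ref{cor:2} and using the almost-sure convergence established there (with the energy window taken to be all of $\Sigma$, noting $\Ec(H_\omega)\cap\Sigma=\Ec(H_\omega)$) produces exactly $N_\nu(\Sigma,\ell)$. For the right-hand side, the key step is the identification
$$\lim_{L\to\infty}\frac1{L^d}\,\#\bigl\{E\in\Ec(H_\omega)\cap[E_-,E_-+c\ell^{-d-1}] \ : \ \mathcal{C}(\varphi_E)\cap\Lambda_L\neq\emptyset\bigr\} \ = \ N\bigl(E_-+c\ell^{-d-1}\bigr),$$
where I use that there is no spectrum below $E_-$, so $N([E_-,E_-+c\ell^{-d-1}])=N(E_-+c\ell^{-d-1})$. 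Combining the two limits and using $N(I_{\mathrm{AL}})\le 1$ then gives $N_\nu(\Sigma,\ell)\ge N(E_-+c\ell^{-d-1})/N(I_{\mathrm{AL}})\ge N(E_-+c\ell^{-d-1})$, which is \eqref{eq:1}; note that the argument in fact yields the slightly stronger bound with the extra factor $1/N(I_{\mathrm{AL}})$.

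I expect the identification of the localization-center count with the integrated density of states to be the main obstacle. The integrated density of states is defined in \eqref{eq:DOS} through finite-volume restrictions, whereas the right-hand side of \eqref{eq:19} counts eigenvalues of the \emph{infinite-volume} operator according to whether their localization centers fall in $\Lambda_L$; one must show these two normalized counts share the same asymptotic density. The delicate configurations are eigenfunctions whose centers sit within $O(1)$, or within a localization length, of $\partial\Lambda_L$, since for these the two counting conventions could disagree. The exponential decay guaranteed by (A3) makes such boundary contributions a vanishing fraction of $L^d$, so the two densities coincide in the limit; making this rigorous is precisely the equivalence of the two descriptions in the localization regime recorded in~\cite{MR3273314}, which I would invoke to close the argument.
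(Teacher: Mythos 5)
Your proposal is correct and follows essentially the route the paper intends for this (unproved) corollary: normalize the deterministic estimate \eqref{eq:19} by the volume $L^d$, pass to the limit $L\to\infty$ using the almost-sure convergence machinery of Corollary~\ref{cor:2}, and identify the limit of the normalized localization-center count on the right-hand side with the integrated density of states via the finite-volume/localization-center equivalence of~\cite{MR3273314} --- exactly the reference the paper invokes in the discussion following Theorem~\ref{thr:3}. Your handling of the normalization (including the observation that one in fact gets the stronger bound with the factor $\nicefrac{1}{N(I_{\mathrm{AL}})}$) and of the requirement $[E_-,E_-+c\ell^{-d-1}]\subset I_{\mathrm{AL}}$ for $\ell\ge\ell_0$ is consistent with the paper's framework.
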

%
% \noindent
The asymptotic behavior of the integrated density of states $N$ near
$E_-$ is a classical topic of study of random media and is known for
many models. For example, for the Anderson model it is well known
that, for $\lambda>0$ small,
$N(E_-+\lambda)\geq e^{-f(\lambda)\lambda^{-d/2}}$ where
$f:[0,\infty|\to[0,+\infty[$ is a decreasing function that depends on
the tail of the common distribution of the random variables
$(\omega_x)_{x\in\Z^d}$ near their almost sure minimum; in particular,
$f$ diverges at most logarithmically at $0$ if this tail does not fall
off faster than polynomially (see e.g.~\cite{MR3364516,MR2307751}).
Using this lower bound, in dimension 1, the bound~\eqref{eq:1} becomes
\begin{equation*}
  \#\{ E \in \Ec(H_\omega) \ : \ 
  \mathcal{C}(\varphi_E)\cap \Lambda\neq \emptyset \text{ and }
  \ell_\nu(\varphi_E;x_E) \ge \ell \text{ for some
  }x_E\in \mathcal{C}(\varphi_E)\cap \Lambda \} \ \geq\ L^d e^{-f(\ell^{-2})\ell} \ .
\end{equation*}
In particular, we see that in dimension $1$ the upper bound
\eqref{eq:4} is matched by a lower bound of the same magnitude, up to
the prefactor $f(\ell^{-2})$ that is of lower order.
Nevertheless, the lower bound given only by the ``Lifshitz tail
states'' should not be optimal, as these eigenvalues live
energetically in very tiny regions at the edges of the spectrum. It
seems reasonable to expect the upper bound~\eqref{eq:3} to be optimal.
\subsection{A more general setting}
\label{sec:general-setting}
The results of the previous section can be extended in a
straightforward way to more general random Schr{\"o}dinger operators. We
turn to this now. To avoid certain technicalities, we only consider
random Schr{\"o}dinger operators $H_\omega=-\Delta+V_\omega$ on $\R^d$ or
$\Z^d$ that are $\Z^d$-ergodic and such that the sup-norm of
$V_\omega$ is almost surely bounded by a fixed finite constant. In
particular, there exists a closed set $\Sigma \subset \R$, bounded from below
for operators on $\R^d$ and bounded for operators on $\Z^d$, such that
$\sigma(H_\omega)=\Sigma$ almost surely.  To avoid unnecessary
complications due to possible singularities, we will not give
pointwise bounds for the eigenfunction of operators on the continuum,
but rather bounds on local $L^2$-norms. Therefore, for $x\in\Z^d$, we
set
\begin{itemize}
\item
  $\D\|\varphi\|_2(x)=\|\varphi\|_{L^2(x+]-\nicefrac{1}{2},\nicefrac{1}{2}]^d)}$
  in the case of an operator on $\R^d$, for $\varphi\in L^2(\R^d)$.
\item $\|\varphi\|_2(x)=|\varphi(x)|$ in the case of an operator on
  $\Z^d$, for $\varphi\in \ell^2(\Z^d)$,
\end{itemize}
For the sake of simplicity, we also restrict ourselves to the region
$\Omega=\R^d$ or $\Z^d$ (see section~\ref{sec:form-stat-results})
% (and, thus, will not mention it anymore),
and the finite regions $S$ we deal with will only be cubes that are
centered at points of $\Z^d$ and have integer side length. Depending
on the context, they will be cubes in $\R^d$ or their restrictions to
$\Z^d$. As above, $(H_\omega)_{\Lambda}$ denotes the restriction of
$H_\omega$ to $\Lambda$ with Dirichlet boundary conditions. As will
follow from the proofs, by their very nature (i.e. the use of
localization), the arguments are valid for other self-adjoint boundary
conditions.

As before, we define the set of localization centers of a normalized
square integrable function $\varphi$ on $\Lambda$ (where $\Lambda$ is
a cube centered at a point in $\Z^d$ having integer or infinite side
length)
\begin{equation}
  \label{eq:10}
  \mathcal{C}(\varphi) \ := \ \left \{ x\in \Lambda\ : \ \|\varphi\|(x) =
    \|\varphi\|_{2,\infty }\right \}\quad\text{where}\quad
  \|\varphi\|_{2,\infty }=\max_{x\in\Lambda}\|\varphi\|_2(x). 
\end{equation}
Our assumptions are:
\begin{enumerate}
\item[(IAD)] There exists $r>0$ such that, if $\Lambda$ and $\Lambda'$
  are cubes such that $d(\Lambda,\Lambda')>r$ then, the finite volume
  operators $(H_\omega)_{\Lambda}$ and $(H_\omega)_{\Lambda'}$ are
  stochastically independent.
\item[(Loc)] There exists a compact non empty interval
  $I\subset\Sigma$ such that $H_\omega$ has pure point spectrum on $I$
  almost surely; and there exists positive real numbers $\xi\in(0,1)$,
  $p>0$, $q>0$, $L_{\mathrm{fin}}>0$ such that for
  $L\geq L_{\mathrm{fin}}$, with probability at least $1-L^{-p}$, for
  every eigenvalue $E\in I\cap\Ec(H_\omega)$ with associated normalized
  eigenvector $\varphi_E$ such that
  $\mathcal{C}(\varphi_E)\cap \Lambda_L \neq \emptyset$, one has
  \begin{equation}
    \label{eq:14}
    \max_{y\in \mathcal{C}(\varphi_E)\cap \Lambda_L} \left
      (\sum_{x\in\Z^d} e^{2|x-y|^\xi}\|\varphi_E\|^2_2(x) \right
    )^{\frac{1}{2}}\leq L ^q
  \end{equation}
\item[(SE)] For $K>1$, there exists $C_K>0$ such that, for
  $\delta\in(0,1]$, one has the following spacing estimate
  \begin{equation}
    \label{eq:15}
    \pro\left\{\exists E\in I;\
      \tr(\car_{[E-\delta,E+\delta]}
      ((H_\omega)_{\Lambda_L}))\geq2\right\} \leq C_K L^{2d}|\log
    \delta|^{-K}.
  \end{equation}
\end{enumerate}
\begin{Rem}
  \label{rem:1}
  1) Assumption (Loc) has been proved for various models in various
  energy regimes (that depend on the model) e.g. the continuous
  Anderson model at the bottom of the spectrum and at internal band
  edges (see e.g.~\cite{MR2998830,MR2207021}), or the displacement
  model at the bottom of the spectrum
  (see~\cite{Klopp_Loss_Nakamura_Stolz_LocalizationRandomDisplacementModel}).
  One could also allow for magnetic fields, etc.
  
  \noindent 2) We chose here to allow for sub-exponential decay in
  place of the exponential decay considered above, as there are
  certain models where, to our knowledge, no better decay estimate has
  been obtained to date (see e.g.~\cite{MR2998830}). As we shall see,
  it will essentially not affect our analysis. In~\eqref{eq:14}, at no
  expense, we could have replaced $H_\omega$ by
  $(H_\omega)_{\Lambda_L}$ and the sum over $\Z^d$ by a sum over
  $\Lambda_L$ (see e.g.~\cite{MR3273314}).
  
  \noindent 3) Except in dimension 1 (see~\cite{MR3200336}), the
  spacing estimate (SE) is known for very few models. For the
  (discrete) Anderson model (and more generally models involving
  independent rank one perturbations), it follows from the Minami
  estimate (\cite{MR97d:82046}) with a better bound: the
  $|\log \delta|^{-K}$ term is replaced with $\delta^2$ as in (A4) above. For
  continuous Anderson models with suitable assumptions on the random
  potentials, it was proved recently in~\cite{MR4228280}.
\end{Rem}

Following the example of section~\ref{sec:form-stat-results}, for
$\varphi_E$ satisfying~\eqref{eq:14} we let
\begin{equation}
  \label{eq:16}
  M_\ell^\xi(\varphi_E;y) \ := \ \left ( \sum_{x\in \Z^d} e^{2(|x-y|-\ell)^\xi_+} \|\varphi\|_2^2(x) \right )^{\frac{1}{2}},
\end{equation}
where $\ell=0,1,2,\ldots$. As previously, we define
\begin{equation*}
  \ell_\xi(\varphi;y) \ := \ \min \{ \ell \ : \ M_\ell^\xi(\varphi;y) \le
  2\|\varphi\|_2\}.
\end{equation*}
Under assumption (Loc), applying these notions to the
estimate~\eqref{eq:14}, we see that there exists $\kappa>0$ such that,
for $L\geq L_{\text{fin}}$, with probability at least $1-L^{-p}$, for
every $E\in I\cap\Ec(H_\omega)$, $\varphi_E$ normalized eigenvector of
$H_\omega$ associated to $E$ and $x_E\in\mathcal{C}(\varphi_E)$, one
has $\ell_\xi(\varphi;x_E)\leq \kappa (\log L)^{1/\xi}$. Moreover, a
straightforward modification of the proof of Proposition~\ref{pro:1}
yields that, for $\kappa>0$ sufficiently large (depending on $\xi$
only), for $(x_E,x'_E)\in\mathcal{C}(\varphi_E)^2$, one has
\begin{equation}
  \label{eq:18}
  |x_E-x'_E|\leq\ell_\mu (\varphi_E;x_E)+
  \kappa\left(\log\left(2\ell_\mu
      (\varphi_E;x_E)+2\kappa+1\right)\right)^{1/\xi}.
\end{equation}

Our main result for the more general models considered here is the
following
\begin{Th}
  \label{thr:8}
  Assume (IAD), (Loc) and (SE). Then, for any $0<\tilde\xi<\xi$, there
  exists $C>0$, $L_{\mathrm{fin}}>0$ and $\ell_0>0$ such that, for any
  $L\geq L_{\mathrm{fin}}$, with probability larger than
  $1-L^{-p}\log L$ (where $p$ is given in assumption (Loc)),
  \begin{itemize}
  \item For $\varphi_E$ any normalized eigenfunction of $H_\omega$
    associated to the eigenvalue $E\in I\cap \Ec(H_\omega)$ such that
    $\mathcal{C}(\varphi_E)\cap \Lambda_L \not=\emptyset$, there exists
    $x_E\in\mathcal{C}(\varphi_E)\cap\Lambda_L$ such that
    \begin{equation}
      \label{eq:43}
      \forall x\in\Z^d,\quad\|\varphi_E\|_2(x)\leq \|\varphi_E
      \|_{2,\infty}e^{-(|x-x_E|-\tilde{\ell}_E)^{\tilde{\xi}}}
    \end{equation}
    where
    $\D\tilde{\ell}_E=\ell_{\xi'}(\varphi_E;x_E)
    +C\max(\log\ell_{\xi'}(\varphi_E;x_E),1)^{1/\xi'}$;
  \item moreover, for $\ell\geq \ell_0$, one has
    \begin{equation}
      \label{eq:17}
      \frac{\#\{ E \in \Ec(H_\omega )\cap I \text{
          associated to }\varphi_E \text{
          s.t. }\exists x_E\in \mathcal{C}(\varphi_E)\cap\Lambda_L\text{ and }\ell_{\xi'}(\varphi_E;x_E)\geq\ell\}}{|\Lambda_L| } \leq 
      e^{-C\ell}.
    \end{equation}
  \end{itemize}
\end{Th}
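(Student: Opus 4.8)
The plan is to follow the scheme of Theorem~\ref{thr:1}, replacing the exponential weight by the sub-exponential one of~\eqref{eq:16} and the Minami estimate by the spacing estimate~(SE). The two bullets are of a different nature: the first is an essentially deterministic consequence of (Loc) and the onset bookkeeping of~\eqref{eq:16}--\eqref{eq:18}, while the second carries the probabilistic content and parallels the main count of Theorem~\ref{thr:1}.

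For the first bullet I would work on the event of probability at least $1-L^{-p}$ provided by (Loc). There every eigenfunction $\varphi_E$ with $E\in I\cap\Ec(H_\omega)$ and $\mathcal{C}(\varphi_E)\cap\Lambda_L\neq\emptyset$ obeys~\eqref{eq:14} and hence has finite onset length. Choosing $x_E\in\mathcal{C}(\varphi_E)\cap\Lambda_L$ realizing the smallest weighted norm and splitting the sum in~\eqref{eq:16} at radius $\ell$, exactly as in the proof of Proposition~\ref{proposition}, one obtains $M^{\xi'}_{\ell_{\xi'}(\varphi_E;x_E)}(\varphi_E;x_E)\le 2$. Passing from this weighted $\ell^2$-control to the pointwise (or local $L^2$) bound~\eqref{eq:43} costs a polynomial-in-distance volume factor; reducing the exponent from $\xi$ to $\tilde\xi<\xi$ is precisely what absorbs this factor into the sub-exponential, and the logarithmic correction in $\tilde\ell_E$ is produced by the same computation that yielded~\eqref{eq:13} and~\eqref{eq:18}.

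For the second bullet I would first reduce the count to finite volume: using (Loc) to confine each eigenfunction of $H_\omega$ to a cube of side $O(\ell+\tilde\ell_E)$ about its center, and (IAD) to decouple distant cubes, the number of eigenfunctions with center in $\Lambda_L$ and onset $\ell_{\xi'}\ge\ell$ is dominated by the number of ``extended'' eigenvalues of the restrictions $(H_\omega)_{\Lambda_R(z)}$ over a cover of $\Lambda_L$ by $\sim(L/\ell)^d$ cubes of side $R\sim\ell$. The heart of the matter is then a large-deviation estimate, the analogue for the present setting of the one used for Theorem~\ref{thr:1}, showing that for a single such cube the \emph{expected} number of eigenvalues in $I$ whose eigenfunction has onset at least $\ell$ is at most $R^de^{-C\ell}$. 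The onset condition is turned into a spectral condition by splitting~\eqref{eq:16} at radius $\ell-1$: onset at least $\ell$ forces the eigenfunction to keep an anomalously large share of its weighted mass beyond radius $\ell$, a rare event estimated by combining the weighted bound of (Loc) with (SE); the role of (SE) is to control multiplicity, so that clustered eigenvalues are not overcounted, and it is what fixes the thresholds $L_{\mathrm{fin}}$ and $\ell_0$. Summing the per-cube expectation over the cover gives a total expected count at most $L^de^{-C\ell}$, and a union bound over the cover together with the dyadically spaced scales $\ell$ upgrades this to the high-probability bound~\eqref{eq:17}, the union being the source of the extra $\log L$ factor in the probability $1-L^{-p}\log L$.

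The step I expect to be the main obstacle, exactly as in Theorem~\ref{thr:1}, is the passage from an existence statement to a bound on the \emph{number} of extended states, entangled with the transfer between infinite and finite volume. Two points need care. First, truncating an infinite-volume eigenfunction with sub-exponential tails to a finite cube must not inflate its onset length by more than a lower-order amount; this is exactly why one cannot retain the full exponent $\xi$ and must work with the intermediate $\xi'$ and the reduced $\tilde\xi$. Second, because (SE) is weaker than the Minami estimate --- only a $|\log\delta|^{-K}$ rather than a $\delta^2$ bound --- the multiplicity control is more delicate than in Theorem~\ref{thr:1}, and the thresholds must be chosen so that the logarithmic spacing gain still dominates the polynomial volume factor $R^d$ at scale $R\sim\ell$.
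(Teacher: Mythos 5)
Your treatment of the first bullet is essentially the paper's: on the event of (Loc), the bound \eqref{eq:43} is deterministic, following from the definition of $\ell_{\xi'}$, the analogues of Lemma~\ref{le:3} and Proposition~\ref{pro:1} (this is how \eqref{eq:13} and \eqref{eq:18} are obtained), with the reduction $\xi\to\tilde\xi$ absorbing the polynomial volume factors. The genuine gap is in your second bullet. The per-cube estimate you posit --- that the \emph{expected} number of eigenvalues in a cube of side $R\sim\ell$ whose eigenfunction has onset at least $\ell$ is $\lesssim R^d e^{-C\ell}$ --- has no source in the hypotheses. In the discrete case, the exponential rarity of a bad cube (Proposition~\ref{prop:badcube}) comes from the fact that (A3) holds with probability $1-\varepsilon$ for \emph{every} $\varepsilon$, with prefactor $(\#S/\varepsilon)^{1/2}$, so one can trade probability against the threshold $e^{\epsilon L_{n+1}}$; Minami then contributes $\delta^2$. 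Assumption (Loc) admits no such trade: it is a fixed-confidence statement, probability $1-L^{-p}$ with prefactor $L^q$, $p$ and $q$ given. Applied to a cube of side $R\sim\ell$ it yields failure probability only $\ell^{-p}$, and (SE) contributes only $|\log\delta|^{-K}$; no combination of the two makes any single-scale event exponentially rare in $\ell$, so your Markov-plus-union step cannot deliver \eqref{eq:17}. (Relatedly, a union over the $(L/\ell)^d$ cubes of your cover costs a factor polynomial in $L$, not the stated $\log L$.) A second, entangled problem is the truncation step: (Loc) only gives the \emph{a priori} bound $\ell_\xi(\varphi_E;x_E)\lesssim(\log L)^{1/\xi}$, so for exactly the eigenfunctions being counted your confinement cube of side $O(\ell+\tilde\ell_E)$ may have side $\sim\log L$, which either ruins the covering count or makes the argument circular.

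The paper does not short-circuit the multiscale machinery; it reruns the proof of Theorem~\ref{thr:1} structurally intact --- the hierarchy $L_1=\lfloor\beta\log L\rfloor$, $L_n=\lfloor L_{n-1}/4\rfloor$ down to $L_{\mathrm{fin}}$, the descent Lemma~\ref{le:iteration} built on Lemma~\ref{le:1} (it is precisely this descent through successive $\epsilon$-good cubes that legitimizes truncation to small scales, since each step requires the weighted bound at the \emph{current} scale as input), and the per-generation counting of bad cubes via Lemma~\ref{le:2} and Proposition~\ref{prop:LDP} --- with three modifications: smooth cutoffs and elliptic regularity to control the boundary terms of Lemmas~\ref{le:1} and~\ref{le:iteration} in the continuum; a splitting of each generation's cubes into $2^d$ stochastically independent subfamilies using (IAD) (requiring $L_{n_{\mathrm{fin}}}>r$) before invoking Proposition~\ref{prop:LDP}; and the substitution of (SE) for Minami in Proposition~\ref{prop:badcube}, where choosing $K$ large enough makes the spacing-failure probability $\lesssim C^K L_n^{2d-K}\le CL_n^{-p}$, so that the conclusion of \eqref{eq:6} survives with $1-CL_n^{-p}$ in place of an exponentially small failure probability. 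The extra $\log L$ in the probability $1-L^{-p}\log L$ comes from a union over a $\log L$-sized discretization needed to pass from a fixed sup-norm threshold to arbitrary thresholds in \eqref{eq:10}; it is unavoidable here because $p$ is fixed by (Loc), whereas in Theorem~\ref{thr:1} it could be taken arbitrarily large. Your closing paragraph correctly identifies both danger points, but the single-scale expectation argument you propose does not overcome them; the hierarchy of scales is essential, not an optimization.
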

\noindent Only minor modifications of the proof of Theorem~\ref{thr:1}
yield Theorem~\ref{thr:8}. We state the necessary modifications in
\S~\ref{sec:thr8} below.\\
Our assumptions guarantee the existence of a density of states; hence,
one also recovers the analogue of Corollary~\ref{cor:2} in this
setting. As for lower bounds, the proof of Theorem~\ref{thr:3} and its
corollary~\ref{cor:1} was based on the fact that low lying states have
large onset length; this is still correct in the more general model
under certain assumptions. For example, if $H_\omega=-\Delta+V_\omega$
where $V_\omega$ is an alloy type potential that is almost surely
lower bounded, then the scheme of proof of Theorem~\ref{thr:3} also
works and, \emph{mutatis mutandi}, one gets the same result.
\subsection{Outline of the proof}
\label{sec:outline}
The basic technical lemma leading to the proof of Thm.\ \ref{thr:1}
goes as follows: pick two length scales $L>\ell$; if one knows that
\begin{enumerate}
\item the operator $(H_\omega)_{\Lambda}$ exhibits localization at an
  eigenvalue $E$ in a cube $\Lambda$ of side length $L$ (in the sense
  that the weighted sum in the left hand side of~\eqref{eq:SULE} (for
  $S=\Lambda$) is bounded by $e^{\mu  \ell}$), and
\item if $\varphi_E$, the associated eigenfunction, has a localization
  center in $Q$, a cube of side length $\ell$, such that, when
  enlarging $Q$ somewhat into $\tilde{Q}$, $(H_\omega)_{\tilde{Q}}$
  has at most one eigenvalue at distance $e^{-\mu  \ell}$ to $E$,
\end{enumerate}
then, $(H_\omega)_{\tilde{Q}}$ admits an eigenvalue, say, $\tilde{E}$
exponentially close to $E$ such that the associated eigenvector is
just $\varphi_E$ restricted to $\tilde{Q}$, up to an error of size
$e^{-\mu  \ell/2}$. See Lemma~\ref{le:1} below for a precise statement.

The strategy to obtain Thm.\ \ref{thr:1} from the lemma is the
following.  For a large side length $L$, we define a decreasing finite
sequence of scales (side lengths) $L_n$ by
$L_1 = \lfloor \beta \log L \rfloor$ and
$L_n = \lfloor \tfrac{1}{4} L_{n-1}\rfloor$ for $n=1,\ldots, m=m(L)$
such that $L_m$ is sufficiently large but independent of $L$. For each
generation $n$, we roughly partition our initial cube $\Lambda$ of
side length $L$ into smaller cubes of side length $L_n$. For each
eigenfunction of $H_\omega$ having a localization center in $\Lambda$,
we consider the sequence of the cubes of the different generations
that contain this center of localization. For $n_0\geq1$, we say that
the localization center is \emph{good} from generation 1 to generation
$n_0-1$ if, for $1\leq n \leq n_0-1$, both assumptions (1) and (2) in the basic
technical lemma hold for the cubes of generations $n$ and $n+1$
(i.e. we take $L=L_n$ and $\ell=L_{n+1}$ in assumptions (1) and (2)
above) containing said localization center. Applying the basic
technical lemma inductively to the eigenfunction associated to a good
localization center from generation $1$ to $n_0$, we see that the
associated onset length is at most $L_{n_0}$ and that the associated
eigenfunction decays exponentially outside a cube of generation $n_0$
containing said localization center. Using the independence properties
of the Hamiltonian on the cubes within each generation and estimates
of the probability that either (1) or (2) fail (provided by
assumptions (A3) and (A4)), we can bound the number of localization
centers that fail to be good for generations larger than $n_0$ using a
large deviation principle (see Prop. \ref{prop:LDP}).  We, thus, bound
the number of eigenfunctions having onset length larger than
$L_{n_0}$.
\section{Numerical results}\label{sec:numerics}
In this section we present numerical results for the onset lengths of
eigenfunctions of the $1D$ Anderson model $H_\omega$ with random
potential $\lambda \omega_x$ with $\omega_x$ uniform in the interval
$[-1,1]$.  The spectrum of $H_\omega$ on the full line is the interval
$[-2-\lambda,2+\lambda]$.  As $H_\omega$ is a $1D$ Schr{\"o}dinger
operator, it is well known that localization holds throughout the
spectrum (see, \cite[Chapter 9]{cycon1987} and \cite[Chapter
12]{MR3364516}).  The eigenfunctions of $H_\omega$ decay as
$|x|\rightarrow\infty$ at a rate given by the \emph{Lyapunov
  exponent}, which can be computed using products of transfer
matrices.  Specifically, for each energy $E$ and $n\in \Z$ we define the
\emph{transfer matrix}:
$$T_n(E;\omega) \ := \ \begin{pmatrix}
  \lambda \omega_n-E & 1 \\
  1 & 0
\end{pmatrix} \ . $$ The \emph{Lyapunov exponent} is the limit
\begin{equation}\label{eq:lyap}
  L(E) \ := \ \lim_{n\rightarrow \infty} \frac{1}{n}\log \| T_n(E,\omega)\cdots T_1(E,\omega) \| \ .
\end{equation}
The limit is known to exist and be independent of $\omega$ for almost
every $\omega$ (see \cite[Chapter 9]{cycon1987}).

In Fig.~\ref{Fig:lyapds}, numerical estimates of the Lyapunov exponent
$L(E)$ and density of states $n(E)$ for $H_\omega$ with $\lambda=1$
are shown. As both $L(E)$ and $n(E)$ are symmetric functions of the
energy $E$, these were computed only for $E\ge 0$ (values shown on the
plot for $E<0$ correspond to those computed for $|E|$). The Lyapunov
exponents were estimated at $101$ evenly spaced energy points,
$E_0=0$, $E_1=0.03$, $\ldots$, $E_{100}=3$, by averaging $100$ samples
of $\frac{1}{n}\log \| T_n(E_j,\omega)\cdots T_1(E_j,\omega) \|$ with
$n=10^6$. The density of states was estimated by counting the
proportion of eigenvalues falling in each energy interval
$[E_{j-1},E_j]$, $j=1,\ldots,100$ for the exact diagonalization of
$240$ samples of $(H_\omega)_{\Lambda}$ with $\Lambda=[1,2000]$
($480,000$ total eigenvalues).\footnote{Numerical computations were
  preformed in Matlab on Michigan State University's High Performance
  Computing Center.  To obtain precise computations of eigenfunctions,
  including the exponential tails, we used the open source GEM Library
  \cite{GEM}, which allows for arbitrary precision linear algebra
  computations.}
\begin{figure}
  \centering
  \begin{subfigure}[t]{0.39\textwidth}
    \centering
    \includegraphics[width=\linewidth]{./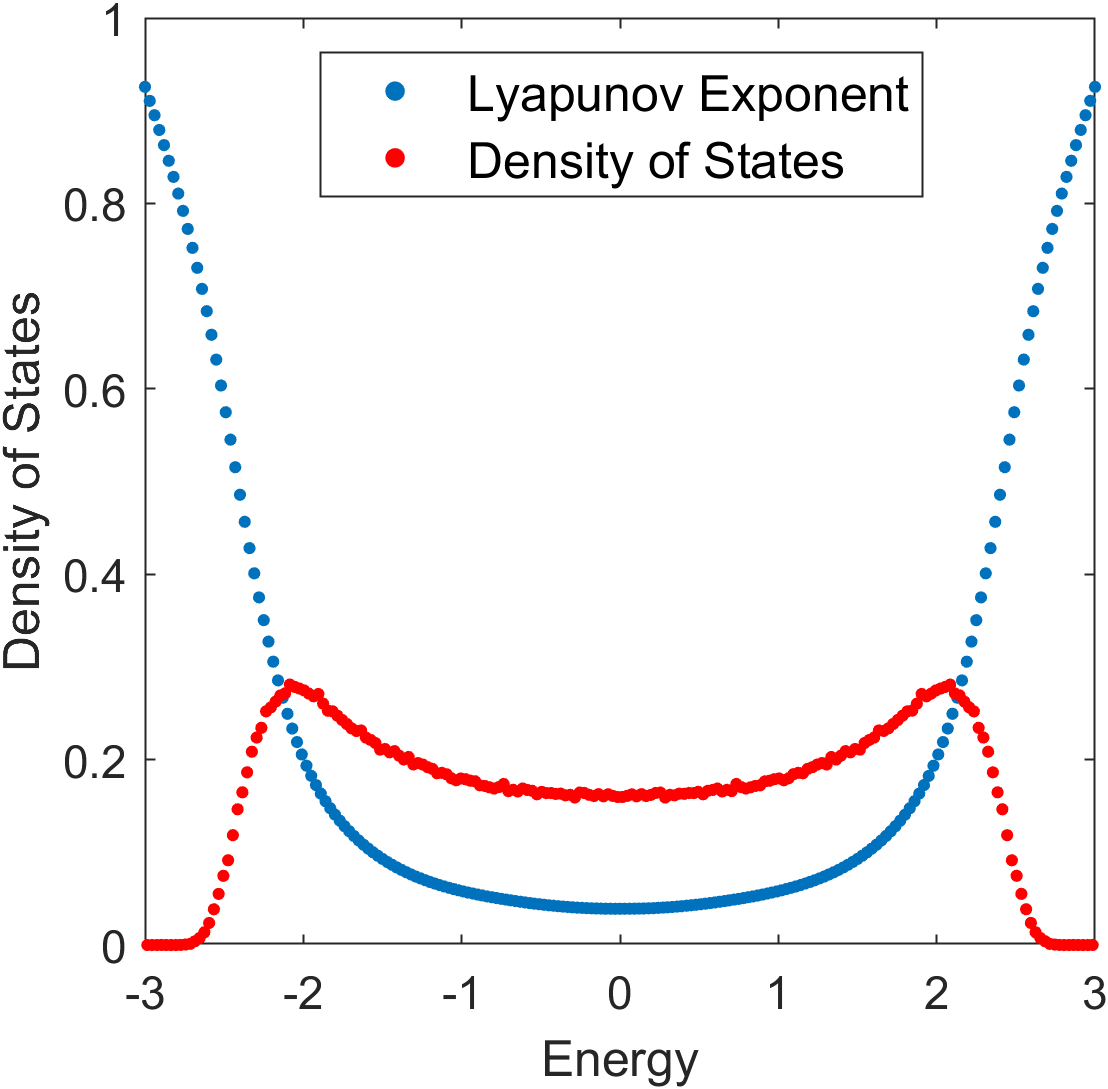}
    \caption{Lyapunov exponent and density of states versus energy.}
    \label{Fig:lyapds}
  \end{subfigure}\hfill
  \begin{subfigure}[t]{0.575\textwidth}
    \centering
    \includegraphics[width=\linewidth]{./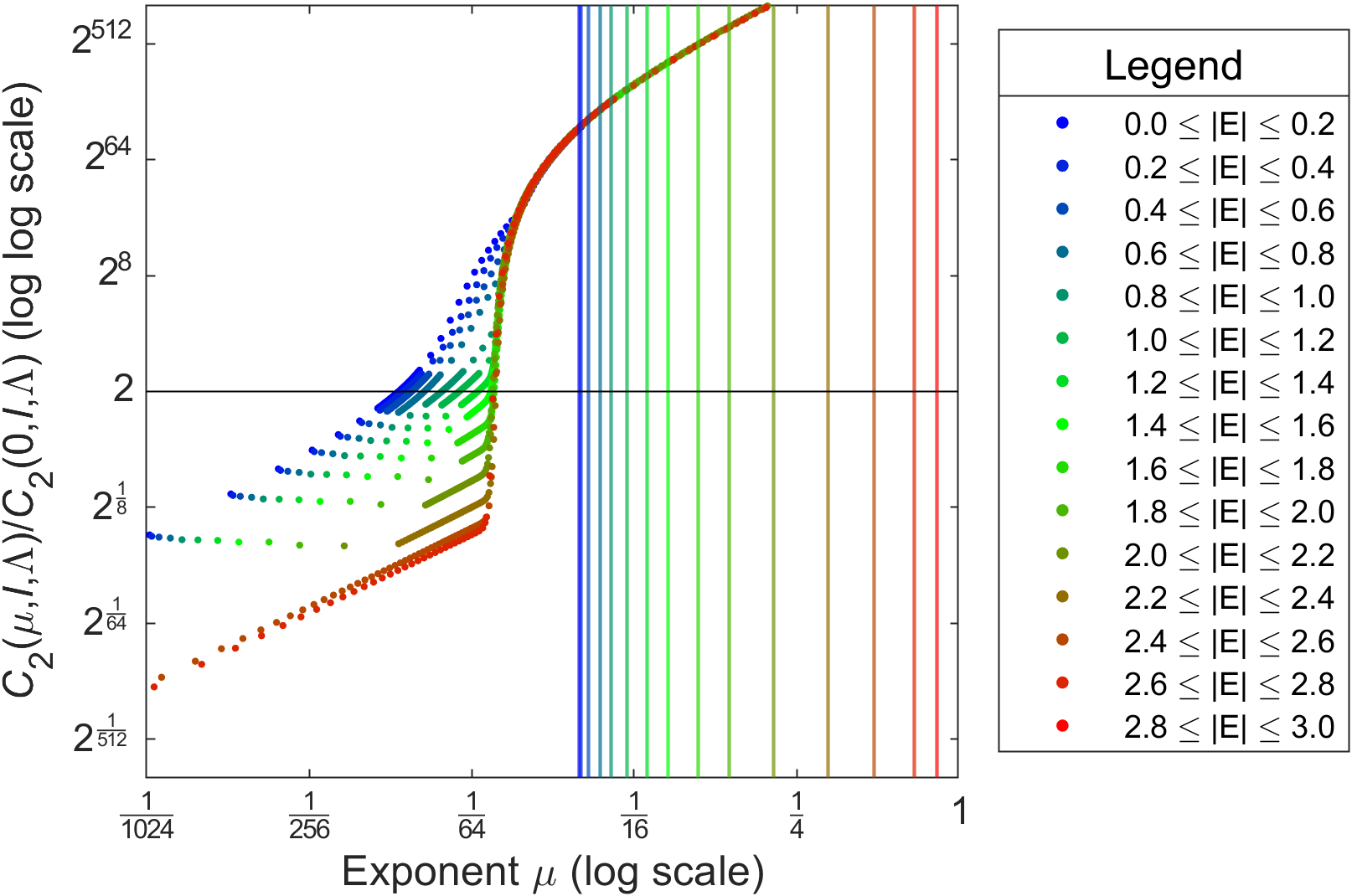} \caption{Normalized
      eigenfunction correlators
      $\nicefrac{C_2(\mu,I,\Lambda)}{C_2(\mu,I,\Lambda)}$ vs.
      exponent $\mu$ for the energy intervals shown. The Lyapunov
      exponent $L(I)$ for each interval is indicated as a vertical
      line.}
    \label{Fig:Correlators}
  \end{subfigure}
  \caption{Lyapunov Exponent, Density of States, and Correlators for
    an interval of length 2000 for the 1D Anderson model with disorder
    $\lambda=1$.}
\end{figure}

As recalled in \S\ref{sec:SULE} below, the localization property (A3)
is a consequence of an exponentially weighted bound on averaged
eigenfunction correlators:
\begin{equation}\label{eq:exploc}
  C_1(\nu,I,\Lambda) \ := \  \frac{1}{\# \Lambda}\sum_{x,y\in \Lambda} e^{\nu |x-y|} \esp \left ( \sum_{E\in I\cap \Ec((H_\omega)_\Lambda)} |\varphi_E(x)| |\varphi_E(y)|  \right )\ \le \ A_\nu  \ ,
\end{equation}
with $\nu >0$ and $A_\nu <\infty$ that may depend on the choice of the
energy interval $I$ but are independent of the region $\Lambda$.
Because the Lyapunov exponent describes the almost certain tail
behavior of eigenfunctions, one is tempted to suppose that
$A_\nu <\infty$ for $\D\nu =
L(I):=\textstyle{\inf_{E\in I}}L(E)$. However, on closer inspection this
seems unlikely.  Indeed, for $\nu = L(I)$ we should have
$\sup_\Lambda C_1(\nu,I,\Lambda)=\infty$, since for any sufficiently
large finite volume $\Lambda$ a certain fraction of the eigenfunctions
will exhibit decay with an exponent smaller than $\nu$, allowing the
exponential weight in \eqref{eq:exploc} to dominate for large
$\Lambda$.  Indeed, known proofs of localization yield a correlator
bound of the form \eqref{eq:exploc} with an exponent $\nu$ that is
strictly less than $L(I)$, e.g., $\nu < \frac{1}{2}L(I)$ in
\cite[Chapter 12]{MR3364516}.  We are not aware of an estimation in
the literature of the exact exponent $\nu_c$ at which $C_1(\nu_c,I)$
diverges, nor of a precise estimate of the divergence as
$\nu \uparrow \nu_c$.

To compute onset lengths, one must choose a particular value of the
exponent $\mu$.  As shown in \S\ref{sec:SULE}, the constant
$A_{\mathrm{AL}}$ appearing in (A3), and thus in the \emph{a priori}
bound on onset lengths in Prop.\ \ref{proposition}, is bounded by the
correlator $C_1(2\mu,I,\Lambda)$.  In a concrete context, it is
important to choose the exponent $\mu$ so that $C_1(2\mu, I,\Lambda)$,
or some other quantity giving an \emph{a priori} bound on the onset
lengths, is not too large.  For the purposes of the numerical
investigations reported here, we found it convenient to work with the
following $\ell^2$, or \emph{density-density}, correlator:
\begin{equation} \label{eq:ell2correlator} C_2(\mu,I,\Lambda) \ = \
  \frac{1}{\#\Lambda}\sum_{x,y\in \Lambda} e^{2\mu |x-y|} \esp \left (
    \sum_{E\in I\cap\Ec((H_\omega)_\Lambda)} |\varphi_E(x)|^2
    |\varphi_E(y)|^2 \right ) \ .
\end{equation}
Since all eigenfunctions are pointwise bounded by $1$, we have the
trivial bound $C_2(\mu,I,\Lambda)\le C_1(2\mu,I,\Lambda)$. In the limit
$\mu \rightarrow 0$, we have
\begin{equation}\label{eq:C20=IDS}
  C_2(0,I,\Lambda) \ = \ \frac{\esp \left (\# (I\cap \Ec((H_\omega)_\Lambda)) \right )}{\# \Lambda} \ ,
\end{equation} 
which is the finite volume integrated density of states on $I$.

\begin{figure}
  \centering
  \begin{subfigure}[t]{0.51\textwidth}
    \includegraphics[width=\linewidth]{./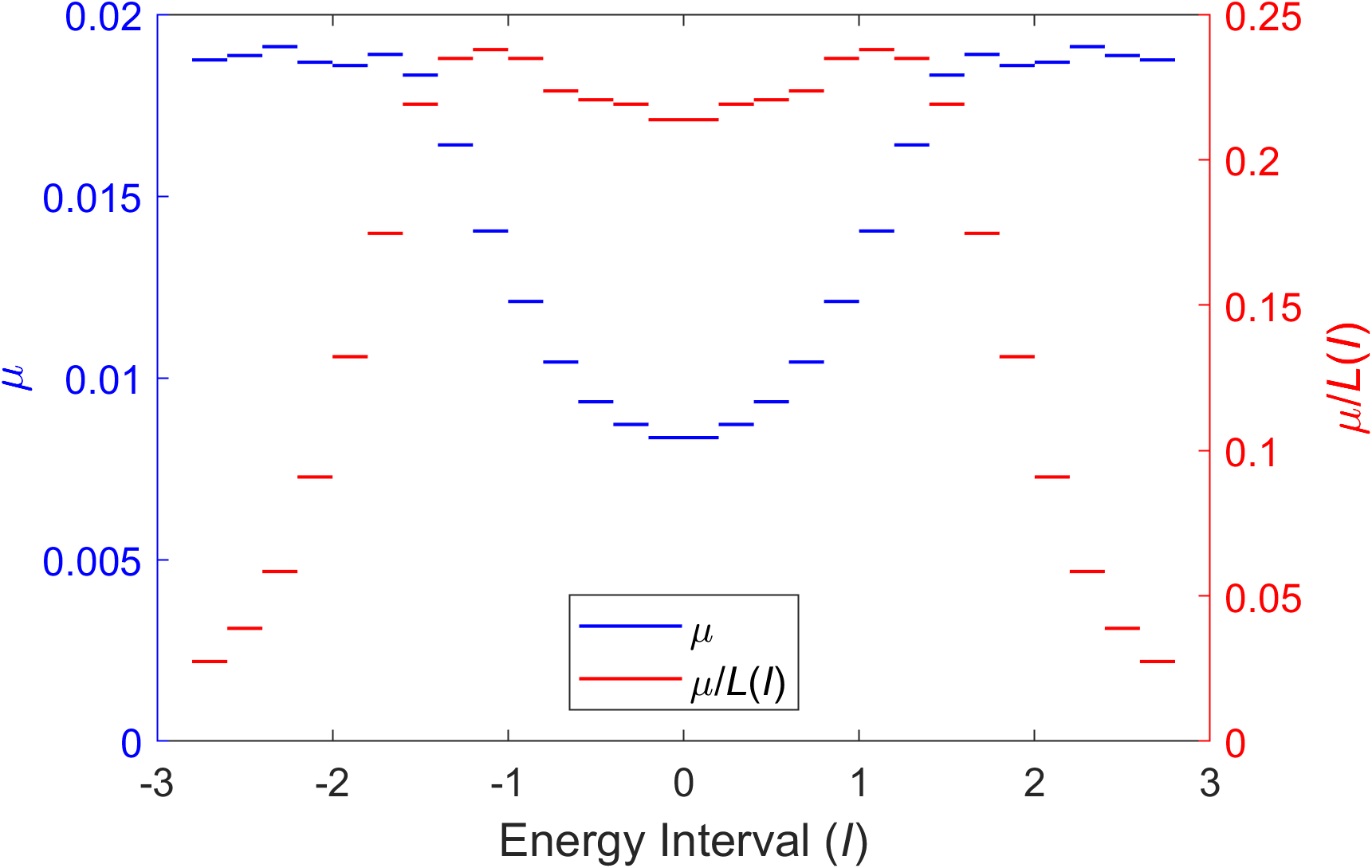} \caption{\label{Fig:exponents}
      On each energy interval with observed eigenvalues (see
      Fig.~\ref{Fig:Correlators}), the exponent $\mu$ at which
      $\nicefrac{C_2(\mu,I,\Lambda)}{C_2(0,I,\Lambda)}=2$ is shown in
      blue and the ratio $\nicefrac{\mu}{L(I)}$ is plotted in red,
      where $L(I)$ is the minimal Lyapunov exponent on $I$.}
  \end{subfigure}
  \hfill
  \begin{subfigure}[t]{0.45\textwidth}
    \includegraphics[width=\linewidth]{./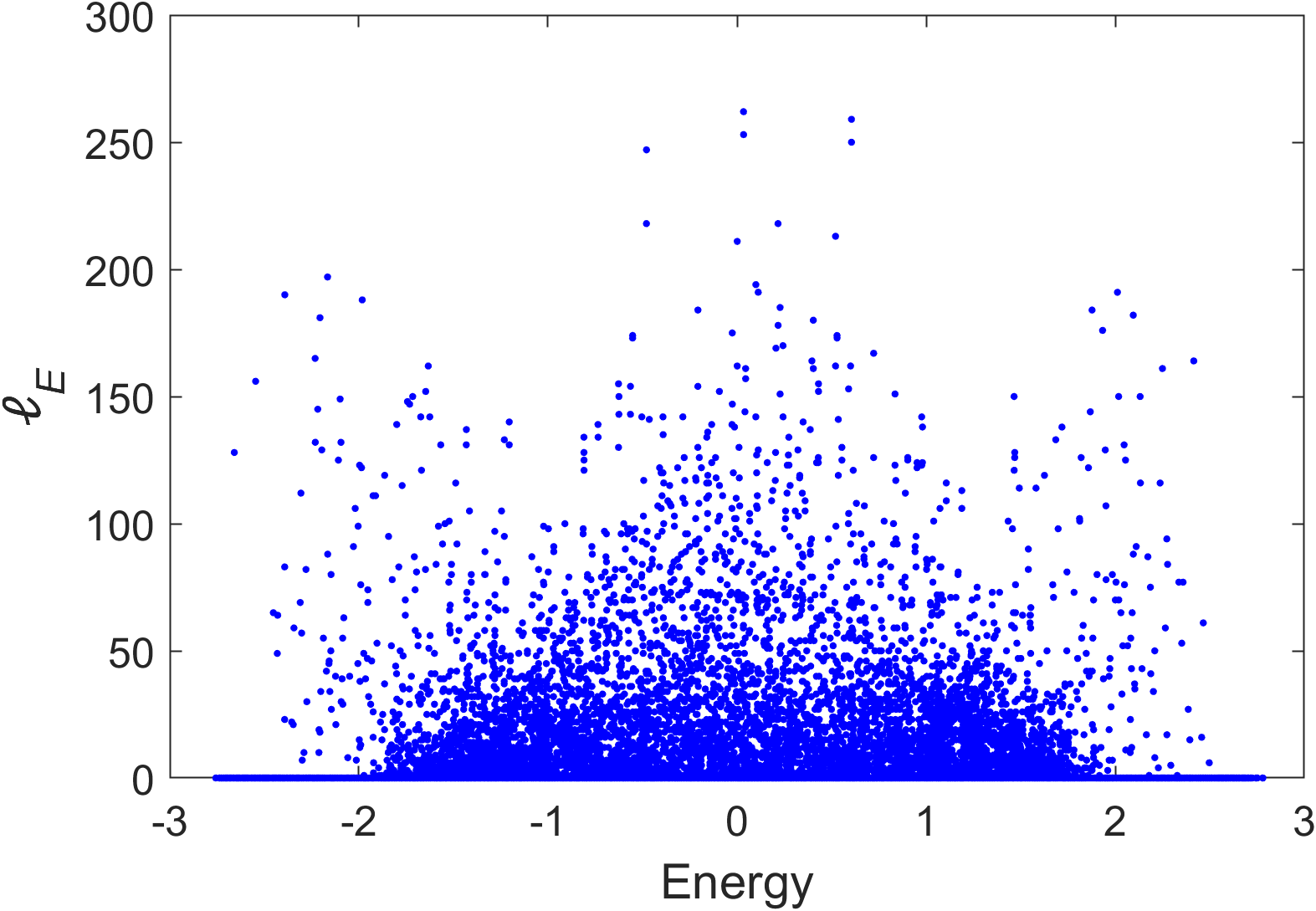} \caption{\label{Fig:onset}
      Onset length $\ell_E$ versus energy $E$ for the eigenfunctions
      from $240$ samples ($480,000$ eigenfunctions in total). Only
      $6,313$ (1.3\% of the total) eigenfunctions have $\ell_E>0$. }
  \end{subfigure}
  \caption{Exponents and onset length for eigenfunctions of the 1D
    Anderson model on an interval of 2000 with disorder $\lambda=1$..}
\end{figure}
In Fig. \ref{Fig:Correlators}, numerical estimates of the normalized
correlators $\nicefrac{C_2(\mu,I,\Lambda)}{C_2(0,I,\Lambda)}$ for
$(H_\omega)_\Lambda$ with $\lambda=1$ and $\Lambda=[1,2000]$ are shown
for 14 energy intervals and various values of $\mu$. These
computations were obtained by averaging results from 240 samples of
direct diagonalization of $(H_\omega)_\Lambda$.\footnote{To obtain
  these results, we used the Lyapunov exponent $L(3)$ at the edge of
  the spectrum to estimate the numerical precision with which to
  compute the eigenfunctions. We then used the GEM library \cite{GEM}
  to compute spectral data accurate to
  $\lceil \frac{L(3)}{\log 10}*2000 \rceil +5 = 810 $ decimal
  points. The logs of the eigenfunction densities,
  $\log |\varphi_j(x)|^2$, trimmed to double precision, were then used
  to compute the correlators and the onset lengths.}  Note that the
correlator blows up to \emph{extremely} large values
($2^{256} \approx 10^{77}$) well before $\mu$ approaches the Lyapunov
exponent \textemdash \ observe the $\log \log$ scale on the ordinate
of the plot!

\begin{figure}
  \centering
  \begin{subfigure}[t]{0.47\textwidth}
    \includegraphics[width=\linewidth]{./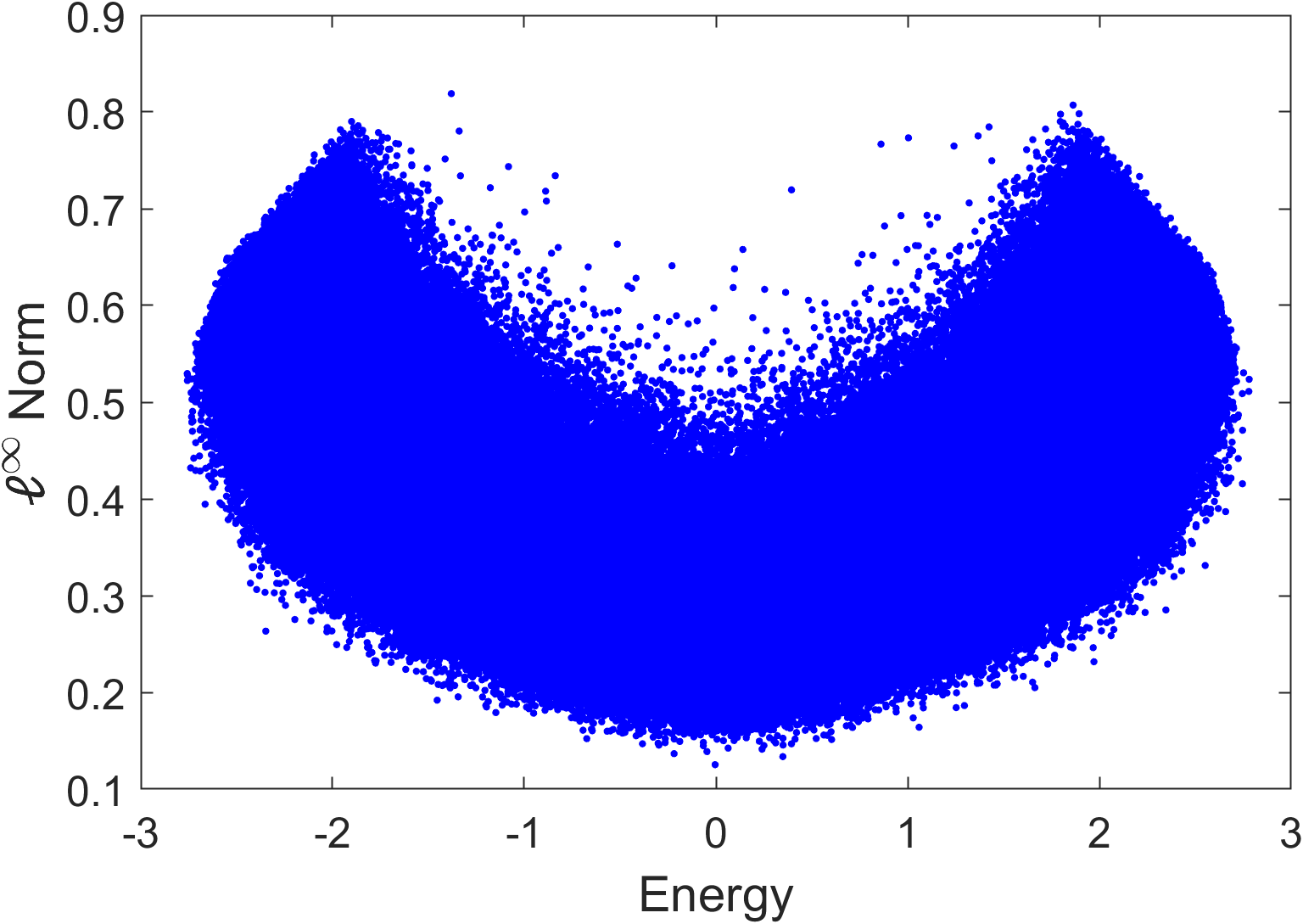}
    \caption{$\ell^\infty$ norm versus energy for the eigenfunctions
      from $240$ samples ($480,000$ eigenfunctions in total).  }
  \end{subfigure}
  \hfill
  \begin{subfigure}[t]{0.48\textwidth}
    \includegraphics[width=\linewidth]{./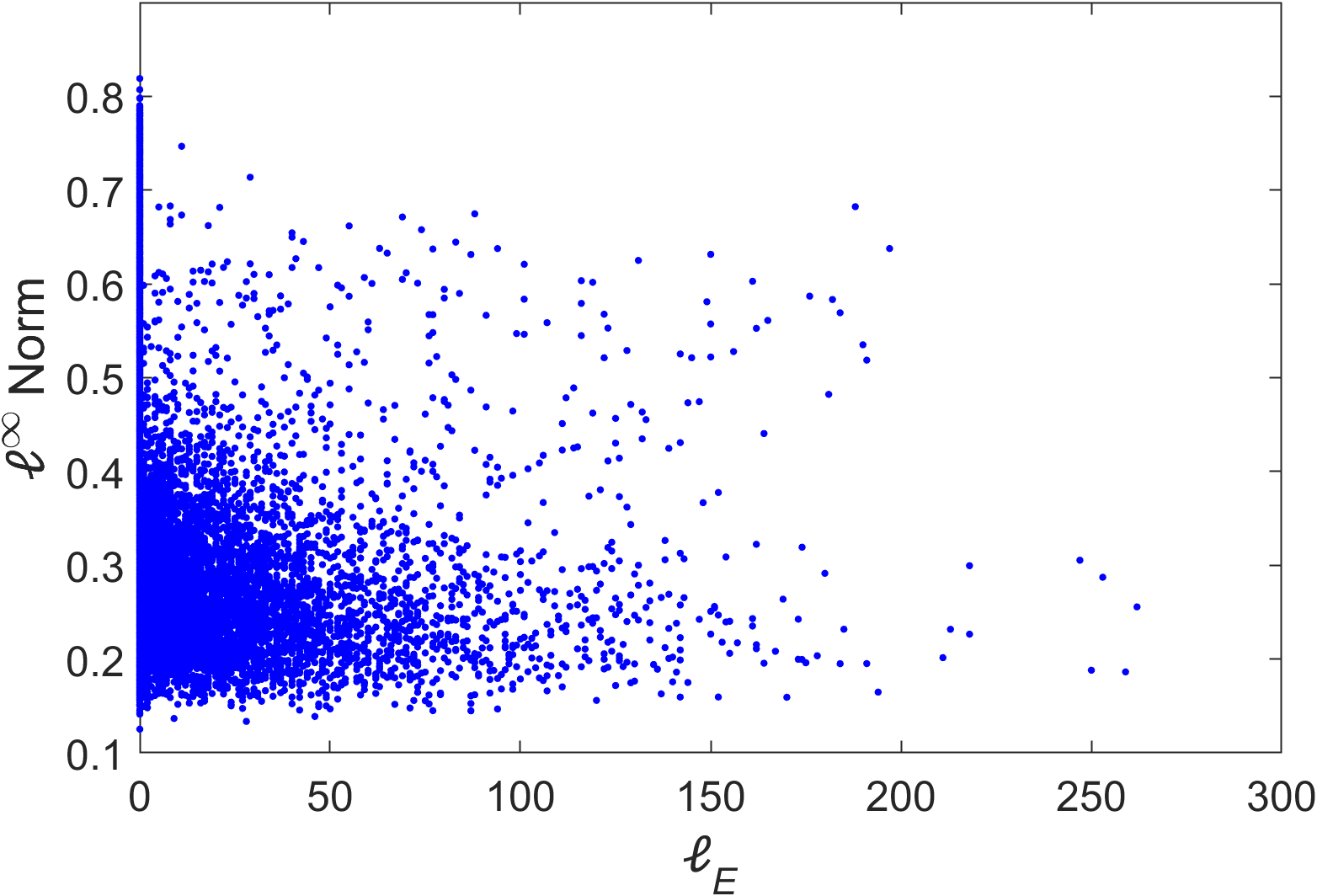}
    \caption{$\ell^\infty$ norm versus $\ell_E$ for the eigenfunctions
      from $240$ samples ($480,000$ eigenfunctions in total). }
  \end{subfigure}
  \caption{\label{Fig:ellinfinity}$\ell^\infty$ norms versus energy
    and onset length for eigenfunctions of the 1D Anderson model on an
    interval of 2000 with disorder $\lambda=1$.}
\end{figure}
To compute onset lengths for each energy interval, we chose an
exponent $\mu=\mu_I$ such that the correlator
$\nicefrac{C_2(\mu,I,\Lambda)}{C_2(0,I,\Lambda)} \approx 2$ (for
reference the horizontal cutoff at $2$ is shown in Fig.\
\ref{Fig:Correlators}).  These exponents are plotted for each of the
fourteen energy intervals from $0$ to $2.8$ in Fig.\
\ref{Fig:exponents}, along with the ratio $\mu/L(I)$ for each
interval.  In Fig.\ \ref{Fig:onset}, the onset length $\ell_E$ for
each of the $480,000$ eigenfunctions is plotted against the
corresponding eigenvalue.  Only $6,313$ eigenfunctions (1.3\% of the
total) were found to have positive onset length, and the maximum onset
length observed was 262.  For comparison, the $\ell^\infty$ norms of
the eigenfunctions are plotted in Fig.\ \ref{Fig:ellinfinity}, versus
energy and versus onset length.  Within the range of attained
$\ell^\infty$ norms, there is little correlation with the onset
length.

It is natural to wonder whether the observation of onset lengths as
large as 262 is consistent with the estimation that these numbers
should be ``of order $\log \#\Lambda \approx 7.6$.''  However, the
correlator bound $C_2(\mu,I,\Lambda)\le 2*C_2(0,I,\lambda)$ provides
an \emph{a priori} bound on localization lengths that is consistent
with this observation, as follows.  From the Markov inequality, we
have with probability at least $1-\epsilon$ that
$$\frac{1}{\#\Lambda}\sum_{x,y\in \Lambda} e^{2\mu |x-y|}  \sum_{E\in I\cap\Ec((H_\omega)_\Lambda)} |\varphi_E(x)|^2 |\varphi_E(y)|^2 \ \le \ \frac{2*C_2(0,I,\lambda)}{\epsilon}  .$$
Fixing a particular energy $E$ and taking $y=x_E$, we see that each
eigenfunction satisfies
$$\sum_{x} e^{2\mu|x-x_E|}  |\varphi_E(x)|^2 \ \le \ \frac{2*C_2(0,I,\Lambda) \# \Lambda}{\epsilon \|\varphi_E\|_\infty^2} \ . $$
Following the proof of Prop.\ \ref{proposition}, we find that
$$\ell_E \ \le \ \frac{1}{2 \mu} (\log 2 C_2(0,I,\Lambda) + \log \# \Lambda - \log 3 \epsilon -2 \log \|\varphi_E\|_\infty ) + 1.$$
In the current context, we take $\epsilon = 1/240$, as this is the
smallest probability we can resolve with 240 samples.  The key point
is that we expect onset lengths to be no larger than
$\frac{1}{2\mu}(\log \#\Lambda - \log 3\epsilon)$, where we have
neglected the relatively smaller terms coming from the
$\|\varphi_E\|_\infty$ norm and $C_2(0,I,\Lambda)$. For
$\#\Lambda=2000$, $\epsilon=1/240$, and $\mu \approx 0.01$, we obtain a
rough bound of order $600$.\footnote{We can include the contribution
  from $C_2$ and $\|\varphi_E\|_\infty$ as follows. Recall that
  $C_2(0,I,\Lambda)$ is equal to the integrated density of states on
  $\Lambda$ (see eq.\ \eqref{eq:C20=IDS}). For the energy intervals
  shown in Fig. \ref{Fig:Correlators} the maximum IDS of $0.1092$ is,
  which is attained for $I=\{2\le |E| \le 2.2\}$. The minimum
  $\ell^\infty$ norm over all intervals is $0.1253$ and the minimum
  value of $\mu$ is $0.0084$.  Taken together we have the the \emph{a
    priori} bound
  $\ell_E \ \le \ \frac{1}{2*0.0084} \left (\log 2*0.1092 + \log 2000
    + \log 80 - 2 \log 0.1253 \right ) +1 \ \approx \ 871 \ .$ This
  can be improved somewhat by computing separate bounds for each
  interval based on the exponent for that interval and $\ell^\infty$
  norms for eigenfunctions with energies in that interval, resulting
  in an upper bound of $844$.}  So we should not be surprised to see
onset lengths of the size seen here.

\begin{figure}
  \centering
  \begin{subfigure}[t]{0.48\textwidth}
    \includegraphics[width=\linewidth]{./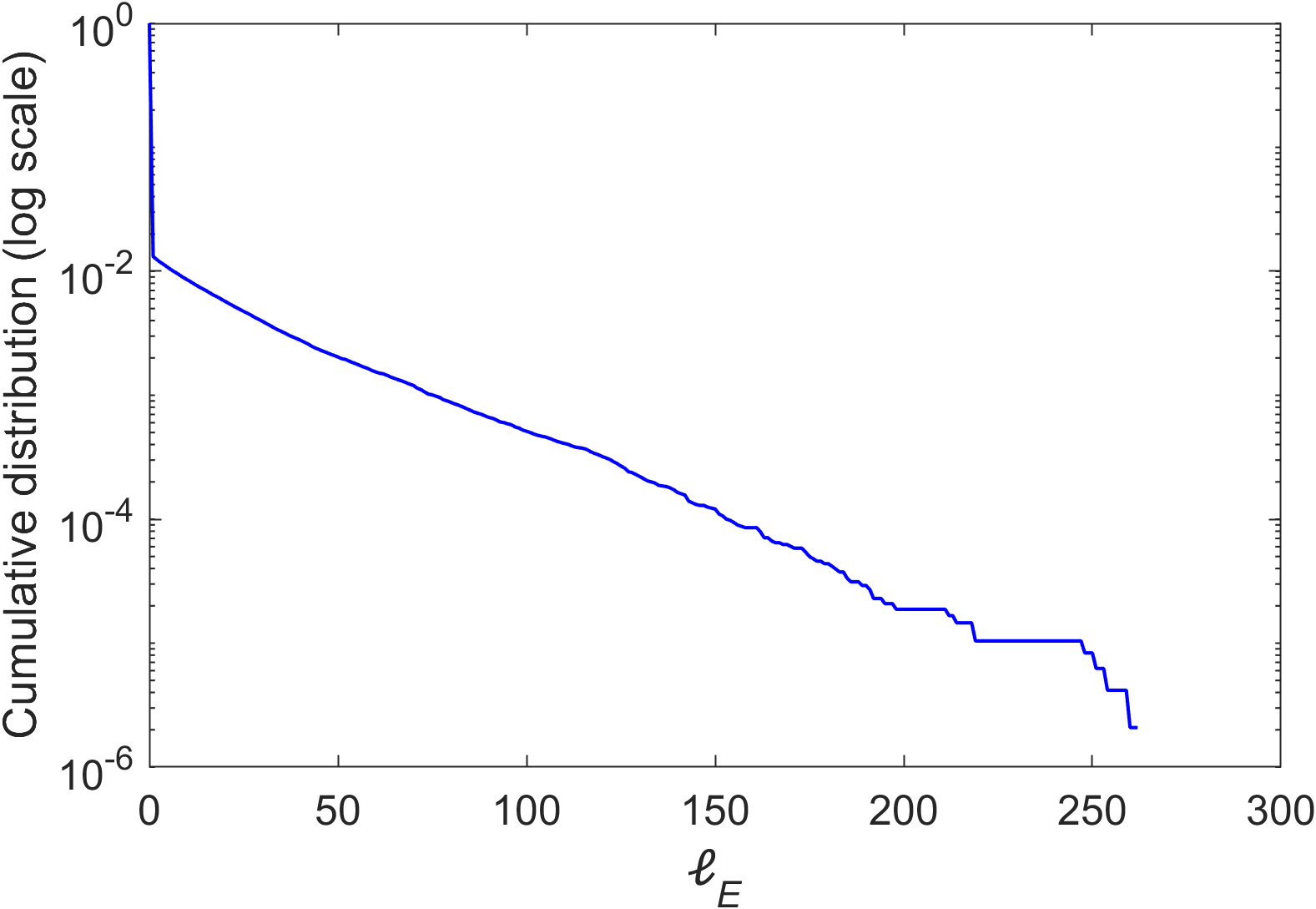} \caption{Cumulative
      distribution of onset lengths for $240$ samples with $\lambda=1$
      on an interval of length $2000$. }
  \end{subfigure}
  \hfill
  \begin{subfigure}[t]{0.48\textwidth}
    \includegraphics[width=\linewidth]{./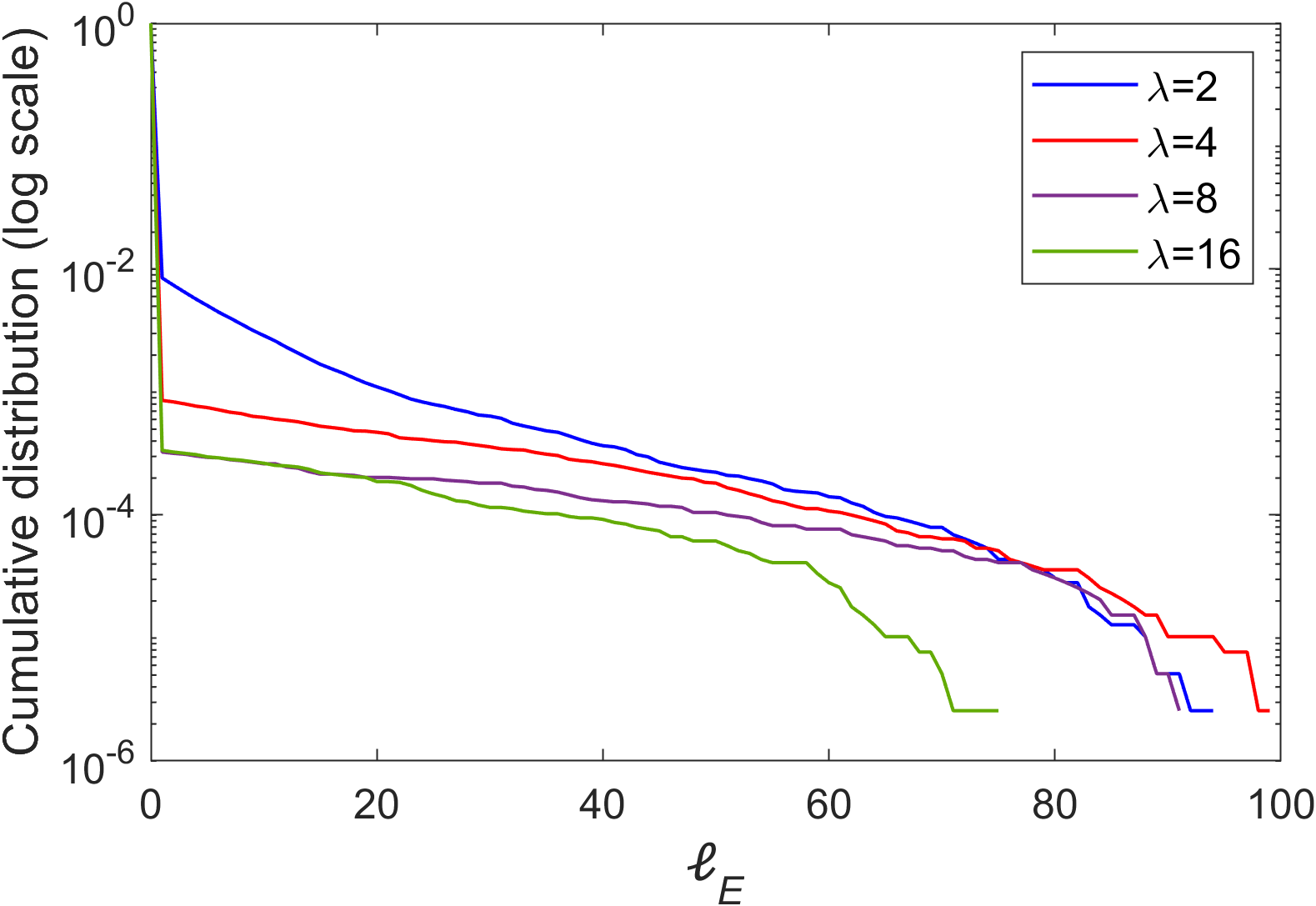} \caption{Cumulative
      distribution of onset lengths for $390$ samples with
      $\lambda=2$, $4$, $8$ and $16$ on an interval of length $1000$.
    }
  \end{subfigure}
  \caption{\label{Fig:cumulative}Cumulative distribution of onset
    lengths for eigenfunctions of the 1D Anderson model.}
\end{figure}
The cumulative distributions of onset lengths for various disorder
strengths are shown in Fig. \ref{Fig:cumulative}.  On the left, we
have plotted the results for the $\lambda=1$ and $\Lambda=[1, 2000]$.
On the right, one finds results for $\lambda=2$, $4$, $8$ and $16$ on
the interval $[1,1000]$, computed by the same methods indicated above.
Notably, for each disorder strength after a sharp drop off from
$\ell_E=0$ to $\ell_E=1$, the cumulative distribution exhibits
exponential decay over a range of lengths before dropping off at the
maximum attained onset length within the geometry and number of
simulations.

\begin{figure}
  \centering
  \includegraphics[width=\linewidth]{./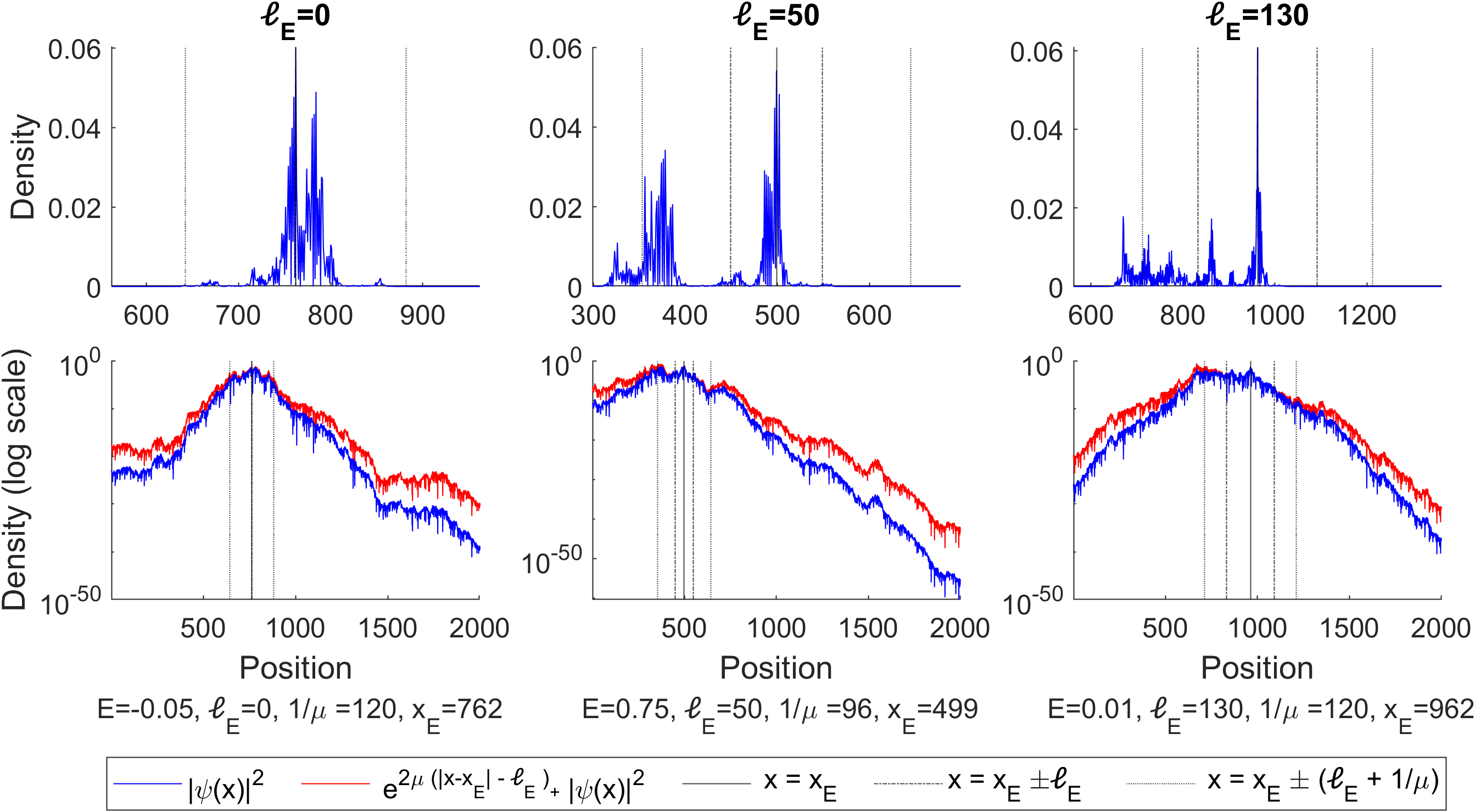}
  \caption{\label{Fig:various}Three digenfunctions for the 1D Anderson
    model on $\Lambda=[1,2000]$ with $\lambda=1$.}
\end{figure}
As the proof of Thm.\ \ref{thr:1} will show, an eigenfunction can have
a large onset length due to a large deviation of the random
environment in a neighborhood of the localization center.  As such,
although all eigenfunctions share the same exponential decay in their
tails, the behavior of an eigenfunction over the localization volume
$x_E\pm \ell_E$ is, by definition, \emph{atypical}.  To paraphrase
Tolstoy,\footnote{The opening of Anna Karenina: ``All happy families
  are alike; each unhappy family is unhappy in its own way.''}
\emph{all eigenfunctions with small onset length are alike, each
  eigenfunction with large onset length is extended in its own way.}
To illustrate the variety of behaviors possible within the
localization volume, we have have plotted the density $|\psi_E(x)|^2$
for three different eigenfunctions in Fig.\ \ref{Fig:various}.  In the
upper plots, the density of each eigenfunction is shown in a
neighborhood of the localization center.  In the lower plots, the
logarithm of the density and of the exponentially weighted density
$e^{2\mu(|x-x_E|-\ell_E)_+}|\psi_E(x)|^2$ are shown for the entire
chain $[1,2000]$.  The first eigenfunction, with onset length
$\ell_E=0$, has the majority of its mass within one localization
length ($\nicefrac{1}{\mu}$) of the localization center.  The second
eigenfunction, with onset length $\ell_E=50$, shows two distinct peaks
at roughly distance $\ell_E+\nicefrac{1}{\mu}$ from each other.  This
sort of resonant superposition of two or more localization centers is
one mechanism for the development of a substantial onset length.
Although rare, such eigenfunctions will appear with positive frequency
in large systems.  Finally, the third eigenfunction, with onset length
$\ell_E=130$, is extended over an interval of size roughly
$\ell_E+1/\mu$.  On mechanism for the occurrence of such
eigenfunctions is for the potential over the interval
$[x_E-\ell_E,x_E+\ell_E]$ to closely mimic a potential having extended
states (e.g., a periodic potential) or a long localization length at
energy $E$.  This is the behavior one finds for the eigenfunctions
from the Lifshitz tail regime (see Thm.\ \ref{thr:3}), although this
particular eigenfunction comes from the center of the band ($E=0.01$).

\section{The proofs of Theorem \ref{thr:1} and Theorem \ref{thr:8}}
\label{sec:proof-theorem}
We now turn to the proof of Theorem \ref{thr:1} for the discrete
Anderson model.  Let $\alpha \in \frac{1}{2}\N$, $\alpha \ge 1$; this parameter is
arbitrary but will be fixed throughout our analysis.  Given a lattice
cube $Q=\Lambda_{\ell}(x_0)$ of center $x_0\in \Z^d$ and side length
$\ell\in \N$, we let $\widetilde{Q} := \Lambda_{(2\alpha+1)\ell}(x_0)$ denote
the expanded cube with the same center but side length $(2\alpha +1) \ell$
(see Figure \ref{Fig:setup}).  Our first result shows how to
approximate an eigenfunction $\varphi_{E}$ on a region $\Omega$ with
localization center $x_E\in Q$ by an eigenfunction
$\psi_{E'}\in \ell^2(\widetilde{Q})$ of the Hamiltonian
$(H_\omega)_{\widetilde{Q}}$ on the expanded cube.

\begin{Le} \label{le:1} Let $\Omega \subset \Z^d$ be a region and let $S=\Omega \cap Q$,
  where $Q= \Lambda_{\ell}(x_0)$ is a cube of side length $\ell\ge 2$
  such that $Q\cap \Omega \neq \emptyset$. Let $\widetilde{S}=\widetilde{Q} \cap \Omega $,
  let $0 < \delta < 1 $ and fix a disorder configuration $\omega$ such that
  \begin{equation}\label{eq:minamilemma}
    \max_{E\in\sigma \left ((H_\omega)_{\widetilde{S}} \right )}\tr\car_{E+[-\delta,\delta]}
    [(H_\omega)_{\widetilde{S}}] \ \leq \ 1 \ .
  \end{equation} 
  If there is an eigenvalue $E\in \Ec((H_\omega)_{\Omega})$ of
  $(H_\omega)_{\Omega}$ with normalized eigenvector $\varphi_{E}$ such
  that $x_{E} \in S$ and
  \begin{equation}\label{eq:Mbound}
    M \ := \  \left ( \sum_{x} e^{2\mu|x-x_E|} |\varphi_{E}(x)|^2  \right )^{\frac{1}{2}} \ \le  \frac{\delta e^{\alpha \mu \ell }}{2 \sqrt{d}} \ ,
  \end{equation}
  then there exists a unique eigenvalue
  $E' \in \Ec((H_\omega)_{\widetilde{S}})$ of
  $(H_\omega)_{\widetilde{S}}$ with normalized eigenvector $\psi_{E'}$
  such that
  \begin{equation}\label{eq:lemmaE-E'}
    |E-E'| \ \le \   2 \sqrt{d} M e^{-\alpha \mu \ell } \ ,
  \end{equation}
  and
  \begin{equation}\label{eq:lemmaphi-psi}
    \left\|\car_{\widetilde{S}} \varphi_{E} -\psi_{E'} \right\|_{\ell^2(\widetilde{S})} \ \le \    3\sqrt{d} \frac{M}{\delta}  e^{-\alpha \mu  \ell} \ .
  \end{equation}
\end{Le}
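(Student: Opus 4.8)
The plan is to treat the restriction $\phi:=\car_{\widetilde S}\varphi_E$ of the eigenfunction to the enlarged cube as an \emph{approximate} eigenvector of $(H_\omega)_{\widetilde S}$, and then to upgrade it to a genuine eigenpair by exploiting the spectral gap encoded in \eqref{eq:minamilemma}. First I would compute the residual $r:=((H_\omega)_{\widetilde S}-E)\phi$. Since $\varphi_E$ solves $(H_\omega)_\Omega\varphi_E=E\varphi_E$ on all of $\Omega$, the potential terms and all interior hopping terms cancel, leaving only the hopping across $\partial\widetilde Q$: for $x\in\widetilde S$ one gets $r(x)=-\sum_{y\in\Omega\setminus\widetilde Q,\,|x-y|=1}\varphi_E(y)$, so $r$ is supported on the inner boundary layer of $\widetilde S$ and involves only values of $\varphi_E$ outside $\widetilde Q$.

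The second step bounds $\|r\|$ using the weighted estimate. Because $x_E\in Q=\Lambda_\ell(x_0)$ lies within distance $\ell/2$ of $x_0$ while $\widetilde Q=\Lambda_{(2\alpha+1)\ell}(x_0)$ reaches out to distance $(\alpha+\tfrac12)\ell$, every $y\notin\widetilde Q$ satisfies $|y-x_E|\ge\alpha\ell$; hence \eqref{eq:Mbound} gives $\sum_{y\notin\widetilde Q}|\varphi_E(y)|^2\le e^{-2\mu\alpha\ell}M^2$, i.e.\ the exterior tail has norm at most $Me^{-\mu\alpha\ell}$. Converting this into a bound on $\|r\|$ costs only a dimensional factor: each inner-boundary site has at most $d$ neighbours outside the cube, while by convexity each exterior-adjacent site has exactly one neighbour inside, so one Cauchy--Schwarz step yields $\|r\|^2\le d\sum_{y\notin\widetilde Q}|\varphi_E(y)|^2\le d\,M^2e^{-2\mu\alpha\ell}$, that is $\|r\|\le\sqrt d\,Me^{-\mu\alpha\ell}$. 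At the same time $\|\phi\|^2=1-\sum_{y\notin\widetilde Q}|\varphi_E(y)|^2\ge1-(Me^{-\mu\alpha\ell})^2\ge\tfrac34$ by \eqref{eq:Mbound}, so $\phi$ is nearly normalized and in particular nonzero.

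The crux is to convert this approximate eigenvector into the precise statements \eqref{eq:lemmaE-E'}--\eqref{eq:lemmaphi-psi}. Let $P$ be the spectral projection of $(H_\omega)_{\widetilde S}$ onto $[E-\delta,E+\delta]$; assumption \eqref{eq:minamilemma} says exactly that $\operatorname{rank}P\le1$. On the range of $1-P$ one has $|(H_\omega)_{\widetilde S}-E|>\delta$, so $\delta\,\|(1-P)\phi\|\le\|(1-P)r\|\le\|r\|$, whence $\|(1-P)\phi\|\le\|r\|/\delta\le\sqrt d\,Me^{-\mu\alpha\ell}/\delta\le\tfrac12$ by \eqref{eq:Mbound}. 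Combined with $\|\phi\|^2\ge\tfrac34$ this forces $P\phi\ne0$, so in fact $\operatorname{rank}P=1$ and there is a \emph{unique} eigenvalue $E'\in[E-\delta,E+\delta]$, with normalized eigenvector $\psi_{E'}:=P\phi/\|P\phi\|$. Since the nearest point of $\sigma((H_\omega)_{\widetilde S})$ to $E$ lies within $\|r\|/\|\phi\|\le\delta$ of $E$, it must be $E'$, giving $|E-E'|=\mathrm{dist}(E,\sigma((H_\omega)_{\widetilde S}))\le\|r\|/\|\phi\|\le2\sqrt d\,Me^{-\mu\alpha\ell}$, which is \eqref{eq:lemmaE-E'}. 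For \eqref{eq:lemmaphi-psi} I would use the orthogonal decomposition $\phi-\psi_{E'}=(1-P)\phi+(\|P\phi\|-1)\psi_{E'}$; the first term is already bounded by $\sqrt d\,(M/\delta)e^{-\mu\alpha\ell}$, and a short estimate of $1-\|P\phi\|=1-\sqrt{\|\phi\|^2-\|(1-P)\phi\|^2}$ (using \eqref{eq:Mbound}) controls the second by twice the first, giving the claimed $3\sqrt d\,(M/\delta)e^{-\mu\alpha\ell}$.

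The main obstacle, as suggested by this outline, is not the residual computation but the passage from ``$E$ is near the spectrum'' to uniqueness of $E'$ and approximation by a \emph{single} normalized eigenvector: this is precisely where the rank-one bound \eqref{eq:minamilemma} is indispensable, and one must check that the smallness of $\|(1-P)\phi\|$ relative to $\|\phi\|$ both excludes a second eigenvalue in the window and quantitatively controls $1-\|P\phi\|$. Tracking the constants so that they stay within the stated $2\sqrt d$ and $3\sqrt d$ is then routine, using $\|\phi\|\ge\sqrt3/2$ and \eqref{eq:Mbound} to absorb the discrepancy between $\|\phi\|$ and $1$.
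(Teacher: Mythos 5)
Your proposal is correct and follows essentially the same route as the paper's proof: you restrict $\varphi_E$ to $\widetilde{S}$, bound the boundary residual by $\sqrt{d}\,M e^{-\alpha\mu\ell}$ using \eqref{eq:Mbound} (your observation that each exterior site adjacent to $\widetilde{S}$ has at most one interior neighbor, while each interior site has at most $d$ exterior neighbors, matches \eqref{eq:37}), invoke self-adjointness for \eqref{eq:lemmaE-E'}, and use the spectral separation from \eqref{eq:minamilemma} for \eqref{eq:lemmaphi-psi}; your spectral-window projection $P=\car_{[E-\delta,E+\delta]}\bigl((H_\omega)_{\widetilde{S}}\bigr)$ together with the estimate of $1-\|P\phi\|$ is just a repackaging of the paper's phase-fixing computation \eqref{eq:3finally}--\eqref{eq:halfwayto3}, and your constants close with room to spare. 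One remark: your parenthetical claim that \eqref{eq:minamilemma} ``says exactly'' that $\operatorname{rank} P\le 1$ is literally a slight strengthening, since the hypothesis centers the $\delta$-windows at eigenvalues of $(H_\omega)_{\widetilde{S}}$ (not at $E$, which need not be such an eigenvalue) and so only yields simplicity plus pairwise separation $>\delta$; however, the paper's own proof makes an analogous leap in asserting a $2\delta$ separation between eigenvalues, and the strengthened reading is precisely what is verified where the lemma is applied (Prop.~\ref{prop:badcube} controls windows centered at an arbitrary net of energies in $I_{\mathrm{AL}}$), so this interpretive point is shared with the original rather than a gap in your argument.
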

\begin{figure}
  \centering \includegraphics[width=0.5\textwidth]{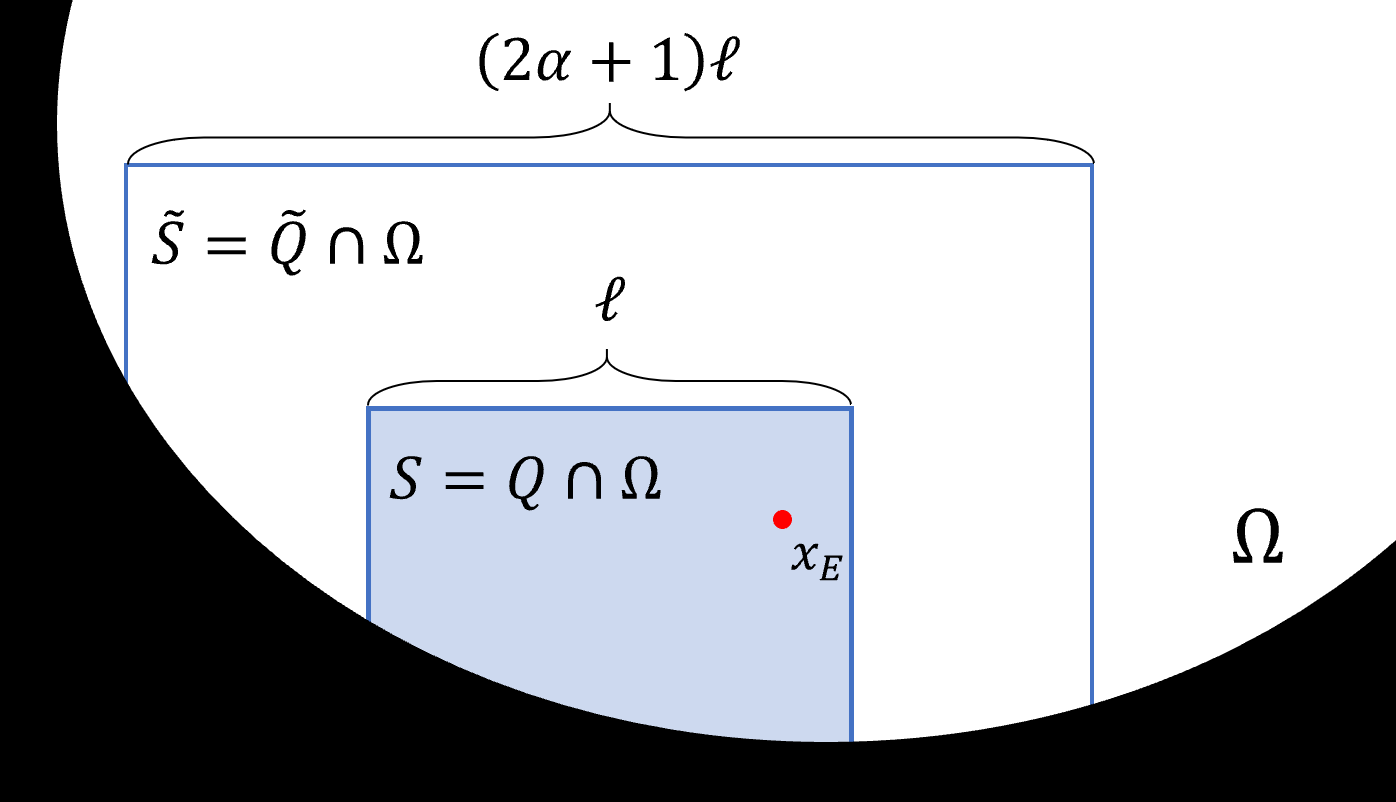}
  \caption{\label{Fig:setup}The setup of Lemma \ref{le:1}, showing a
    cube $Q$ and its expansion $\widetilde{Q}$ intersected with a
    region $\Omega$. }
\end{figure}
\begin{proof} Since $x_{E}\in S$, we have
  \begin{equation}\label{eq:phiEn-1Q'Q}
    \|\varphi_{E}\|_{\ell^2(\Omega \setminus \widetilde{S})} \ 
    = \ \left ( \sum_{x\in \Omega \setminus \widetilde{S}} |\phi_E(x)|^2 \right )^{\frac{1}{2}} \ \le \ \ e^{-\alpha \mu \ell} \left ( \sum_{x\in \Omega } e^{2\mu|x-x_E|}|\phi_E(x)|^2 \right )^{\frac{1}{2}} \
    = \  M e^{-\alpha \mu \ell} \ .
  \end{equation}
  Thus by \eqref{eq:Mbound},
  \begin{equation}\label{eq:phiEn-1Q}
    \| \car_{\widetilde{S}} \varphi_{E}\|_{\ell^2(\widetilde{S})} \ \ge \ 1 -  M e^{-\alpha \mu \ell}   \ \ge \ 1 - \frac{\delta }{2 \sqrt{d}} \ > \ \frac{1}{2} \ .
  \end{equation}
  Since $(H_\omega)_{\widetilde{S}}$ is the restriction of
  $(H_\omega)_{\Omega}$ to $\widetilde{S}$ and $\varphi_{E}$ is an
  eigenvector of $(H_\omega)_{\Omega}$, we see that
  $$ \left [ (H_\omega)_{\widetilde{S}}- E \right ] \car_{\widetilde{S}} \varphi_{E}(x) = \begin{cases}
    0 & \text{if } x\in \widetilde{S} \text{ and } \dist(x,\Omega\setminus \widetilde{S})\ge 2 \ , \text{ and} \\
    {\D \sum_{\substack{y\in \Omega\setminus \widetilde{S} \\
          |y-x|=1}}} \varphi_{E}(y) & \text{if } x\in \widetilde{S}
    \text{ and } \dist(x,\Omega\setminus \widetilde{S})=1 \ .
  \end{cases}$$ Because each $x\in \widetilde{S}$ has at most $d$
  neighbors in $\Omega \setminus \widetilde{S}$,
  \begin{equation}
    \label{eq:37}
    \left\|\left[(H_\omega)_{\widetilde{S}}-E\right]\car_{\widetilde{S}}
      \varphi_{E}\right\|
    \ \le \  \left (d \sum_{y\in \Omega\setminus \widetilde{S}} |\varphi_E(y)|^2\right)^{\frac{1}{2}}  \ \le \ \sqrt{d}  M e^{-\alpha \mu  \ell} .
  \end{equation}
  Since $(H_\omega)_{\widetilde{S}}$ is self adjoint, \eqref{eq:37}
  and \eqref{eq:phiEn-1Q} together imply that one of the eigenvalues
  of $(H_\omega)_{\widetilde{S}}$, call it $E'$, satisfies
  \eqref{eq:lemmaE-E'}.
	
  Let $\psi_{E'}$ be the normalized eigenvector associated to $E'$; we
  fix its phase by requiring
  $ \langle\psi_{E'} \, , \, \car_{\widetilde{S}} \varphi_{E} \rangle
  > 0$.  To estimate
  $\left\|\car_{\widetilde{S}}\varphi_E -\psi_{E'}
  \right\|_{\ell^2(\widetilde{S})}$, note that
  \begin{multline}\label{eq:3finally}
    \left\|\car_{\widetilde{S}}\varphi_E -\psi_{E'}
    \right\|_{\ell^2(\widetilde{S})} \ \le \ \| \car_{\widetilde{S}}
    \varphi_{E} - \langle\psi_{E'} \, , \, \car_{\widetilde{S}}
    \varphi_{E} \rangle \psi_{E'} \|_{\ell^2(\widetilde{S})} + 1-
    \langle\psi_{E'} \, , \, \car_{\widetilde{S}} \varphi_{E} \rangle
    \\ \le \ 2 \| \car_{\widetilde{S}} \varphi_{E} - \langle \psi_{E'}
    \ , \ \car_{\widetilde{S}} \varphi_{E} \rangle \psi_{E'}
    \|_{\ell^2(\widetilde{S})} + \|\varphi_{E}\|_{\ell^2(\Omega
      \setminus \widetilde{S})},
  \end{multline}
  since
  $\langle\psi_{E'} \, , \, \car_{\widetilde{S}} \varphi_{E} \rangle
  \ge \| \car_{\widetilde{S}} \varphi_{E}\|_{\ell^2(\widetilde{S})} -
  \| \car_{\widetilde{S}} \varphi_{E} - \langle\psi_{E'} \, , \,
  \car_{\widetilde{S}} \varphi_{E} \rangle \psi_{E'}
  \|_{\ell^2(\widetilde{S})} $.  By~\eqref{eq:minamilemma}, $E'$ is
  non-degenerate and at least distance $2\delta$ from every other
  eigenvalue of $(H_\omega)_{\widetilde{S}}$.  Thus it follows from
  \eqref{eq:37} and \eqref{eq:lemmaE-E'} that
  \begin{equation}\label{eq:halfwayto3}
    \| \car_{\widetilde{S}} \varphi_{E} - \langle\psi_{E'} \, , \, \car_{\widetilde{S}} \varphi_{E}   \rangle \psi_{E'} \|  \ \le \  \frac{ \sqrt{d}  M e^{-\alpha \mu  \ell}}{2 \delta - 2 \sqrt{d}  M e^{-\alpha \mu  \ell } } 
    \ \le \  \  \sqrt{d}\frac{M}{\delta} e^{-\alpha \mu  \ell} \ ,
  \end{equation}
  where we have used \eqref{eq:Mbound} in the last step.  Equation
  \eqref{eq:lemmaphi-psi} follows from \eqref{eq:3finally},
  \eqref{eq:halfwayto3}, and \eqref{eq:phiEn-1Q'Q}.
\end{proof}
For a given sufficiently large length scale $L$ and a region $\Omega$
containing a cube $\Lambda_L(x_0)$ of size $L$, we will consider
partitions of $\Omega$ into smaller cubes along a decreasing sequence
of scales, depending on $L$:
\begin{equation}\label{eq:Ln}
  L_1 = \lfloor  \beta \log L \rfloor \quad \text{and} \quad  L_n = \lfloor \tfrac{1}{4} L_{n-1}   \rfloor \quad \text{for }2 \le n \le n_{\mathrm{fin}}  \ .
\end{equation}
Here $\beta > 0$ is a constant to be chosen below and
$n_{\mathrm{fin}}=n_{\mathrm{fin}}(L)$ is the largest value of $n$
such that $L_n\ge L_{\mathrm{fin}}$, where $L_{\mathrm{fin}}>1$ is an
(integer) length scale which we take sufficiently large, but fixed
independent of $\Omega$ and $L$.  In particular, we require that
$L_{\mathrm{fin}}\ge e$ and
$L_{\mathrm{fin}}>\beta \log L_{\mathrm{fin}}$, so that $L_1=\beta \log L < L$
for $L>L_{\mathrm{fin}}$.  Without loss of generality, we suppose that
$L\ge \exp(\nicefrac{L_{\mathrm{fin}}}{\beta})$ so that
$L>L_1\ge L_{\mathrm{fin}}$ and $n_{\mathrm{fin}}\ge 1$. Note that
$ L_{\mathrm{fin}} \le L_{n_{\mathrm{fin}}}< 4 L_{\mathrm{fin}}$ and
\begin{equation}\label{eq:Lnbounds}
  (\tfrac{1}{4})^{n-1} \beta \log L- \frac{4}{3} \ < \ L_n \ \le \ (\tfrac{1}{4})^{n-1} \beta \log L.
\end{equation}
As a result, we have the estimate
\begin{equation}
  \label{eq:5}
  \frac{1}{\log 4}\log \log L +\frac{
    \log \beta -\log( L_{\mathrm{fin}}+\frac{1}{3})}{\log 4} \ <  n_{\mathrm{fin}}(L) \ \le \  \frac{1 }{\log 4}\log \log L +\frac{\log \beta -\log L_{\mathrm{fin}}}{\log 4} +1 \ .
\end{equation}
In particular $n_{\mathrm{fin}}(L)= O(\log \log L)$ as
$L\rightarrow \infty$.  For future reference we note the following
\begin{Pro}\label{pro:scalesum}For every $\nu >0$ and
  $1\le n\le n_{\mathrm{fin}}$,
  \begin{equation}\label{eq:scalesum}
    e^{-\nu L_n} \ < \
    \sum_{j=1}^{n} e^{-\nu L_j} \ < \  \frac{e^{-\nu L_{n}}}{1-e^{-\nu L_{n}}} \ < \ \left ( 1 + \frac{1}{e^{\nu L_{\mathrm{fin}}}-1} \right ) e^{-\nu L_n} \ . 
  \end{equation} 
\end{Pro}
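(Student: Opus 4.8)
The plan is to exploit the geometric decay of the scales $L_n$ that is built into the recursion \eqref{eq:Ln}. Since $L_n=\lfloor\tfrac14 L_{n-1}\rfloor\le\tfrac14 L_{n-1}$, we have $L_{n-1}\ge 4L_n$, and iterating this contraction gives $L_{n-k}\ge 4^k L_n$ for $0\le k\le n-1$. Because $4^k\ge k+1$ for every $k\ge 0$ (immediate by induction, or from $4^k\ge 2^k\ge k+1$), this yields the crucial pointwise bound $L_{n-k}\ge (k+1)L_n$, and hence
\[
  e^{-\nu L_{n-k}}\ \le\ e^{-(k+1)\nu L_n}.
\]
In words: listing the summands from the last one backwards, each is dominated by the corresponding power of the largest term $e^{-\nu L_n}$. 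This is the one place where the specific contraction factor enters; any factor giving $L_{j-1}\ge 2L_j$ after flooring would suffice, and $\tfrac14$ leaves ample room.

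From here the middle inequality follows by comparison with a geometric series. Reindexing $j=n-k$,
\[
  \sum_{j=1}^n e^{-\nu L_j}=\sum_{k=0}^{n-1}e^{-\nu L_{n-k}}\ \le\ \sum_{k=0}^{n-1}e^{-(k+1)\nu L_n}\ <\ \sum_{m=1}^\infty e^{-m\nu L_n}=\frac{e^{-\nu L_n}}{1-e^{-\nu L_n}},
\]
where the strict inequality comes from discarding the positive tail of the convergent geometric series. The lower bound $e^{-\nu L_n}\le\sum_{j=1}^n e^{-\nu L_j}$ is then immediate, since $e^{-\nu L_n}$ is one of the positive summands (and it is strict as soon as $n\ge 2$, there being additional positive terms).

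For the final inequality I would use the elementary identity $\tfrac{1}{1-t}=1+\tfrac{t}{1-t}$ with $t=e^{-\nu L_n}$, which rewrites
\[
  \frac{e^{-\nu L_n}}{1-e^{-\nu L_n}}\ =\ e^{-\nu L_n}\left(1+\frac{1}{e^{\nu L_n}-1}\right).
\]
Since $1\le n\le n_{\mathrm{fin}}$ and the scales are strictly decreasing, $L_n\ge L_{n_{\mathrm{fin}}}\ge L_{\mathrm{fin}}$, so $e^{\nu L_n}-1\ge e^{\nu L_{\mathrm{fin}}}-1>0$ and therefore $\tfrac{1}{e^{\nu L_n}-1}\le\tfrac{1}{e^{\nu L_{\mathrm{fin}}}-1}$. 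Substituting this into the displayed identity gives the claimed upper bound.

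I do not expect a genuine obstacle: once the geometric comparison $L_{n-k}\ge(k+1)L_n$ is in place, everything reduces to bookkeeping with a geometric series and the monotonicity of $\ell\mapsto (e^{\nu\ell}-1)^{-1}$. The only points that require a little care are the strictness of the various inequalities (the lower bound is an equality at $n=1$, and the last bound is an equality when $L_{n_{\mathrm{fin}}}=L_{\mathrm{fin}}$), but these do not affect the use of the proposition in the sequel.
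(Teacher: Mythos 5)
Your proof is correct and follows essentially the same route as the paper: both arguments rest on the contraction $L_{n-k}\ge 4^k L_n$ inherited from the recursion \eqref{eq:Ln}, compare the sum to a geometric series in $e^{-\nu L_n}$, and conclude via the monotonicity of $\ell\mapsto (e^{\nu\ell}-1)^{-1}$ together with $L_n\ge L_{\mathrm{fin}}$. Your observations about borderline non-strictness (equality at $n=1$ in the lower bound, and at $L_{n_{\mathrm{fin}}}=L_{\mathrm{fin}}$ in the last bound) are accurate but harmless, and the paper glosses over them in the same way.
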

\begin{proof} Since $L_j \ge 4 L_{j+1}$, we see that
  $L_j \ge 4^{n-j} L_n$ for $1\le j \le n$. Thus
  $$\sum_{j=1}^{n} e^{-\nu L_j} \ \le \ \sum_{j=0}^{n-1} e^{-\nu
    4^{j} L_{n}} \ < \ \sum_{j=0}^{\infty} e^{-\nu 4^{j} L_n} \ < \
  e^{-\nu L_n} \sum_{j=0}^\infty e^{-3 j \nu L_n} \ < \ e^{-\nu L_n}
  \sum_{j=0}^\infty e^{- j \nu L_n} \ , $$ from which the upper bounds
  follow. The lower bound is clear.
\end{proof}
We now fix a region $\Omega\subset \Z^d$ and a length scale $L$. For each
generation $1\leq n\leq n_{\mathrm{fin}}$, let
$$\mathcal{G}^n \ := \ \left \{ \Lambda_{L_n}(L_n\mb{k}) \cap \Omega \ :
  \ \mb{k}\in \Z^d \text{ and } \Lambda_{L_n}(L_n\mb{k}) \cap \Omega\neq
  \emptyset \right \} \ , $$
\begin{figure}
  \centering{}
  \includegraphics[width=0.5\textwidth]{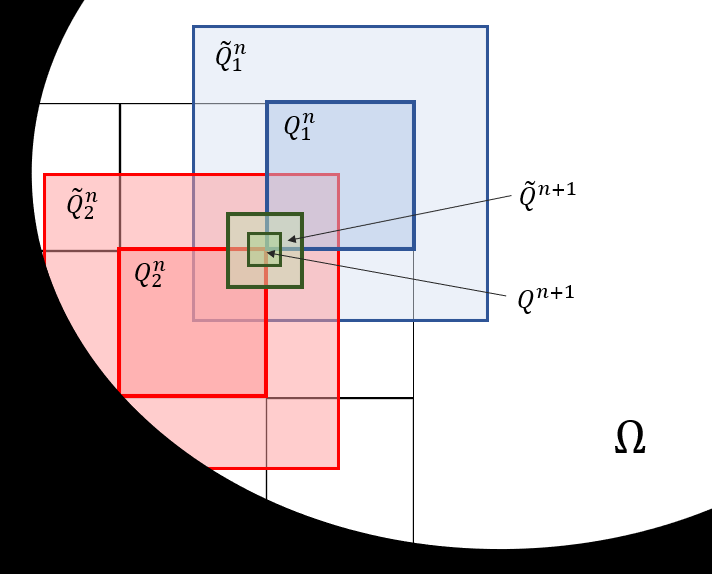} \caption{\label{Fig:generations}
    Cubes of generation $n$ and $n+1$.  Note that the neighboring
    cubes $Q_1^n$ and $Q_2^n$ do not overlap, but the extended cubes
    $\widetilde{Q}_1^n$ and $\widetilde{Q}_2^n$ do.}
\end{figure}
which is the set of cubes centered on $L_n \Z^d$ of side length $L_n$
which overlap $\Omega$ (see Figure \ref{Fig:generations}). (We shall
refer to the elements of $\mathcal{G}^n$ as ``cubes,'' although those
that intersect the boundary of $\Omega$ consist only of a portion of a
cube.)
Note that a cube $Q\in \Gc^n$ and its expansion $\widetilde{Q}$ have
volumes
\begin{equation}
  \label{eq:sizeofQ}
  \hskip-7cm
  \# Q\ \le \ L_n^d \quad \text{ and } \quad \# \widetilde{Q} \ \le \ (2\alpha +1)^d L_n^d \ ,
\end{equation}
with equality unless the cube $Q$ or its expansion $\widetilde{Q}$
intersect the boundary of $\Omega$. With these definitions, we see
that
\begin{enumerate}
\item If $Q,Q'\in \Gc^n$ and $Q\neq Q'$, then $Q\cap Q'=\emptyset$.
\item $\D \Omega \ = \ \bigcup_{Q\in \Gc^n} Q $.
\end{enumerate}
Given a region $S\subset \Omega$, let
\begin{equation*}
  \hskip-7cm \Gc^n(S) \ := \ \left \{ Q\in \Gc^n \ : \ Q\cap S \neq \emptyset
  \right \}.
\end{equation*}
We note that if $S=\Lambda_L(x) \subset \Omega$ is a cube then we have the
bounds:
\begin{equation}\label{eq:Njnbound}
  \hskip-8cm  \left  (\frac{L}{L_n}\right)^d \ \le \  \# \mathcal{G}^n(\Lambda_L(x)) \ \le \  \left( \frac{L}{L_n}+2 \right )^d
\end{equation}
% constants $0<c_2<c_1$, depending only on the dimension $d$, such
% that
% \begin{equation}\label{eq:Njnbound}
%	c_1 (\nicefrac{L}{L_n})^d \ \le \ \# \mathcal{G}^n(\Lambda_L(x)) \ \le \ c_2 (\nicefrac{L}{L_n})^d \ . 
%\end{equation} 
% For generation $n=0$, we take
%$$\Gc^0 \ = \ \Gc_{\mb{j}}^0 \ = \ \{\Lambda_L \}$$
%for each $\mb{j}\in \{0,1\}^d$.
%
Given a cube $Q \in \Gc^n$ of generation $n$ and a realization $\omega$
of the random potential, let
\begin{equation*}
  M_\omega(Q) \ = \ \max_{E\in  \Sigma(\widetilde{Q})} \max_{y
    \in \mathcal{C}(\varphi_E)}  \left ( \sum_{x\in \widetilde{Q}}
    |\varphi_E(x)|^2 e^{2\mu |x-y|} \right )^{\frac{1}{2}} \ ,
\end{equation*}
where
$\Sigma(\widetilde{Q})=\Ec((H_\omega)_{\widetilde{Q}})\cap
I_{\text{AL}}$ and $\varphi_E$ is the $\ell^2$-normalized eigenvector
of of $(H_\omega)_{\widetilde{Q}}$ corresponding to the eigenvalue
$E\in \Sigma(\widetilde{Q})$. For $n\ge 1$, we say that $Q\in \Gc^n$ is {\it
  $\epsilon$-good} (for a given realization $\omega$) if the following
conditions are satisfied:
\begin{enumerate}
\item $M_\omega(Q)\le e^{\epsilon \lfloor \frac{1}{4} L_{n}\rfloor}$;
\item Eq.\ \eqref{eq:minamilemma} holds for $\widetilde{Q}$ with
  $\delta= e^{-\frac{\epsilon}{2} L_{n}}$, i.e.,
  \begin{equation}
    \label{eq:36}
    \max_{E\in\sigma \left ((H_\omega)_{\widetilde{Q}} \right )\cap
      I_{\mathrm{AL}}}\tr\car_{E+[-e^{-\frac{\epsilon}{2} L_{n}},e^{-\frac{\epsilon}{2} L_{n}}]}
    [(H_\omega)_{\widetilde{Q}}] \ \leq \ 1. 
  \end{equation}
\end{enumerate}
Note that the exponent in (1) involves
$\lfloor \frac{1}{4}L_n \rfloor=L_{n+1}$, which is the next length
scale. By a bound similar to \eqref{eq:phiEn-1Q'Q}, if $Q\in \Gc^n$ is an
$\epsilon$-good cube and $\ell >0$, then
\begin{equation}\label{eq:awayfromcenter}
  \|\car_{\{x:|x-x_E|\ge \ell\}} \varphi_E\|_{\ell^2(\widetilde{Q})} \ \le \ e^{-\mu \ell + \epsilon \lfloor \frac{1}{4} L_n \rfloor}
\end{equation}
for any $E\in \Sigma(\widetilde{Q})$ and $x_E\in \mathcal{C}(\varphi_E)$.
The cube $\widetilde{Q}$ is called {\it $\epsilon$-bad} (for a given
realization $\omega$) if it is not $\epsilon$-good.

By iterating Lemma \ref{le:1}, we can obtain the following:
\begin{Le}\label{le:iteration} Let
  $0<\epsilon \le \frac{\alpha}{8}\mu$, let $p>0$, and let
  $\beta \ge \frac{4}{\alpha \mu} (p+d) $. Let $[a,b]\subset I_{\mathrm{AL}}$ be a
  compact interval with $r=\dist([a,b],I_{\mathrm{AL}}^c) >0$, let
  $L_{\mathrm{fin}}$ be such that
  $L_{\mathrm{fin}}>\beta \log L_{\mathrm{fin}}$,
  \begin{subequations}\label{eq:Lfinbounds}
    \begin{equation}
      L_{\mathrm{fin}} \ \ge \ \max \left ( 4, \ \frac{8}{5\alpha\mu}\log 2, \  \frac{8}{7\alpha\mu} \log \tfrac{4\sqrt{d}}{r} \right ) \,
    \end{equation} and
    \begin{equation}%\label{eq:Lfinbounds}
      \sup_{L\ge  \frac{1}{32} L_{\mathrm{fin}} }  (4L+1)^{\frac{d}{2}}e^{- \mu L} \ \le \ \frac{1}{8\sqrt{d}},
    \end{equation}
  \end{subequations}
  and let $L \ge \exp(\nicefrac{L_{\mathrm{fin}}}{\beta})$.  If $\Omega \subset \Z^d$
  is a region and $\Lambda_L(x_0)\cap \Omega\neq \emptyset$ for some
  $x_0$, then, with probability at least
  $1-(e^{\frac{\alpha\mu}{8}} A_{\mathrm{AL}})^2L^{-p}$, to each eigenvalue
  $$E\ \in \ \Sigma(\Lambda_L(x_0))\ := \ \{ E \in
  \Ec((H_\omega)_\Omega)\cap [a,b] \ : \
  \mathcal{C}(\varphi_E)\cap\Lambda_L(x_0)\neq \emptyset \}$$ are
  associated finite sequences $(Q_E^j)_{j=1}^{m_E}$,
  $(\psi_E^j)_{j=0}^{m_E}$, and $(\lambda_E^j)_{j=0}^{m_E}$ where:
  \begin{enumerate}
  \item $\psi_E^0=\varphi_E$ and $\lambda_E^0= E$;
  \item if $m_E\ge 1$, then, for $j=1,\ldots,m_E$,
    \begin{enumerate}
    \item the cube $Q_E^{j}$ is $\epsilon$-good and
      $\mathcal{C}(\psi_E^{j-1})\cap Q_E^{j} \neq \emptyset$;
    \item if $j\ge 2$, then
      $\widetilde{Q}_E^{j} \subset \widetilde{Q}_E^{j-1}$;
    \item $\psi_E^{j}$ is an eigenfunction of
      $(H_\omega)_{\widetilde{Q}_E^{j}}$ with eigenvalue
      $\lambda_E^{j}$;
    \item we have
      \begin{align}|\lambda_E^j - \lambda_E^{j-1}| \ &\le \ 2 \sqrt{d} \, e^{-(\alpha \mu- \epsilon) L_j} , \label{eq:lambdaj}\\
        \intertext{and} \|\psi_{E}^j-\car_{\widetilde{Q}_E^j}
        \psi_E^{j-1} \|_{\ell^2(\widetilde{Q}_E^j)} \ &\le \ 3
                                                        \sqrt{d} \,
                                                        e^{-(\alpha
                                                        \mu-3\epsilon
                                                        ) L_j} \
                                                        ; \label{eq:psij}
      \end{align}
    \end{enumerate}
  \item either $m_E=n_{\mathrm{fin}}$ or every cube $Q\in \Gc^{m_E+1}$
    with $Q\cap \mathcal{C}(\psi_E^j)\neq \emptyset$ and
    $\widetilde{Q}\subset \widetilde{Q}_E^{m_E}$ is $\epsilon$-bad.
  \end{enumerate}
  Furthermore, taking $Q_E^0 =\widetilde{Q}_E^0=\Omega$ and $L_0=L$,
  we have
  \begin{enumerate}[label=(\alph*)]
  \item given integers $0\le j_E\le m_E$ for each
    $E\in \Sigma(\Lambda_L(x_0))$, the map
    $ E \mapsto (Q_{j_E},\psi_E^{j_E},\lambda_{j_E})$ is one-to-one,
  \item for any $y\in \mathcal{C}(\psi_E^{m_E})$,
    \begin{equation}\label{eq:centersclose}
      \mathcal{C}(\psi_E^j)\subset \{ x : |x-y|< \frac{\alpha}{16} L_{m_E}\} \subset \widetilde{Q}_{E}^{m_E} \ ,
    \end{equation}
    for each $j=0,\ldots,m_E$, and
  \item if $(1+\alpha)\nu < \alpha \mu -4\epsilon$, then
    \begin{equation}\label{eq:improvedbound}
      \left (\sum_{x\in \widetilde{Q}_E^{j}} e^{2\nu|x-y|}|\psi_E^{j}(x)|^2\right )^{\frac{1}{2}} \ \le \  \left (1 +  \frac{1+3\sqrt{d}}{e^{\epsilon L_{m_E}}-1}  \right ) e^{(\frac{\alpha}{8}+ \frac{\epsilon}{4})\nu L_{m_E}} 
    \end{equation}
    for any $y\in \mathcal{C}(\psi_E^j)$.
  \end{enumerate}
\end{Le}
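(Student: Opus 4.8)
The plan is to isolate a single high-probability event on which the whole construction is deterministic, and then to build the three sequences by iterating Lemma~\ref{le:1} one generation at a time. First I would set $\varepsilon=(e^{\alpha\mu/8}A_{\mathrm{AL}})^2L^{-p}$ and apply the SULE estimate (A3) with $S=\Lambda_L(x_0)\cap\Omega$: with probability at least $1-\varepsilon$ (which is exactly the asserted probability), every $\varphi_E$ with $E\in\Sigma(\Lambda_L(x_0))$ satisfies $\big(\sum_x e^{2\mu|x-x_E|}|\varphi_E(x)|^2\big)^{1/2}\le A_{\mathrm{AL}}(\#S/\varepsilon)^{1/2}\le L^{(d+p)/2}e^{-\alpha\mu/8}$. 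Using $L_1=\lfloor\beta\log L\rfloor$ together with the hypotheses $\beta\ge\frac{4}{\alpha\mu}(p+d)$ and $\epsilon\le\frac{\alpha\mu}{8}$, this quantity is bounded by $e^{\epsilon L_1}$, which by the lower bounds imposed on $L_{\mathrm{fin}}$ is in turn $\le\delta\,e^{\alpha\mu L_1}/2\sqrt d$ with $\delta=e^{-\frac{\epsilon}{2}L_1}$; hence hypothesis~\eqref{eq:Mbound} of Lemma~\ref{le:1} holds at the coarsest scale. This is the only place randomness enters: goodness of the cubes is never assumed, and an $\epsilon$-bad cube merely halts the construction, producing the alternative in item~(4).

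On this event I would construct the sequences by induction on the generation $j$, starting from $\psi_E^0=\varphi_E$, $\lambda_E^0=E$, $\widetilde Q_E^0=\Omega$. Given $\psi_E^{j-1}$, pick $y\in\mathcal{C}(\psi_E^{j-1})$ and let $Q_E^j\in\Gc^j$ be the generation-$j$ cube containing it. If $Q_E^j$ is $\epsilon$-bad or $j>n_{\mathrm{fin}}$, stop and set $m_E=j-1$. If $Q_E^j$ is $\epsilon$-good, apply Lemma~\ref{le:1} on the region $\widetilde Q_E^{j-1}$ with cube $Q_E^j$ and scale $\ell=L_j$: the Minami input~\eqref{eq:minamilemma} is precisely condition~(2) of goodness of $Q_E^j$, while the norm hypothesis~\eqref{eq:Mbound} for $\psi_E^{j-1}$ comes from condition~(1) of goodness of the previous cube, namely $M_\omega(Q_E^{j-1})\le e^{\epsilon L_j}$ (for $j=1$ it comes from the event of the previous paragraph). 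Substituting $M\le e^{\epsilon L_j}$ and $\delta=e^{-\frac{\epsilon}{2}L_j}$ into~\eqref{eq:lemmaE-E'}--\eqref{eq:lemmaphi-psi} yields~\eqref{eq:lambdaj}--\eqref{eq:psij} at once. One must check $\lambda_E^{j-1}\in I_{\mathrm{AL}}$ so that it is counted in $M_\omega(Q_E^{j-1})$, which is where the lower bounds on $L_{\mathrm{fin}}$ enter, keeping the accumulated drift $|\lambda_E^{j-1}-E|$ below $r=\dist([a,b],I_{\mathrm{AL}}^c)$. The nesting $\widetilde Q_E^{j}\subset\widetilde Q_E^{j-1}$ then exploits the factor-$4$ gap $L_j=\lfloor\tfrac14 L_{j-1}\rfloor$, once one knows that $y$ sits well inside $\widetilde Q_E^{j-1}$, a fact I would extract from the concentration estimate~\eqref{eq:awayfromcenter} for the good cube $Q_E^{j-1}$.

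For the three supplementary claims I would argue as follows. For injectivity~(a): since cubes of distinct generations have distinct side lengths, equal triples force a common generation $j$, a common expanded cube $\widetilde Q$, and a common eigenvector $\psi$; telescoping~\eqref{eq:psij} via Proposition~\ref{pro:scalesum} then places both $\car_{\widetilde Q}\varphi_E$ and $\car_{\widetilde Q}\varphi_{E'}$ within $O(e^{-(\alpha\mu-3\epsilon)L_j})$ of $\psi$, and since each $\varphi$ carries all but exponentially small mass inside $\widetilde Q$, this forces $|\langle\varphi_E,\varphi_{E'}\rangle|$ near $1$, contradicting orthogonality of eigenvectors of the self-adjoint $(H_\omega)_\Omega$. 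For the centers-close bound~(b): combine the exponential $\ell^2$-closeness of consecutive $\psi_E^j$ (again telescoping~\eqref{eq:psij}) with the concentration~\eqref{eq:awayfromcenter}; closeness plus concentration forbids centers from separating by more than $\frac{\alpha}{16}L_{m_E}$, and an elementary cube computation places this ball inside $\widetilde Q_E^{m_E}$, giving~\eqref{eq:centersclose}. For the improved weighted bound~(c): compare $\psi_E^j$ to $\psi_E^{m_E}$ through the same telescoping, bound the $\nu$-weighted norm of $\psi_E^{m_E}$ by splitting at radius $\sim L_{m_E}$ and using~\eqref{eq:awayfromcenter} on the tail, and resum with Proposition~\ref{pro:scalesum}; the hypothesis $(1+\alpha)\nu<\alpha\mu-4\epsilon$ is exactly the margin that makes $e^{2\nu|x-y|}$ integrable against the decay across the relevant scales.

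The step I expect to be the main obstacle is the simultaneous bookkeeping of the construction, the nesting of the expanded cubes, and claim~(b): choosing $Q_E^j$ with $\widetilde Q_E^j\subset\widetilde Q_E^{j-1}$ presupposes that the center of $\psi_E^{j-1}$ has not drifted to the boundary of $\widetilde Q_E^{j-1}$, which is itself part of~(b). I would therefore not prove~(b) after the fact but fold the closeness and concentration estimates into the inductive hypothesis, establishing the construction and~\eqref{eq:centersclose} in one induction. The second delicate point is keeping every ``up to exponentially small'' statement uniform in $j$ and $L$; Proposition~\ref{pro:scalesum} is the tool that renders the accumulated telescoping errors summable with explicit, scale-independent constants.
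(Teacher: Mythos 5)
Your proposal reproduces the paper's argument in all essentials: the only probabilistic input is (A3) applied to $S=\Lambda_L(x_0)\cap\Omega$ with failure probability $(e^{\alpha\mu/8}A_{\mathrm{AL}})^2L^{-p}$; the sequences are built by iterating Lemma~\ref{le:1} down the scales, with the Minami hypothesis \eqref{eq:minamilemma} supplied by goodness condition (2) of $Q_E^j$ and the mass bound \eqref{eq:Mbound} by goodness condition (1) of the previous cube; the accumulated drift of $\lambda_E^j$ is kept below $r$ so the eigenvalues stay in $I_{\mathrm{AL}}$; and claims (a)--(c) follow by telescoping \eqref{eq:psij} with Proposition~\ref{pro:scalesum}. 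In particular, your decision to carry the center-drift control inside the induction is exactly how the paper resolves the chicken-and-egg you identify: the recursion simultaneously establishes $\mathcal{C}(\psi_E^{j+1})\subset\{x:|x-x_E^j|<\tfrac{\alpha}{4}L_{j+1}\}$ (the paper's \eqref{eq:xj}), and the nesting $\widetilde{Q}_E^{j+1}\subset\widetilde{Q}_E^{j}$ is deduced from this estimate one generation back.

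Two local repairs are needed. First, your halting rule stops the recursion as soon as the single generation-$j$ cube containing your chosen center $y\in\mathcal{C}(\psi_E^{j-1})$ is $\epsilon$-bad. But $\mathcal{C}(\psi_E^{j-1})$ may meet several cubes of $\Gc^j$, and conclusion (3) asserts that \emph{every} cube $Q\in\Gc^{m_E+1}$ with $Q\cap\mathcal{C}(\psi_E^{m_E})\neq\emptyset$ and $\widetilde{Q}\subset\widetilde{Q}_E^{m_E}$ is bad; as written, your construction could halt while a good cube meeting the center set is still available, so it does not deliver (3). You must stop only when all such cubes are bad, and otherwise select a good one, as the paper does. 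Second, the inequality $L^{(d+p)/2}e^{-\alpha\mu/8}\le e^{\epsilon L_1}$ is false whenever $\epsilon<\frac{\alpha}{8}\mu$: the hypothesis $\beta\ge\frac{4}{\alpha\mu}(p+d)$ yields only $L^{(d+p)/2}e^{-\alpha\mu/8}\le e^{\frac{\alpha}{8}\mu L_1}$ (the paper's \eqref{eq:initial}), and $e^{\epsilon L_1}$ is the \emph{smaller} quantity, so your chain of inequalities starts from an unjustified bound. The fix is the paper's: run the entire recursion with the uniform estimate \eqref{eq:recursiveM}, namely $M_n\le e^{\frac{\alpha}{8}\mu L_{n+1}}$, which covers both the initial case (via \eqref{eq:initial}) and the good-cube case (via $M_\omega(Q_E^{n})\le e^{\epsilon L_{n+1}}$ and $\epsilon\le\frac{\alpha}{8}\mu$); this exponent is still small enough to verify \eqref{eq:Mbound} at every scale under the stated bounds on $L_{\mathrm{fin}}$, so the remainder of your argument goes through unchanged.
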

\begin{Rem} 1) Since $L_{\mathrm{fin}} \ge \frac{8}{5\alpha \mu} \log 2$,
  Proposition \ref{pro:scalesum} implies that
  \begin{equation}\label{eq:scalesum2}
    \sum_{j=1}^{n} e^{-\nu L_j} \ < 2 e^{-\nu L_n} \ ,
  \end{equation}
  whenever $\nu > \frac{5\alpha}{8}\mu$. 2) Using eqs. \eqref{eq:lambdaj}
  and \eqref{eq:scalesum2}, we see that, for $j=1,\ldots,m_E$,
  \begin{equation}\label{eq:lambdajinI}
    |\lambda_E^j-E| \ \le \ \sum_{k=1}^j |\lambda_E^k-\lambda_E^{k-1}| \ \le \ 4\sqrt{d}  e^{-\frac{7\alpha}{8} \mu L_j} \ < \ r \ ,
  \end{equation}
  since
  $L_{j}\ge L_{\mathrm{fin}}>\frac{8}{7\mu} \log
  \tfrac{4\sqrt{d}}{r}$.  Thus each eigenvalue is in the localization
  regime, $\lambda_E^j\in I_{\mathrm{AL}}$.
\end{Rem}
\begin{proof} Let $\Lambda=\Lambda_L(x_0)$ and
  $\Sigma = \Sigma(\Lambda_L(x_0)).$ By (A3), with probability at
  least $1-A_{\mathrm{AL}}^2 e^{\frac{\alpha \mu}{4}}L^{-p}$, we have
  \begin{equation}\label{eq:initial}
    \max_{E\in \Sigma} \max_{y\in \mathcal{C}(\varphi_E)\cap \Lambda} \left ( \sum_{x\in \Omega} |\varphi_E(x)|^2 e^{2\mu |x-y|} \right )^{\frac{1}{2}} \ \le \ L^{\frac{d+p}{2}} e^{-\frac{\alpha}{8}\mu} \ \le \ e^{\frac{\alpha}{8}\mu L_1} .
  \end{equation}
  On the event that eq.\ \eqref{eq:initial} holds, for each $E\in \Sigma$
  we will construct sequences $(Q_E^j)_{j=1}^{m_E}$
  ,$(\psi_E^j)_{j=0}^{m_E}$, $(\lambda_E^j)_{j=0}^{m_E}$ satisfying
  (1)-(4) as well as localization centers
  $x_E^j\in \mathcal{C}(\psi_E^j)$, for $j=0,\ldots,m_E$, such that
  $x_E^j\in Q_E^{j+1}$ and
  \begin{equation}\label{eq:xj}
    \mathcal{C}(\psi_E^{j+1}) \ \subset\ \left \{x : |x-x_E^{j}|<  \tfrac{\alpha}{4}L_{j+1} \right \} \ ,
    % |x_E^{j}-x_E^{j-1}| \ < \ \frac{L_j}{8} \ \label{eq:xj}
  \end{equation}
  for $j=0,\ldots,m_E-1$.
	
  Fix $E\in \Sigma$ and let $\psi_E^0=\varphi_E$, $\lambda_E^0=E$.  For
  ease of notation, we take $Q_E^0=\widetilde{Q}_E^0=\Omega$. We
  define the remainder of the sequence recursively. Let $n\ge 0$ and
  suppose we have already found $(Q_E^j)_{j=0}^{n}$
  ,$(\psi_E^j)_{j=0}^{n}$, $(\lambda_E^j)_{j=0}^n$ and
  $(x_E^j)_{j=0}^{n-1}$ with the desired properties. We note that
  \begin{equation}\label{eq:recursiveM}
    M_n \ := \ \left ( \sum_{x\in \widetilde{Q}_E^{n}} |\psi_E^n(x)|^2 e^{2\mu |x-x_E^n|} \right )^{\frac{1}{2}} \ \le \  e^{\frac{\alpha}{8}\mu L_{n+1} } \ ;
  \end{equation}
  for $n=0$ this follows from \eqref{eq:initial}, while, for $n\ge 1$,
  this holds since $Q_E^n$ is $\epsilon$-good and
  $\epsilon \le \frac{\alpha}{8}\mu$.  If $n=n_{\mathrm{fin}}$ or every
  cube $Q\in \Gc^{n+1}$ with $\mathcal{C}(\psi_E^n)\cap Q \neq \emptyset$ is
  $\epsilon$-bad, then we choose $x_E^n$ to be an arbitrary element of
  $\mathcal{C}(\psi_E^n)$, set $m_E=n$, and there is nothing further
  to show.  Otherwise, pick an $\epsilon$-good cube
  $Q_E^{n+1}\in \Gc^{n+1}$ with
  $Q_E^{n+1} \cap \mathcal{C}(\psi_E^{n})\neq \emptyset$ and pick
  $x_E^n\in Q_E^{n+1} \cap \mathcal{C}(\psi_E^{n})$.  Since $Q_E^{n+1}$ is
  $\epsilon$-good, eq.\ \eqref{eq:minamilemma} holds with
  $\delta=e^{-\epsilon L_{n+1}}$.  Furthermore, eq.\ \eqref{eq:Mbound}
  follows from eqs.\ \eqref{eq:recursiveM}, since
	$$ M_n \ \le \ e^{\frac{\alpha}{8}\mu L_{n+1} } \ \le \ \frac{e^{\frac{5\alpha}{8} \mu L_{n+1}}}{8\sqrt{d}( \frac{5\alpha}{2} L_{n+1})^{\nicefrac{d}{2}} } \delta  e^{\frac{3\alpha}{8}\mu L_{n+1}} \  \le \ \frac{\delta e^{\alpha \mu L_{n+1}}}{8\sqrt{5}} \ < \ \frac{\delta e^{\alpha \mu L_{n+1}}}{2\sqrt{d}} \ , $$
	where we have used \eqref{eq:Lfinbounds} and the fact that
        $\alpha \ge \frac{1}{2}$.  Hence by Lemma \ref{le:1}, there is an
        eigenfunction $\psi_{E}^{n+1}$ on $\widetilde{Q}_{E}^{n+1}$
        with eigenvalue $\lambda_E^{n+1}$ such that \eqref{eq:lambdaj}
        and \eqref{eq:psij} hold for $j=n+1$.
	
	It remains to show that
        $\widetilde{Q}_E^{n+1}\subset \widetilde{Q}_E^n$ and that eq.\
        \eqref{eq:xj} holds for $j=n$.  In fact eq. \eqref{eq:xj}
        (with $j=n $) directly implies that
        $\widetilde{Q}_E^{n+1}\subset \widetilde{Q}_E^{n}$.  Indeed, since
        $x_E^{n}\in Q_E^{n+1}$ and $x_E^{n-1}\in Q_E^{n}$, we
        have \begin{multline*} \widetilde{Q}_{E}^{n+1}\ \subset\ \{ x
          : |x-x_E^{n}| \le (1+\alpha)L_{n+1}\} \ \subset \ \{ x :
          |x-x_E^n| \le (\tfrac{1}{4} + \tfrac{\alpha}{4} )L_{n} \} \\
          \subset \ \{ x : |x-x_E^{n-1}| < (\tfrac{1}{4} + \tfrac{\alpha}{2} )
          L_{n} \}\ \subset \ \{ x : |x-x_E^{n-1}| < \alpha L_{n} \} \ \subset \
          \widetilde{Q}_E^n \ ,
	\end{multline*}
	where we have noted that $\alpha \ge \frac{1}{2}$ in the last step.
        To verify eq.\ \eqref{eq:xj} for $j=n$, consider the
        $\ell^2$-norm of $\psi^{n+1}_E$ on the set $S^c$, with
        $S=\{x : |x-x_E^n|< \tfrac{\alpha}{4}L_{n+1} \}$.  By
        \eqref{eq:psij} and \eqref{eq:awayfromcenter}, we have
	\begin{align*}
          \| \car_{S^c} \psi_E^{n+1}\|_{\ell^2(\widetilde{Q}_E^{n+1})} \ \le& \ \| \psi_E^{n+1} - \car_{\widetilde{Q}_E^{n+1}} \psi_E^n\|_{\ell^2(\widetilde{Q}_E^{n+1})} \ + \ \| \car_{S^c} \psi_E^n \|_{\ell^2(\widetilde{Q}_E^{n})} \\
          \le& \ 3\sqrt{d}e^{-(\alpha \mu-3\epsilon)L_{n+1}} \ + \ e^{-(\frac{\alpha}{4}\mu-\epsilon)L_{n+1}} \\ \le& \ (3\sqrt{d} e^{-\frac{\alpha}{2}\mu L_{\mathrm{fin}}}  +1 ) e^{-\frac{\alpha}{8}\mu L_{n+1}} \ .
	\end{align*}
	Thus, by \eqref{eq:Lfinbounds} and the fact that $\alpha \ge
        \frac{1}{2}$,
	\begin{equation}\label{eq:l2out}
          \| \car_{S^c} \psi_E^{n+1}\|_{\ell^2(\widetilde{Q}_E^{n+1})}  \ \le \  \frac{3\cdot 2^{-\nicefrac{4}{5}} \sqrt{d}  +1}{8\sqrt{d}} \frac{1}{( \frac{\alpha }{2} L_{n+1} + 1)^{\nicefrac{d}{2}}} \ < \ \frac{3}{8} \frac{1}{\sqrt{\#S}}   \ 
	\end{equation} 
	and
	\begin{equation}\label{eq:l2in}
          \| \car_{\{|x-x_E^n|< \frac{1}{8}L_{n+1}\}} \psi_E^{n+1}\|_{\ell^2(\widetilde{Q}_E^{n+1})} \ > \ 1 - \frac{3}{8} \frac{1}{\sqrt{\#S}} \ \ge \ \frac{5}{8}.
	\end{equation}
	For the $\ell^\infty$ norms, eqs.\ \eqref{eq:l2out} and
        \eqref{eq:l2in} imply
	\begin{equation}\label{eq:linfinity}
          \| \car_{S^c} \psi_E^{n+1}\|_{\ell^\infty(\widetilde{Q}_E^{n+1})} \ < \ \frac{3}{8} \frac{1}{\sqrt{\#S}} \ < \ \frac{5}{8} \frac{1}{\sqrt{\#S}} \ < \  \| \car_{S} \psi_E^{n+1}\|_{\ell^\infty(\widetilde{Q}_E^{n+1})} \ .
	\end{equation}
	In particular
        $\mathcal{C}(\psi_E^{n+1}) \subset S= \{x :|x-x_E^n| <
        \frac{\alpha}{4}L_{n+1} \}.$
	
	To see that maps of the form
        $E \mapsto (Q_{E}^{j_E},\psi_{E}^{j_E},\lambda_E^{j_E})$ are
        one-to-one, we note that, by eqs.\ \eqref{eq:l2out},
        \eqref{eq:psij} and \eqref{eq:scalesum2},
	\begin{align*}
          \|\psi_E^0 - \car_{\widetilde{Q}_{E}^{j_E}}\psi_E^{j_E}\|_{\ell^2(\Omega)} \
          \le& \ 	\sum_{k=0}^{j_E-1} \left (\| \car_{(\widetilde{Q}_E^{k+1})^c} \psi_E^k \|_{\ell^2(\widetilde{Q}_E^k)} \ + \
               \|\car_{(\widetilde{Q}_E^{k+1})}  \psi_E^{k} -   \psi_E^{k+1}  \|_{\ell^2(\widetilde{Q}_E^{k+1})} \right )  
          \\ \le&  \   \  \sum_{k=j}^{m_E-1} \left ( e^{-(\alpha \mu -\epsilon) L_{j+1}} \ + \ 3\sqrt{d} e^{-(\alpha \mu -3\epsilon)L_{k+1}} \right ) \\
          \le &  \ (6 \sqrt{d}+2 ) e^{-\frac{5\alpha }{8} \mu  L_{m_E}} \ \le \ 
		\frac{3\sqrt{d}+1}{4\sqrt{d}(\frac{5\alpha}{2}L_{m_E}+1)^{\nicefrac{d}{2}}} \ \le \ \frac{1}{\sqrt{6}} \ < \ \frac{1}{2} \ ,
	\end{align*}
	where we have used \eqref{eq:Lfinbounds} and the facts that
        $L_{m_E}\ge L_{\mathrm{fin}}\ge 4$ and $\alpha \ge \frac{1}{2}$ in
        the last step.  Since
        $\|\psi_E^0-\psi_{E'}^0\|_{\ell^2(\Omega)} = \sqrt{2} $ for
        distinct eigenvalues $E$ and $E'$, we conclude that each such
        map is one-to-one.
	
	A similar calculation leads to eq.\
        \eqref{eq:centersclose}. Let $y\in \mathcal{C}(\psi_E^{m_E})$ and
        set $T=\{x : |x-y|<\frac{\alpha}{16}L_{m_E} \}$.  By \eqref{eq:xj},
        $|y-x_E^{m_E-1}| <\frac{\alpha}{4} L_{m_E}$ and thus
        $T\ \subset \ \{ x : |x-x_E^{m_E-1}| < \frac{5}{16}L_{m_E} \} \ \subset \ \{
        x : |x-x_E^{m_E-1}| < \frac{1}{2}L_{m_E} \} \ \subset \
        \widetilde{Q}_E^{m_E} $ since $x_E^{m_E-1}\in Q_{E}^{m_E}$.  For
        the $\ell^2$ norm of $\psi_E^j$ on $T^c$, we have
	\begin{align*}
          \| \car_{T^c} \psi_E^j \|_{\ell^2(\widetilde{Q}_E^j)} \ 
          \le& \ 	\sum_{k=j}^{m_E-1} \left (\| \car_{(\widetilde{Q}_E^{k+1})^c} \psi_E^k \|_{\ell^2(\widetilde{Q}_E^k)} \ + \
               \|\car_{(\widetilde{Q}_E^{k+1})}  \psi_E^{k} -   \psi_E^{k+1}  \|_{\ell^2(\widetilde{Q}_E^{k+1})} \right ) \ + \ \| \car_{T^c} \psi_E^{m_E}\|_{\ell^2(\widetilde{Q}_E^{m_E})} 
          \\ \le&  \   (6 \sqrt{d}+2) e^{-\frac{5\alpha }{8} \mu  L_{m_E}} \  + \ e^{\frac{\epsilon}{4} L_{m_E}} e^{-\frac{\alpha}{16}\mu L_{m_E}} \\
          \\ \le& \ \left ( 1 + (6\sqrt{d} +2) e^{-\frac{19\alpha }{32} \mu  L_{m_E}}  \right ) e^{-\frac{\alpha}{32}\mu L_{m_E}} 
	\end{align*}
	by eq.\ \eqref{eq:l2out}, eq.\ \eqref{eq:psij} and
        Prop. \ref{pro:scalesum}.  Thus, by \eqref{eq:Lfinbounds} and
        the fact that $\alpha \ge \frac{1}{2}$,
	\begin{equation*}
          \| \car_{T^c} \psi_E^j \|_{\ell^2(\widetilde{Q}_E^j)} \ \le \ 
          \left (1 +  \frac{3\sqrt{d}+1} {4 \sqrt{d} (\frac{19\alpha}{8}L_{m_E}+1)^{\nicefrac{d}{2}}} \right) \frac{1}{8\sqrt{d}}
          \frac{1}{(\frac{\alpha}{8}L_{m_E}+1)^{\nicefrac{d}{2}}}  \ < \  \frac{1}{4} \frac{1}{\sqrt{\# T}}.
	\end{equation*}
	We conclude that
        $ \|1_{T^c}\psi_E^j\|_{\ell^\infty(\widetilde{Q}_E^j)} <
        \|1_{T}\psi_E^j\|_{\ell^\infty(\widetilde{Q}_E^j)}$. Thus,
        $\mathcal{C}(\psi_E^j)\subset T$, which is eq.\
        \eqref{eq:centersclose}.

        Finally, to prove \eqref{eq:improvedbound}, let
        $0\le j\le m_E$ and let $y_j\in \mathcal{C}(\psi_E^j)$.  Then, by
        eq.\ \eqref{eq:centersclose},
        \begin{align*}
          \left \| e^{\nu|\bullet -y_j|}\psi_E^j \right \|_{\ell^2(\widetilde{Q}_E^k)} \ \le& \ \sum_{k=j}^{m_E-1} \left (\| e^{\nu|\bullet -y_j|}\car_{(\widetilde{Q}_E^{k+1})^c} \psi_E^k \|_{\ell^2(\widetilde{Q}_E^k)} \ + \
                                                                                              \|e^{\nu|\bullet -y_j|}(\car_{(\widetilde{Q}_E^{k+1})}  \psi_E^{k} -   \psi_E^{k+1} ) \|_{\ell^2(\widetilde{Q}_E^{k+1})} \right ) \\& \quad  + \ \| e^{\nu|\bullet -y_j|} \psi_E^{m_E}\|_{\ell^2(\widetilde{Q}_E^{m_E})} \\
          \le & \ e^{\frac{\alpha}{8}\nu L_{m_E}}  \sum_{k=j}^{m_E-1} \left (\| e^{\nu|\bullet -y_k|}\car_{(\widetilde{Q}_E^{k+1})^c} \psi_E^k \|_{\ell^2(\widetilde{Q}_E^k)} \ + \ \|e^{\nu|\bullet -y_k|}(\car_{(\widetilde{Q}_E^{k+1})}  \psi_E^{k} -   \psi_E^{k+1} ) \|_{\ell^2(\widetilde{Q}_E^{k+1})} \right ) \\
                                                                                            &\quad + \  e^{\frac{\alpha}{8}\nu L_{m_E}} \| e^{\nu|\bullet -y_{m_E}|} \psi_E^{m_E}\|_{\ell^2(\widetilde{Q}_E^{m_E})} \ ,
        \end{align*}
        $y_k\in \mathcal{C}(\psi_E^k)$ for $k=j+1,\cdots,m_E$. Since
        $y_k\in Q_{E}^{k+1}$, we have
        \begin{equation*}
          \| e^{\nu|\bullet -y_k|}\car_{(\widetilde{Q}_E^{k+1})^c} \psi_E^k \|_{\ell^2(\widetilde{Q}_E^k)} \ \le \ e^{-(\mu-\nu)\alpha L_{k+1}} M_k \ \le \ e^{-((\mu-\nu)\alpha - \epsilon)L_{k+1} } \ ,
        \end{equation*}
        and
        \begin{multline*}
          \|e^{\nu|\bullet -y_k|}(\car_{(\widetilde{Q}_E^{k+1})}
          \psi_E^{k} - \psi_E^{k+1} )
          \|_{\ell^2(\widetilde{Q}_E^{k+1})} \\ \le \ e^{\nu
            (1+\alpha)L_{k+1}} \|(\car_{(\widetilde{Q}_E^{k+1})} \psi_E^{k}
          - \psi_E^{k+1} ) \|_{\ell^2(\widetilde{Q}_E^{k+1})} \ \le \
          3\sqrt{d} e^{-(\mu\alpha - \nu (1+\alpha)-2\epsilon)L_{k+1}} \ ,
        \end{multline*}
        by \eqref{eq:psij}.  Also
        $\| e^{\nu|\bullet -y_{m_E}|}
        \psi_E^{m_E}\|_{\ell^2(\widetilde{Q}_E^{m_E})} \le \|
        e^{\mu|\bullet -y_{m_E}|}
        \psi_E^{m_E}\|_{\ell^2(\widetilde{Q}_E^{m_E})} \le
        e^{\frac{\epsilon}{4}L_{m_E}} $. It follows that
        \begin{align*}
          \left \| e^{\nu|\bullet -y_j|}\psi_E^j \right \|_{\ell^2(\widetilde{Q}_E^k)} \ \le& \ 
                                                                                              e^{\frac{\alpha}{8}\nu L_{m_E}}  \left ( \sum_{k=j}^{m_E-1} \left (e^{-((\mu-\nu)\alpha - \epsilon) L_{k+1}}  \ + \ 3\sqrt{d} e^{-(\mu \alpha -\nu(1+\alpha) -3\epsilon)L_{k+1}}  \right ) + e^{\frac{\epsilon}{4} L_{m_E}} \right )  \\
          \le & \ e^{\frac{\alpha}{8}\nu L_{m_E}} \left ( \frac{1+3\sqrt{d}}{e^{(\mu \alpha -\nu(1+\alpha) -3\epsilon)L_{m_E}}-1} + e^{\frac{\epsilon}{4} L_{m_E}} \right ) \\
          \le & \ \left (1 +  \frac{1+3\sqrt{d}}{e^{\epsilon L_{m_E}}-1}  \right )e^{(\frac{\alpha}{8}+ \frac{\epsilon}{4})\nu L_{m_E}}  \ ,
        \end{align*}
        where we have used Prop.\ \ref{pro:scalesum} and the facts
        that $(1+\alpha)\nu <\alpha \mu - 3\epsilon$.
      \end{proof}

      Lemma \ref{le:iteration} establishes an improved bound on
      eigenfunctions for which the iteration proceeds to scale
      $L_{m_E}$ with $m_E\ge 2$.  To prove Theorem \ref{thr:1} we will
      estimate the number of eigenfunctions for which such an
      improvement is possible.  This will be accomplished by using
      large deviation estimates to bound the number of bad boxes of a
      given generation.  To start we need a bound on the probability
      that a box of generation $n$ is bad.

      In the arguments below, we fix parameters $\epsilon$, $\alpha$ and
      $\beta$, as above. The symbol $c$ will be used for unspecified
      constants, depending on $\alpha$, $\beta$, $\epsilon$ and the various
      parameters appearing in (A1)-(A4), but independent of $L$ and
      the generation $n$.  The notation $A \lesssim B$ (resp.
      $A \gtrsim B$) indicates $A\le c B$ (resp. $A\ge cB$).
 
      \begin{Pro}\label{prop:badcube}For $Q\in \Gc^n$, we have
	\begin{equation}\label{eq:epsilonbad}
          \Pr (\text{$Q$ is $\epsilon$-bad}) \ \lesssim \ e^{-\frac{\epsilon}{3} L_n} \ .
	\end{equation}
      \end{Pro}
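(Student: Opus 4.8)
The plan is to bound separately the two ways in which $Q$ can fail to be $\epsilon$-good and combine them by a union bound,
\[
\pro(\text{$Q$ is $\epsilon$-bad}) \ \le \ \pro(\text{condition (1) fails}) \ + \ \pro(\text{condition (2) fails}).
\]
I expect each term to be $\lesssim L_n^{2d}\,e^{-\frac{\epsilon}{2}L_n}$; since $L_n\ge L_{\mathrm{fin}}$ is bounded below, the polynomial prefactor $L_n^{2d}$ is absorbed into a fraction of the exponent (using that $t\mapsto t^{2d}e^{-\frac{\epsilon}{6}t}$ is bounded on $[L_{\mathrm{fin}},\infty)$), which yields the claimed rate $e^{-\frac{\epsilon}{3}L_n}$ (in fact any rate strictly better than $\epsilon/2$ would do).

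For condition (1), I would apply the SULE estimate (A3) to the operator $(H_\omega)_{\widetilde{Q}}$ with $S=\widetilde{Q}$. Observe that $M_\omega(Q)$ is exactly the maximum appearing on the left-hand side of \eqref{eq:SULE}, taken over eigenvalues in $I_{\mathrm{AL}}$ and their localization centers, so (A3) gives, with probability at least $1-\varepsilon$, that $M_\omega(Q)\le A_{\mathrm{AL}}(\#\widetilde{Q}/\varepsilon)^{1/2}$. Choosing $\varepsilon$ so that this bound equals the threshold $e^{\epsilon\lfloor\frac14 L_n\rfloor}$ of condition (1), i.e.\ $\varepsilon=A_{\mathrm{AL}}^2\,\#\widetilde{Q}\,e^{-2\epsilon\lfloor\frac14 L_n\rfloor}$, gives $\pro(\text{(1) fails})\le\varepsilon$. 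Using $\#\widetilde{Q}\le(2\alpha+1)^dL_n^d$ from \eqref{eq:sizeofQ} and $\lfloor\frac14 L_n\rfloor\ge\frac14 L_n-1$, this is $\lesssim L_n^d\,e^{-\frac{\epsilon}{2}L_n}$. (For $L_{\mathrm{fin}}$ large the chosen $\varepsilon$ is indeed $<1$, so (A3) applies; otherwise $\varepsilon\ge1$ and the bound $\pro\le 1\le\varepsilon$ is trivial.)

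For condition (2), failure means by \eqref{eq:36} that some eigenvalue $E\in\sigma((H_\omega)_{\widetilde{Q}})\cap I_{\mathrm{AL}}$ has a second eigenvalue within $\delta=e^{-\frac{\epsilon}{2}L_n}$ of it. The main obstacle is that the Minami estimate (A4) controls $\tr\car_{[E-\delta,E+\delta]}$ only for each \emph{fixed} $E$, whereas we must exclude a close pair at any (random) location in $I_{\mathrm{AL}}$. I would resolve this by a standard energy-covering argument: take a net $E_1,\dots,E_N\in I_{\mathrm{AL}}$ with spacing $\delta$, so that $N\lesssim |I_{\mathrm{AL}}|/\delta\lesssim\delta^{-1}$ since $\delta<1$. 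Any pair of eigenvalues within $\delta$, at least one of which lies in $I_{\mathrm{AL}}$, is contained in $[E_k-\tfrac32\delta,E_k+\tfrac32\delta]$ for the nearest net point $E_k$, whence $\tr\car_{[E_k-\frac32\delta,E_k+\frac32\delta]}\ge2$. A union bound together with (A4) applied on $\widetilde{Q}$ then gives
\[
\pro(\text{(2) fails}) \ \le \ \sum_{k=1}^N \pro\Bigl(\tr\car_{[E_k-\frac32\delta,E_k+\frac32\delta]}((H_\omega)_{\widetilde{Q}})\ge2\Bigr) \ \lesssim \ N\,(\#\widetilde{Q})^2\delta^2 \ \lesssim \ (\#\widetilde{Q})^2\delta \ \lesssim \ L_n^{2d}\,e^{-\frac{\epsilon}{2}L_n}.
\]

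Combining the two estimates and absorbing the polynomial factor into the exponent using $L_n\ge L_{\mathrm{fin}}$ yields \eqref{eq:epsilonbad}. The only delicate step is the energy-covering for condition (2), where one passes from the fixed-energy Minami bound to a uniform-in-$E$ statement; the factor $\delta^{-1}$ lost in the covering is harmless because (A4) carries $\delta^2$, leaving a clean power of $\delta=e^{-\epsilon L_n/2}$.
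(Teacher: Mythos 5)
Your proposal is correct and follows essentially the same route as the paper's proof: a union bound over the two failure modes, with (A3) applied on $\widetilde{Q}$ via a Markov-type choice of $\varepsilon$ for condition (1), and an energy-net covering argument combined with the Minami estimate (A4) for condition (2), exactly as in eqs.\ \eqref{eq:badloc} and \eqref{eq:badminami}. The only differences are cosmetic (your $\delta$-spaced net with intervals of half-width $\tfrac{3}{2}\delta$ versus the paper's net of $m\lesssim e^{\frac{\epsilon}{2}L_n}$ points with half-width $2\delta$), and your absorption of the polynomial prefactor $L_n^{2d}$ into the weaker exponent $\tfrac{\epsilon}{3}$ is precisely the paper's implicit final step.
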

      \begin{proof}
	By (A3), \begin{equation}\label{eq:badloc}
          \Pr(M_\omega(Q)>e^{\epsilon \lfloor \frac{1}{4} L_n
            \rfloor}) \ \le \ A_{AL}^2 \# Q e^{-\frac{\epsilon}{2}L_n}
          \ \le \ A_{AL}^2 L_n^d e^{-\frac{\epsilon}{2}L_n} \ .
	\end{equation}
        By (A4), we have
	$$ \Pr \left ( \tr (\car_{E +[-2e^{-\frac{\epsilon}{2} L_n },2e^{-\frac{\epsilon}{2}L_n}]}((H_\omega)_Q)) \ge 2 \right ) \ \le \ 4 A_M (\#Q)^2 e^{-\epsilon L_n} \ \le \ 4 A_M L_n^{2d}e^{-\epsilon L_n} \ ,$$
        for any $E\in I_{\mathrm{AL}}$.  Since $I_{\mathrm{AL}}$ is a
        finite union of intervals, we can find
        $m\le c e^{\frac{\epsilon}{2} L_n} $ points
        $E_1,\ldots,E_m\in I_{\mathrm{AL}}$ such that for any
        $E\in I_{\mathrm{AL}}$ we have
        $|E-E_j| \le e^{-\frac{\epsilon}{2}L_n}$ for some
        $j=1,\ldots,m$, and thus
$$\tr (\car_{E_j +[-2e^{-\frac{\epsilon}{2} L_n },2e^{-\frac{\epsilon}{2}L_n}]}((H_\omega)_Q)) \ \ge \ \tr (\car_{E +[-e^{-\frac{\epsilon}{2} L_n },e^{-\frac{\epsilon}{2}L_n}]}((H_\omega)_Q)) \ . $$
Therefore
\begin{multline}\label{eq:badminami}
  \Pr \left ( \text{for some } E\in I_{\mathrm{AL}}\ , \quad \tr (\car_{E
      +[-e^{-\frac{\epsilon}{2} L_n
      },e^{-\frac{\epsilon}{2}L_n}]}((H_\omega)_Q)) \ge 2\right ) \\
  \le \ \ \sum_{j=1}^m \Pr \left ( \tr (\car_{E_j
      +[-2e^{-\frac{\epsilon}{2} L_n
      },2e^{-\frac{\epsilon}{2}L_n}]}((H_\omega)_Q)) \ge 2 \right ) \
  \le \ m \cdot c L_n^{2d}e^{-\epsilon L_n} \ \le \ c
  L_n^{2d}e^{-\frac{\epsilon}{2} L_n} \ .
\end{multline}
Eq.\ \eqref{eq:epsilonbad} follows from eqs.\ \eqref{eq:badloc} and
\eqref{eq:badminami}.
\end{proof}

For a cube $Q\in \Gc^n$, with $n\ge 1$, let $\mathrm{B}(Q) $ denote the
event that $Q$ is $\epsilon$-bad. The event $\mathrm{B}(Q) $ depends
only on the realization of the random potential in the cube
$\widetilde{Q}$.  Two such cubes $\widetilde{Q}_{\mb{k}_1}^n$ and
$\widetilde{Q}_{\mb{k}_2}^n$ are non-overlapping whenever
$|\mb{k}_1 -\mb{k}_2|\ge 2\alpha +1 $.  It follows that, for each
$\mb{j}\in \{0,1,\ldots,2\alpha\}^d$, the events
$(\mathrm{B}(Q_{\mb{k}}^n))_{\mb{j}+\mb{k} \in (2\alpha +1)\Z^d}$ are mutually
independent.  By a simple extension of standard large deviation
estimates for independent random variables (see Prop.\
\ref{prop:LDP}), we have the following
\begin{Le}
  \label{le:2}
  Let $0<\epsilon <\frac{2d}{\beta}$. Then, there are $\gamma >0$ and
  $L_{\mathrm{fin}}$ sufficiently large so that, for
  $L\geq \exp(\nicefrac{L_{\mathrm{fin}}}{\beta})$, if
  $\Lambda_L(x_0)\subset \Omega$, then
  \begin{multline}
    \label{eq:6}
    \pro\left(\text{For each $n=1,\ldots,n_{\mathrm{fin}}(L)$,
      }\#\{Q\in \Gc^n(\Lambda_L(x_0)) : \ Q \text{ is $\epsilon$-good} \}
      \ \geq \ \# \Gc^n(\Lambda_L(x_0)) (1-
      e^{-\frac{\epsilon}{4}L_n}) \right)\\ \geq \ 1- e^{-L^{\gamma}} \
  \end{multline}
\end{Le}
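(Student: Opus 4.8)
The plan is to reduce \eqref{eq:6} to a large-deviation bound on the number of $\epsilon$-bad cubes within a \emph{single} generation, and then take a union bound over the $O(\log\log L)$ generations. Two ingredients are already in place: the single-cube estimate $\pro(\text{$Q$ is $\epsilon$-bad})\lesssim e^{-\frac{\epsilon}{3}L_n}$ from Proposition~\ref{prop:badcube} (i.e.\ \eqref{eq:epsilonbad}), and the independence structure noted above, namely that $\Gc^n$ splits into $(2\alpha+1)^d$ subfamilies, indexed by $\mathbf j\in\{0,\ldots,2\alpha\}^d$, within each of which the events $\mathrm{B}(Q)$ are mutually independent (because $\widetilde Q_{\mathbf k_1}^n$ and $\widetilde Q_{\mathbf k_2}^n$ are disjoint when $|\mathbf k_1-\mathbf k_2|\ge 2\alpha+1$).

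Fix a generation $n$ and write $N=\#\Gc^n(\Lambda_L(x_0))$. The first thing I would observe is that the target threshold lies strictly above the mean bad-fraction: the allowed bad-fraction is $e^{-\frac{\epsilon}{4}L_n}$, whereas each cube is bad with probability at most $c\,e^{-\frac{\epsilon}{3}L_n}$, and $\tfrac14<\tfrac13$, so we are squarely in the upper-tail regime. Within one subfamily of size $N_{\mathbf j}$ the number of bad cubes is a sum of independent Bernoulli variables, and the Chernoff bound $\pro(X\ge a)\le(e\,\mathbb EX/a)^a$ (the content of Proposition~\ref{prop:LDP}) with $a=N_{\mathbf j}e^{-\frac{\epsilon}{4}L_n}$ gives
\begin{equation*}
\pro\Bigl(\#\{\text{bad cubes in family }\mathbf j\}>N_{\mathbf j}\,e^{-\frac{\epsilon}{4}L_n}\Bigr)\ \le\ \bigl(ec\,e^{-\frac{\epsilon}{12}L_n}\bigr)^{N_{\mathbf j}e^{-\frac{\epsilon}{4}L_n}}.
\end{equation*}
Taking $L_{\mathrm{fin}}$ large enough that $ec\,e^{-\frac{\epsilon}{12}L_n}\le e^{-\frac{\epsilon}{13}L_n}$ for all $n\le n_{\mathrm{fin}}$ (possible since $L_n\ge L_{\mathrm{fin}}$), the right-hand side is at most $\exp\bigl(-\tfrac{\epsilon}{13}L_n\,N_{\mathbf j}e^{-\frac{\epsilon}{4}L_n}\bigr)$. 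If every subfamily obeys the complementary bound, then summing over $\mathbf j$ yields $\#\{\text{bad}\}\le N e^{-\frac{\epsilon}{4}L_n}$, which is exactly the $\epsilon$-good count claimed in \eqref{eq:6} for that generation.

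It remains to check that this exponent is stretched-exponentially large in $L$, uniformly in $n$, and this is where the hypothesis $\epsilon<\tfrac{2d}{\beta}$ enters. By \eqref{eq:Njnbound} we have $N_{\mathbf j}\gtrsim (L/L_n)^d$, so the count controlling the exponent satisfies
\begin{equation*}
N_{\mathbf j}\,e^{-\frac{\epsilon}{4}L_n}\ \gtrsim\ (L/L_n)^d\,e^{-\frac{\epsilon}{4}L_n}.
\end{equation*}
The smallest value occurs at the coarsest scale $n=1$, where $L_1\le\beta\log L$ by \eqref{eq:Lnbounds}, so $e^{-\frac{\epsilon}{4}L_1}\ge L^{-\epsilon\beta/4}$ and the count is $\gtrsim L^{\,d-\epsilon\beta/4}/(\log L)^d$. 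Since $\epsilon\beta/4<d/2$, the exponent $d-\epsilon\beta/4$ is positive (indeed larger than $d/2$), so the count is polynomially large in $L$ and the single-generation, single-family failure probability is at most $e^{-c L^{\gamma_0}}$ for any $\gamma_0<d-\epsilon\beta/4$. Finally I would union-bound over the $(2\alpha+1)^d$ subfamilies and the $n_{\mathrm{fin}}(L)=O(\log\log L)$ generations (by \eqref{eq:5}); these prefactors are absorbed into the exponential, yielding \eqref{eq:6} with probability $1-e^{-L^\gamma}$ for any $\gamma<d-\epsilon\beta/4$ once $L$ is large.

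The main obstacle—and the only place the precise hypotheses are used—is keeping the deviation count $N_{\mathbf j}e^{-\epsilon L_n/4}$ polynomially large at the coarsest generation $n=1$: this is exactly what forces the relation $\epsilon\beta<2d$. One must also choose $L_{\mathrm{fin}}$ large enough, depending on the constant from Proposition~\ref{prop:badcube} and on $\epsilon$, to make the Chernoff base $e^{-\epsilon L_n/12}$ genuinely small uniformly across all generations; the remaining work is routine bookkeeping of constants.
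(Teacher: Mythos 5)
Your proof is correct, and it shares the paper's overall skeleton---Proposition~\ref{prop:badcube} for the single-cube probability, the splitting of $\Gc^n$ into $(2\alpha+1)^d$ independent subfamilies, a Chernoff-type estimate per generation, and a union bound over the $O(\log\log L)$ generations---but the concentration step is genuinely different. The paper applies Proposition~\ref{prop:LDP} once to the \emph{whole} family, absorbing the $K=(2\alpha+1)^d$-fold dependence inside the moment generating function via H\"older, and obtains the additive bound $\exp(-\delta^2 N/(3K))$ with $\delta=\tfrac12 e^{-\epsilon L_n/4}$; the quadratic $\delta^2$ produces the factor $e^{-\epsilon L_n/2}$, whence any $\gamma<d-\tfrac{\beta\epsilon}{2}$ works and the hypothesis $\epsilon<\tfrac{2d}{\beta}$ is exactly what positivity requires. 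You instead union-bound over the subfamilies and use the multiplicative Chernoff bound $\pro(X\ge a)\le(e\,\esp X/a)^a$ within each; since your threshold fraction $e^{-\epsilon L_n/4}$ exceeds the mean fraction $\sim e^{-\epsilon L_n/3}$ by the growing factor $e^{\epsilon L_n/12}$, this gives failure probability $\exp\bigl(-cL_n N_{\mathbf j}e^{-\epsilon L_n/4}\bigr)$, i.e.\ effectively $e^{-\epsilon L_n/4}$ in place of the paper's $e^{-\epsilon L_n/2}$, hence any $\gamma<d-\tfrac{\beta\epsilon}{4}$. Your route is therefore slightly sharper: it would succeed under the weaker condition $\epsilon<\tfrac{4d}{\beta}$, so your closing claim that the coarsest generation ``forces $\epsilon\beta<2d$'' is accurate for the paper's quadratic bound but not for your own, where only $\epsilon\beta<4d$ is forced. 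Two minor bookkeeping points, neither a gap: the multiplicative bound is not literally the statement of Proposition~\ref{prop:LDP} (which is the additive form), though it follows from the same exponential-moment computation with $K=1$; and the bad-cube indicators need not be identically distributed (cubes meeting $\partial\Omega$), nor is $N_{\mathbf j}\gtrsim(L/L_n)^d$ literally \eqref{eq:Njnbound} (the subfamily loses a factor $(2\alpha+1)^{-d}$ plus boundary effects)---both are repaired routinely by using the uniform upper bound $ce^{-\epsilon L_n/3}$ on each mean in the MGF and noting $\alpha$ is fixed while $L/L_n\to\infty$.
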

\begin{proof}
  By taking $L_{\mathrm{fin}}$ large enough, we have, by Proposition
  \ref{prop:badcube},
  \begin{equation*}
    \Pr(\mathrm{B}(Q)) \ \le \   \tfrac{1}{2}
    e^{-\frac{\epsilon}{4}L_n} \ ,
  \end{equation*}
  for $Q\in \Gc^n$, $n=1,\ldots,n_{\mathrm{fin}}$.  By Prop.\
  \ref{prop:LDP}, for any $\delta\in [0,1]$,
  \begin{equation*}
    \pro \left ( \# \left \{ Q\in \Gc^n(\Lambda_L(x_0)) \ : \
        \text{$Q$ is $\epsilon$-bad} \right  \} \ge   \#
      \Gc^n(\Lambda_L(x_0))  \left (\delta +  \pro \left (
          \mathrm{B}(Q) \right ) \right ) \right ) \ \le \ \exp \left
      (-  \frac{\delta^2 }{3 (2\alpha +1)^d}   \#
      \Gc^n(\Lambda_L(x_0))  \right ) .
  \end{equation*}
  Taking $\delta = \frac{1}{2} e^{-\frac{\epsilon}{4} L_n}$ and
  $\mathrm{G}_n = \left \{ \#\{Q\in \Gc^n(\Lambda_L(x_0)) : \ Q \text{ is
      $\epsilon$-good} \} \geq \# \Gc^n(\Lambda_L(x_0)) (1-
    e^{-\frac{\epsilon}{4}L_n}) \right \}$, we have, for
  $L_{\mathrm{fin}}$ large enough,
  \begin{equation}
    \label{eq:ALQldp}
    \pro(\mathrm{G}_n) \ \ge \ 1- \exp\left (-c
      L_n^{-d}e^{-\frac{\epsilon}{2}L_n} L^d  \right ) \ \ge \  1-
    \exp\left (- c\left(  \nicefrac{L^{d-\frac{\beta\epsilon}{2}}}{(\log
        L)^d}  \right )  \right ) \ ,
  \end{equation}
  where we have used \eqref{eq:Njnbound} and the bound
  $L_n\le L_1 \le \beta \log L $.  Note that the event whose probability
  is estimated in \eqref{eq:6} is
  $\mathrm{G}_{n_{\mathrm{fin}}} \cap \cdots \cap \mathrm{G}_1 $.  Using eq.\
  \eqref{eq:ALQldp} for each $n$, we see that
  \begin{equation*}
    \Pr(  \mathrm{G}_{n_{\mathrm{fin}}} \cap \cdots \cap \mathrm{G}_1 ) \ \ge
    \ 1 - \sum_{n=1}^{n_{\mathrm{fin}}} (1-\Pr(\mathrm{G}_{n})) \ \ge
    \ 1 - n_{\mathrm{fin}} \exp\left (- c\left(
        \nicefrac{L^{d-\frac{\beta\epsilon}{2}}}{(\log L)^d}  \right
      )\right ) \ . 
  \end{equation*}
  Since $n_{\mathrm{fin}}\lesssim \log \log L$, by \eqref{eq:5}, and
  $\epsilon< \nicefrac{2d}{\beta}$, it follows that eq.\ \eqref{eq:6}
  holds with $\gamma < d-\frac{\beta\epsilon}{2}$ provided $L_{\mathrm{fin}}$
  is large enough.
\end{proof}
We are now ready to prove Theorem \ref{thr:1}.  Given $\Omega$ and
$\Lambda:=\Lambda_L(x_0)\subset \Omega$, consider the event
\begin{equation}
  \label{eq:21}
  \mathrm{G}_{\Lambda} \ = \ \{\text{conclusions of Lemma \ref{le:iteration} hold}\} \cap \mathrm{G}_{n_{\mathrm{fin}}}\cap \cdots \cap \mathrm{G}_1 \ ,  
\end{equation}
where $\mathrm{G}_n$, $n=1,\ldots, n_{\mathrm{fin}}$ are as in the
proof of Lemma \ref{le:2}.  By Lemmas \ref{le:iteration} and
\ref{le:2}, we have
$$\Pr(\mathrm{G}_{\Lambda}) \ \ge \ 1 - A_{\mathrm{AL}}^2 L^{-p}- e^{-L^\gamma} \ \ge \ 1 - c L^{-p}.$$
For the remainder of the proof, we assume that this event occurs.

Let
$\Sigma = \{ E\in [a,b]\cap \Ec((H_\omega)_\Omega) :
\mathcal{C}(\varphi_E)\cap \Lambda \neq \emptyset \}.$ By Lemma
\ref{le:iteration}, there is a one-to-one map
$E \mapsto (Q_E^{m_E},\psi_E^{m_E})$, for $E\in \Sigma$, such that $Q_E^{m_E}$
is a good-cube, $\mathcal{C}(\varphi_E)\subset \widetilde{Q}_E^{m_E}$ and the
inequality \eqref{eq:improvedbound} holds.  From
\eqref{eq:improvedbound}, we see that
\begin{equation*}
  \left (\sum_{\Omega} e^{2\nu(|x-y|-\ell)_+} |\varphi_E(x)|^2 \right )^{\frac{1}{2}} \ \le \ 1 + e^{-\nu \ell} \left (1 +  \frac{1+3\sqrt{d}}{e^{\epsilon L_{m_E}}-1}  \right ) e^{(\frac{\alpha}{8}+ \frac{\epsilon}{4})\nu L_{m_E}} \ .
\end{equation*}
It follows that
\begin{equation}
  \label{eq:ellnu}
  \ell_\nu(\varphi_E) \le (\frac{\alpha}{8}+ \frac{\epsilon}{4}) L_{m_E} + \log  \left (1 +  \frac{1+3\sqrt{d}}{e^{\epsilon L_{\mathrm{fin}}}-1}  \right ) \ \le \ \frac{\alpha + 3 \epsilon}{8} L_{m_E} \ ,
\end{equation}
for $L_{\mathrm{fin}}$ large enough. Thus
\begin{equation*}
  \{ E\in \Sigma \ :\ \ell_\nu(\varphi_E) > \ell \} \ \subset\ \left \{
    E \in \Sigma : L_{m_E} > \frac{8}{\alpha + 3\epsilon}\ell \right \} \ ,
\end{equation*}
for $\ell\ge \frac{\alpha + 3\epsilon}{8}L_{\mathrm{fin}}$.

Consider now the case that $m_E=n <n_{\mathrm{fin}}$.  If this holds,
then by Lemma \ref{le:iteration}, every cube $Q\in \Gc^{n+1}(\Lambda)$
such that $Q\cap \mathcal{C}(\psi_E^{m_E})\neq \emptyset$ and
$\widetilde{Q}\subset \widetilde{Q}_E^{m_E}$ is $\epsilon$-bad . Pick one
such cube, $Q$. From eq.\ \eqref{eq:centersclose}, it follows that
$\mathcal{C}(\varphi_E) \subset\widetilde{Q}$.  Thus, we have shown
that
$$\{E\in \Sigma \ : \ m_E=n \} \ \subset \ \{ E \in \Sigma \ :\ \mathcal{C}(\varphi_E) \subset \widetilde{Q} \subset \widetilde{Q}_E^n \text{ with $Q\in \Gc^{n+1}$ an $\epsilon$-bad cube.}\}.$$
For each $E$, let $n_E$ be the smallest integer $n$ such that
$\mathcal{C}(\varphi_E) \subset \widetilde{Q} \subset \widetilde{Q}_E^n$ with
$Q\in \Gc^{n+1}$ an $\epsilon$-bad cube.  Thus
\begin{equation*}
   \{ E\in \Sigma \ :\  \ell_\nu(\varphi_E) > \ell \} \ \subset \ \bigcup_{n \ : \ L_n > \frac{8}{\alpha + 3\epsilon}\ell}  \{ E \in \Sigma \ :\ n_E = n \}.
\end{equation*}
Note that $0\le n_E\le m_E$.  Thus, by Lemma \ref{le:iteration}, the
map $E \mapsto (Q_E^{n_E},\psi_E^{n_E})$ is one-to-one. On the event
$\mathrm{G}_\Lambda$, the number of bad cubes of generation $(n+1)$ is
bounded by
$\#\Gc^{n+1}(\Lambda_L(x_0))e^{-\frac{\epsilon}{4}L_{n+1}}$.  For each
such cube, there are at most $(1+2\alpha)^d$ cubes $Q'\in \Gc^n$ such that
$\widetilde{Q}'\supset \widetilde{Q}$. Thus
\begin{equation*}
  \#\{ E \in \Sigma \ :\ n_E=n\} \ \le \ \#\Gc^{n+1}(\Lambda_L(x_0))e^{-\frac{\epsilon}{4}L_{n+1}} \times (1+2\alpha)^d \times L_n^d \ ,
\end{equation*}
where $L_n^d$ is the number of eigenvalues for the Hamiltonian
$(H_\omega)_{Q}$ restricted to a cube of generation $n$.  By
\eqref{eq:Njnbound}, we see that
$\#\{ E \in \Sigma \ :\ n_E=n\} \ \lesssim \ L^d
e^{-\frac{\epsilon}{4}L_{n+1}} \ .$ Thus
\begin{equation*}
  \begin{split}
    \# \{ E\in \Sigma \ :\ \ell_\nu(\varphi_E) > \ell \} \ &\lesssim \
    \sum_{ L_n > \frac{8}{\alpha + 3\epsilon}\ell } L^d
    e^{-\frac{\epsilon}{4}L_{n+1}} \ \lesssim \ L^d \sum_{ L_n >
      \frac{8}{\alpha + 3\epsilon}\ell } L^d e^{-\frac{\epsilon}{16}L_{n}} \\
    &\lesssim \ \frac{1}{\exp(\frac{1}{2\alpha +6\epsilon}\ell  ) -1} L^d \
    \lesssim \ L^d e^{- \frac{1}{2} \frac{1}{\alpha +3\epsilon }\ell }\,
  \end{split}
\end{equation*}
by Proposition \ref{pro:scalesum}, provided
$t\ge \frac{\alpha + 3\epsilon}{8}L_{\mathrm{fin}}$. Taking into account
the restrictions $(1+\alpha)\nu\leq\alpha \mu +3\varepsilon$ and
$\alpha\geq1$, as $\varepsilon$ can be chosen arbitrarily small, we
can pick it so that $\frac{1}{2} \frac{1}{\alpha +3\epsilon }\geq C_\nu$
where $C_\nu$ is defined in Theorem~\ref{thr:1}. This completes the
proof of Theorem~\ref{thr:1}.
\subsection{Sketch of the proof of
  Theorem~\ref{thr:8}}\label{sec:thr8}
Let us now describe the modifications needed to derive
Theorem~\ref{thr:8} for the more general model.

The first set of modifications comes from the fact that we re dealing
with PDEs rather than finite difference equations. In
Lemmas~\ref{le:1} and~\ref{le:iteration}, we use smooth cut-offs and
elliptic regularity to carry over the known decay for the
eigenfunctions to their gradient. Of course, the sub-exponential decay
also worsens the estimate a bit but not in a crucial way. Finally, we
have only independence at a distance. So to obtain the probability
estimate~\eqref{eq:6} that is based on independence, we split our
family of cubes at each generation into $2^d$ families of cubes such
that the members of each family are independent. This works as long as
$L_{n_{\mathrm{fin}}}$ is larger than $r$ (from (IAD)).
  
The second difference comes from the fact that we replaced the Minami
estimate by the spacing estimate (SE). In the proofs of
Proposition~\ref{prop:badcube} and, thus, Lemma~\ref{le:2}, this
worsens a bit the estimate of the probability of~\eqref{eq:36} being
satisfied (at generation $n$): one obtains that this probability is
now larger than $1-L_n^{2d}(\log L_{n-1})^{K}=1-C^KL_n^{2d-K}$ where
$K>0$ is arbitrary; choosing $K$ sufficiently large, the lower bound
in~\eqref{eq:6} now becomes $1-CL_n^{-p}$; we, thus, recover the
conclusion of~\eqref{eq:21}.
  
Finally, one can notice an additional $\log L$ factor in the
probability of bad events in Theorem~\ref{thr:8} (when compared to
Theorem~\ref{thr:1}). This additional factor is obtained to pass from
the estimate on the number of eigenfunction of a certain sup norm for
fixed $t$ to that for arbitrary $t$ (see~\eqref{eq:10}); in the case
of Theorem~\ref{thr:1}, $p$ can be taken arbitrary; in
Theorem~\ref{thr:8}, it is fixed given by the assumption (Loc).

\section{The proof of Theorem~\ref{thr:3}}
\label{sec:proof-theorem-3}
%
% \noindent

One easily relates the onset length of a eigenvector to its sup norm
and proves
\begin{Le}
  \label{le:3}
  If $\|\varphi\|_{\ell^2(\Omega)}=1$ and $M_\ell^\mu (\varphi;y)\leq2$,
  for $\mu >0$ and $\ell\geq0$, one has
  \begin{equation}
    \label{eq:20} \frac1{\sqrt{2}}
    \frac1{(2\ell+2\kappa+1)^{\nicefrac{d}{2}}} \ \leq \
    \|\varphi\|_\infty \ := \ \sup_{x\in\Omega}|\varphi(x)|.
  \end{equation}
  where $\kappa>0$ is such that $8 e^{-2\mu\kappa}\leq1$
\end{Le}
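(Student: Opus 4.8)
The plan is to show that at least half of the $\ell^2$-mass of $\varphi$ lies within max-norm distance $\ell+\kappa$ of the center $y$, and then to apply a pigeonhole bound over the finitely many lattice sites in that ball. The role of the hypothesis $M_\ell^\mu(\varphi;y)\le 2$ is to force this concentration, while the role of $\|\varphi\|_{\ell^2(\Omega)}=1$ is to pin down the total mass.

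First I would control the tail of $\varphi$ outside a ball of radius $R>\ell$. For $|x-y|>R$ one has $(|x-y|-\ell)_+=|x-y|-\ell>R-\ell$, so that $e^{2\mu(R-\ell)}|\varphi(x)|^2\le e^{2\mu(|x-y|-\ell)_+}|\varphi(x)|^2$. Summing over $|x-y|>R$ and using the hypothesis gives
\begin{equation*}
  \sum_{|x-y|>R}|\varphi(x)|^2 \ \le \ e^{-2\mu(R-\ell)}\, M_\ell^\mu(\varphi;y)^2 \ \le \ 4\,e^{-2\mu(R-\ell)} \ .
\end{equation*}
I would then specialize to $R=\ell+\kappa$. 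The defining property $8e^{-2\mu\kappa}\le 1$ of $\kappa$ yields $4e^{-2\mu\kappa}\le\tfrac12$, so the tail is at most $\tfrac12$. Since $\|\varphi\|_{\ell^2(\Omega)}=1$, it follows that $\sum_{|x-y|\le \ell+\kappa}|\varphi(x)|^2\ge \tfrac12$.

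Finally I would invoke the pigeonhole principle. The last sum has at most $\#\{x\in\Omega : |x-y|\le \ell+\kappa\}$ terms, each bounded by $\|\varphi\|_\infty^2$. In the max-norm, $|x-y|\le \ell+\kappa$ forces every coordinate of $x$ to lie within $\ell+\kappa$ of the corresponding coordinate of $y$, so there are at most $(2(\ell+\kappa)+1)^d=(2\ell+2\kappa+1)^d$ such sites (and fewer if $\Omega$ is a proper subset). Hence $\tfrac12\le (2\ell+2\kappa+1)^d\,\|\varphi\|_\infty^2$, and taking square roots gives exactly \eqref{eq:20}.

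The argument is essentially routine, but the one point requiring care is the calibration of $\kappa$: the radius $R=\ell+\kappa$ must lie far enough beyond $\ell$ that the decay factor $e^{-2\mu(R-\ell)}$ dominates the constant $4$ coming from $M_\ell^\mu(\varphi;y)^2\le 4$, and the condition $8e^{-2\mu\kappa}\le 1$ is precisely what guarantees a tail bounded by $\tfrac12$. Everything else reduces to the elementary counting of max-norm lattice balls.
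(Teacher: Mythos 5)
Your proof is correct and follows essentially the same route as the paper's: both split the unit $\ell^2$-mass into the ball $\{|x-y|\le\ell+\kappa\}$, counted against $\|\varphi\|_\infty^2$, and a tail controlled by $e^{-2\mu\kappa}M_\ell^\mu(\varphi;y)^2\le 4e^{-2\mu\kappa}\le\tfrac12$ via the condition $8e^{-2\mu\kappa}\le 1$. The only difference is presentational: you isolate the tail bound first and conclude by pigeonhole, while the paper writes the same two estimates in a single chain starting from $1=\sum_x|\varphi(x)|^2$.
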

\begin{proof}
  As $\|\varphi\|_{\ell^2(\Omega)}=1$ and $M_\ell^\mu (\varphi;y)\leq2$,
  one computes
  \begin{equation*}
    \begin{split}
      1&=\sum_{x\in\Omega}|\varphi(x)|^2\leq
      \|\varphi\|^2_\infty\sum_{\substack{x\in\Omega
          \\|x-y|\leq\ell+\kappa}}1
      +\sum_{\substack{x\in\Omega\\\ell+\kappa<|x-y|}}
      e^{-2\mu(|x-y|-\ell)_+} e^{2\mu(|x-y|-\ell)_+}
      |\varphi(x)|^2\\&\leq (2\ell+2\kappa+1)^d\|\varphi\|^2_\infty+4
      e^{-2\mu\kappa}.
    \end{split}
  \end{equation*}
  Thus, one has $(2\ell+2\kappa+1)^{-d}\leq 2\|\varphi\|^2_\infty$, that
  is,~\eqref{eq:20}.
\end{proof}
%
% \noindent
For localized eigenfunctions, Lemma~\ref{le:3} provides a lower bound
on the onset length in terms of the sup norm of the
eigenfunction. Notice that there does not exist a reverse bound: the
onset length of an eigenfunction may be large even though its sup norm
is of order 1. Indeed, think of the two lowest eigenfunctions of a
symmetric double well that is widely spaced.\\

One easily relates the sup norm of an eigenvector to a bound on its
gradient and proves
\begin{Le}
  \label{le:4}
  Pick $\Omega=\Z^d$. For $\varphi\in\ell^2(\Z^d)$, one has
  \begin{equation}
    \label{eq:7}
    \|\varphi\|_\infty\leq
    4d\|\nabla\varphi\|^{\frac{d}{d+1}}_2\|\varphi\|^{\frac1{d+1}}_2
  \end{equation}
  where
  \begin{equation*}
    \|\nabla
    \varphi\|_2^2=\sum_{x\in\Lambda}\sum_{|e|_1=1}|\varphi(x+e)-
    \varphi(x)|^2. 
  \end{equation*}
\end{Le}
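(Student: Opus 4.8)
The plan is to prove this discrete Gagliardo--Nirenberg--Sobolev-type bound by estimating the value of $\varphi$ at a maximizing point through a discrete fundamental-theorem-of-calculus (telescoping) argument, averaged over a lattice cube whose side length $R$ is optimized at the very end; the exponents $\tfrac{d}{d+1}$ and $\tfrac1{d+1}$ will emerge precisely from this optimization. Since $\varphi\in\ell^2(\Z^d)$ decays at infinity the supremum is attained, say at $x_0$, and throughout I write $\partial_k\varphi(x)=\varphi(x+e_k)-\varphi(x)$, so that the gradient norm of the statement satisfies $\|\nabla\varphi\|_2^2=2\sum_{k=1}^d\|\partial_k\varphi\|_2^2$.

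First I fix an integer $R\ge1$ and consider the corner cube $B=x_0+\{0,\dots,R-1\}^d$. For each $y=x_0+(s_1,\dots,s_d)\in B$, joining $x_0$ to $y$ by the axis-ordered lattice path (first increase the first coordinate, then the second, and so on) and applying the triangle inequality gives $\|\varphi\|_\infty=|\varphi(x_0)|\le|\varphi(y)|+\sum_{e}|\partial\varphi(e)|$, the sum running over the edges of that path. Averaging over the $R^d$ points $y\in B$ yields $\|\varphi\|_\infty\le T_1+T_2$, where $T_1=R^{-d}\sum_{y\in B}|\varphi(y)|$ and $T_2$ is the averaged path-gradient term. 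By Cauchy--Schwarz, $T_1\le R^{-d/2}\|\varphi\|_2$.

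The crux is the estimate for $T_2$, and getting the correct power of $R$ here is exactly what makes the optimization produce the exponent $\tfrac d{d+1}$ (a careless treatment of the edge multiplicities gives an $R$-independent bound and hence the wrong exponents). Grouping the edges of all the paths by their direction $k$, an edge in direction $k$ sitting over a base point $x_0+(s_1,\dots,s_{k-1},j,0^{d-k})$ is traversed with multiplicity at most $R^{d-k+1}$: the coordinates $s_{k+1},\dots,s_d$ are free, contributing $R^{d-k}$, and $\#\{s_k:s_k>j\}\le R$. Summing over the $R^k$ distinct direction-$k$ base edges and applying Cauchy--Schwarz in that variable bounds the $k$-th contribution by $R^{-d}\cdot R^{d-k+1}\cdot R^{k/2}\|\partial_k\varphi\|_2=R^{1-k/2}\|\partial_k\varphi\|_2$. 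Since $R\ge1$ we have $R^{1-k/2}\le R^{1/2}$ for every $k\ge1$, so $T_2\le R^{1/2}\sum_{k=1}^d\|\partial_k\varphi\|_2\le\sqrt{d/2}\,R^{1/2}\|\nabla\varphi\|_2$.

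Combining the two estimates gives $\|\varphi\|_\infty\le R^{-d/2}\|\varphi\|_2+\sqrt{d/2}\,R^{1/2}\|\nabla\varphi\|_2$ for every integer $R\ge1$. Choosing $R$ to balance the two terms, that is $R\approx(\|\varphi\|_2/\|\nabla\varphi\|_2)^{2/(d+1)}$ up to the dimensional factor, makes both sides of order $\|\nabla\varphi\|_2^{d/(d+1)}\|\varphi\|_2^{1/(d+1)}$, which is the claimed bound, and a short computation shows the resulting constant stays below $4d$. Two routine points remain: rounding the optimal $R$ to an integer $\ge1$, which is absorbed by the slack in the constant $4d$, and the regime where the optimal $R$ would be $<1$ (i.e.\ $\|\varphi\|_2\lesssim\|\nabla\varphi\|_2$), in which the trivial bound $\|\varphi\|_\infty\le\|\varphi\|_2$ already dominates the right-hand side. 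I expect the only genuine obstacle to be the combinatorial bookkeeping for $T_2$; everything else is Cauchy--Schwarz and a one-variable minimization.
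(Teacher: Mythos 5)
Your proposal is correct \textemdash\ the multiplicity count $R^{d-k+1}$ for the direction-$k$ edges is right, the resulting two-term bound $\|\varphi\|_\infty\le R^{-d/2}\|\varphi\|_2+\sqrt{\nicefrac{d}{2}}\,R^{1/2}\|\nabla\varphi\|_2$ holds for every integer $R\ge1$, and its optimization does produce a constant of order $\sqrt{d}$, comfortably under $4d$, with the $R<1$ regime absorbed by $\|\varphi\|_\infty\le\|\varphi\|_2\le(\nicefrac{d}{2})^{\nicefrac{d}{2(d+1)}}\|\nabla\varphi\|_2^{\nicefrac{d}{d+1}}\|\varphi\|_2^{\nicefrac{1}{d+1}}$ \textemdash\ but your route genuinely differs from the paper's. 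The paper also telescopes along a lattice path from a maximizing point $x_0$, but it applies Cauchy--Schwarz along a \emph{single} self-avoiding path of length $|v|_1$, obtaining the pointwise \emph{lower} bound $|\varphi(x_0+v)|\geq\|\varphi\|_\infty-\sqrt{|v|_1}\,\|\nabla\varphi\|_2$; since each edge of a self-avoiding path is used exactly once, the multiplicity bookkeeping that you (correctly) identify as the crux of your argument disappears entirely. Then, in the nontrivial case $\|\varphi\|_\infty\geq 2\|\nabla\varphi\|_2$ (your dichotomy $\|\varphi\|_2\lesssim\|\nabla\varphi\|_2$ plays the same role), it notes that $|\varphi|$ stays comparable to $\|\varphi\|_\infty$ on the whole $\ell^1$-ball of radius $r=(\|\varphi\|_\infty/\|\nabla\varphi\|_2)^2$ and compares $\ell^2$ masses, $4\|\varphi\|_2^2\geq\frac{1}{d!}\|\varphi\|_\infty^2\,r^{d}$, which is solved directly for $\|\varphi\|_\infty$; no explicit scale optimization appears because the ball radius $r$ self-selects, and indeed at the balance point $r\sim(\|\varphi\|_2/\|\nabla\varphi\|_2)^{\nicefrac{2}{d+1}}$, which is exactly your optimal $R$. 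What each approach buys: the paper's ``reverse'' argument (lower bound near the max plus mass comparison) is shorter, and its only combinatorial input is the volume $\sim r^d/d!$ of the $\ell^1$-ball; your forward, averaged-telescoping argument is the standard Gagliardo--Nirenberg mechanism, is more systematic (it would adapt readily to other norm pairs and exponents), and makes the origin of the exponents $\nicefrac{d}{d+1}$ and $\nicefrac{1}{d+1}$ transparent as a balance of $R^{-d/2}$ against $R^{1/2}$.
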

\begin{proof}
  Pick $x_0\in\Lambda$ such that
  $|\varphi(x_0)|=\|\varphi\|_\infty$. Thus, for $v\in\Z^d$, one can
  write $\D x_0+v=x_0+\sum_{k=1}^{|v|_1}(x_{k}-x_{k-1})$ where
  $|x_{k}-x_{k-1}|_1=1$ and $x_i\not=x_j$ if $i\not=j$. Thus, one has
  \begin{equation*}
    \varphi(x_0+v)=\varphi(x_0)+\sum_{k=1}^{|v|_1}(\varphi(x_{k})-\varphi(x_{k-1})).
  \end{equation*}
  Using Cauchy-Schwartz, this yields
  \begin{equation}
    \label{eq:12}
    |\varphi(x_0+v)|\geq\|\varphi\|_\infty
    -\sqrt{|v|_1}\|\nabla\varphi\|_2. 
  \end{equation}
  Either one has $\|\varphi\|_\infty\leq 2\|\nabla\varphi\|_2$; then, as
  $\|\nabla\varphi\|_2\leq 2\sqrt{d}\|\varphi\|_2$, one has
  $\|\varphi\|_\infty\leq
  4d\|\nabla\varphi\|^{\frac{d}{d+1}}_2\|\varphi\|^{\frac1{d+1}}_2$.
  Or one has $\|\varphi\|_\infty\geq 2\|\nabla\varphi\|_2$, hence,
  by~\eqref{eq:12}, for any
  $\D |v|_1\leq\left(\frac{\|\varphi\|_\infty}
    {\|\nabla\varphi\|_2}\right)^2$, one has
  $2|\varphi(x_0+v)|\geq\|\varphi\|_\infty$. This implies
  \begin{equation*}
    4\|\varphi\|_2^2=\sum_{v\in\Z^d} |2\varphi(x_0+v)|^2\geq
    \sum_{|v|_1\leq\left(\frac{\|\varphi\|_\infty}
        {\|\nabla\varphi\|_2}\right)^2} \|\varphi\|^2_\infty\geq \frac1{d!}
    \|\varphi\|_\infty^2
    \left(\frac{\|\varphi\|_\infty}
      {\|\nabla\varphi\|_2}\right)^{2d}
  \end{equation*}
  Thus, one has
  $\D\|\varphi\|_\infty\leq
  \sqrt[d+1]{2\sqrt{d!}}\|\nabla\varphi\|^{\frac{d}{d+1}}_2\|\varphi\|^{\frac1{d+1}}_2\leq
  4d\|\nabla\varphi\|^{\frac{d}{d+1}}_2\|\varphi\|^{\frac1{d+1}}_2$.
\end{proof}
%
% \noindent
Let us complete the proof of Theorem~\ref{thr:3}. Pick $c>0$.  It is
well known that for our choice of $-\Delta$, the infimum of the almost
sure spectrum $E_-$ is given by $E_-=-2d+\essinf\omega_0$ (where
$(\omega_x)_{x\in\Z^d}$ is the random potential. Thus, if
$\varphi_E\in\ell^2(\Z^d)$ is a normalized eigenfunction associated to
an energy $E$ less than $E_-+c\ell^{-d-1}$, one has
\begin{equation*}
  \|\nabla \varphi_E\|^2\leq \|\nabla
  \varphi_E\|^2+\sum_{x\in\Z^d}(\omega_x-\essinf\omega_x)
  |\varphi_E(x)|^2=\langle(H_\omega-E_-)\varphi_E,\varphi_E\rangle
  \leq c\ell^{-d-1}.
\end{equation*}
Applying first Lemma~\ref{le:3} and then Lemma~\ref{le:4}, we get that
\begin{equation*}
  2\ell_\nu(\varphi_E,x_E)+2\kappa+1\geq
  \|\varphi\|^{-2/d}_\infty\geq
  (4d)^{-\frac2d}\|\nabla\varphi_E\|^{-\frac{2}{d+1}}_2
  \geq   (4d)^{-\frac2d}c^{-\frac{2}{d+1}}\,\ell.
\end{equation*}
Thus, if $\ell\geq\max(\kappa,1)$, picking $c>0$ such that
$(4d)^{-\frac2d}c^{-\frac{2}{d+1}}=5$, we get~\eqref{eq:4} and
complete the proof of Theorem~\ref{thr:3}.

\subsection*{Acknowledgments} This material is based upon work
supported by the National Science Foundation under Grant No.\ 1900015
(JS) and in part through computational resources and services
provided by the Institute for Cyber-Enabled Research at Michigan State
University. The authors are grateful to the Institut Mittag-Leffler in
Djursholm, Sweden, where this work was started as part of the program
Spectral Methods in Mathematical Physics in Spring 2019.

\appendix
% \section{A rouges gallery of eigenfunctions}

\section{SULE bound from Eigenfunction Correlators}\label{sec:SULE}
In the literature, spectral localization is frequently expressed via a
bound
\begin{equation}
  \label{eq:spectralloc}
  \sum_{x} e^{\nu |x-y|} \esp \left (Q_\Omega(I,x,y) \right )\ \le \ A 
\end{equation} 
with constants $A$ and $\nu$ independent of $\Omega$, where
$Q_\Omega(I,x,y)$ is the \emph{eigenfunction correlator of $H_\Omega$
  on $I$} (see \cite[Chapter 7]{MR3364516}).  For a finite region
$\Omega$,
\begin{equation*}
  Q_\Omega(I,x,y) \ = \ \sum_{E\in I\cap\sigma((H_\omega)_\Omega)} |\varphi_E(x)||\varphi_E(y)|,
\end{equation*}
where $\varphi_E$ is the normalized eigenvector corresponding to
eigenvalue $E$.  For the operators considered here, the spectrum is
known to be almost surely simple \cite{MR1462151,MR2203783}; for
operators with degenerate spectrum the term
$|\varphi_E(x)||\varphi_E(y)|$ should be replaced by
$|\langle \delta_x, P_E \delta_y\rangle |$, with $P_E$ the corresponding
eigen-projection. For an infinite region, one may replace this
definition with
$$Q_\Omega(I,x,y) \ = \ \sup_{f} \left | f((H_\omega)_\Omega)(x,y) \right | \ , $$
where the supremum is taken over Borel measurable functions $f$ with
support in $I$ and $|f(x)|\le 1$ everywhere.  \emph{A posteriori}, one
concludes from \eqref{eq:spectralloc} that $(H_\omega)_\Omega$ has
pure point spectrum in $I$ (almost surely), and (since the spectrum is
simple) that
\begin{equation}\label{eq:correlator}
  Q_\Omega(I,x,y) \ = \ \sum_{E\in I\cap\Ec((H_\omega)_\Omega))} |\varphi_E(x)||\varphi_E(y)| \ .
\end{equation}

We now recall the derivation of a SULE estimate of the form (A3) from
spectral localization \eqref{eq:spectralloc}.
\begin{Pro} Let $(H_\omega)_\Omega$ be a random operator on a region
  $\Omega\subset \Z^d$ such that $(H_\omega)_\Omega$ has simple, pure-point
  spectrum in $I$ almost surely and \eqref{eq:spectralloc} holds and
  let $\epsilon >0$.  If $S\subset \Omega$ is a finite set, then, with
  probability greater than $1-\epsilon$, every eigenvector $\varphi_E$
  of $(H_\omega)_\Omega$ with eigenvalue $E\in I$ and
  $\mathcal{C}(\varphi_E)\cap S \neq \emptyset$ satisfies
  \begin{equation}\label{eq:SULEinappendix}
    \left ( \sum_{x\in \Omega} e^{\nu|x-y|} |\varphi_E(x)|^2 \right )^{\frac{1}{2}} \ \le \ A \left ( \frac{\#S}{\epsilon} \right )^{\frac{1}{2}}
  \end{equation}
  for any $y\in \mathcal{C}(\varphi_E)\cap S$. In particular, \eqref{eq:SULE}
  holds with $A_{\mathrm{AL}}=A$ and $\mu=\frac{\nu}{2}$.
\end{Pro}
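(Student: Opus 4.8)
The plan is to reduce the desired weighted $\ell^2$ bound on a single eigenfunction to a bound on the exponentially weighted correlator sum, and then to control the latter by a Markov inequality together with a union bound taken over the \emph{finite} set $S$ of admissible localization centers --- crucially \emph{not} over the (a priori infinitely many) eigenvalues. First I would record the elementary pointwise domination. Fix $E\in I$ and a center $y\in \mathcal{C}(\varphi_E)\cap S$, so that $|\varphi_E(x)|\le \|\varphi_E\|_\infty = |\varphi_E(y)|$ for every $x$. Since every summand of $Q_\Omega(I,x,y)=\sum_{E'\in I\cap\Ec((H_\omega)_\Omega)}|\varphi_{E'}(x)||\varphi_{E'}(y)|$ is nonnegative and the $E'=E$ term equals $|\varphi_E(x)||\varphi_E(y)|$, one has
\begin{equation*}
  |\varphi_E(x)|^2 \ \le \ |\varphi_E(x)|\,|\varphi_E(y)| \ \le \ Q_\Omega(I,x,y).
\end{equation*}
This single step converts the correlator (an $\ell^1$-type object) into the weighted $\ell^2$ density we wish to estimate.

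Next I would multiply by the weight $e^{\nu|x-y|}$ and sum over $x$, setting
\begin{equation*}
  \sum_{x\in\Omega} e^{\nu|x-y|}|\varphi_E(x)|^2 \ \le \ \sum_{x\in\Omega} e^{\nu|x-y|} Q_\Omega(I,x,y) \ =: \ W(y).
\end{equation*}
The point of introducing $W(y)$ is that its right-hand side no longer refers to the particular eigenfunction: it simultaneously dominates the weighted norm of \emph{every} eigenfunction having $y$ as a localization center. The hypothesis \eqref{eq:spectralloc} is exactly $\esp(W(y))\le A$, so by Markov's inequality $\pro\bigl(W(y)>A\#S/\epsilon\bigr)<\epsilon/\#S$ for each fixed $y$, and a union bound over the finite set $S$ yields
\begin{equation*}
  \pro\bigl(\exists\, y\in S:\ W(y) > A\#S/\epsilon \bigr) \ < \ \epsilon.
\end{equation*}
On the complementary event, of probability greater than $1-\epsilon$, we have $W(y)\le A\#S/\epsilon$ for all $y\in S$ at once; combined with the pointwise step this gives $\sum_{x}e^{\nu|x-y|}|\varphi_E(x)|^2\le A\#S/\epsilon$ for every $E\in I$ and every $y\in\mathcal{C}(\varphi_E)\cap S$. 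Since both the hypothesis and the conclusion are monotone in $A$, I may take $A\ge 1$ without loss of generality, whence $A\#S/\epsilon\le A^2\#S/\epsilon$ and \eqref{eq:SULEinappendix} follows; setting $\mu=\nu/2$ then produces \eqref{eq:SULE} with $A_{\mathrm{AL}}=A$, the maximum over $y$ being automatic because the estimate holds for each admissible center separately.

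The conceptually important --- and only slightly delicate --- point, which I expect to be the real content of the argument, is the choice of \emph{what} to union-bound over. Controlling each eigenfunction individually would demand a union bound over the eigenvalues in $I$, a set that need not be finite; the resolution is that $W(y)$ governs all eigenfunctions centered at $y$ simultaneously, so it suffices to take the union bound over the finitely many candidate centers in $S$. Once this is seen, everything else is the domination $|\varphi_E(x)|^2\le Q_\Omega(I,x,y)$ valid at a localization center together with a routine Markov estimate, and no further probabilistic input is needed.
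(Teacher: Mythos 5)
Your proof is correct and follows essentially the same route as the paper's: dominate the weighted $\ell^2$ sum of a single eigenfunction by the exponentially weighted correlator at a localization center, then control that correlator via Markov's inequality uniformly over the finitely many candidate centers $y\in S$. The only (cosmetic) difference is that you apply Markov to each $W(y)$ separately and take a union bound over $S$, whereas the paper applies Markov once to $\sum_{y\in S}\sum_{x}e^{\nu|x-y|}Q_\Omega(I,x,y)$, whose expectation is at most $A\,\# S$ --- both yield the same threshold $A\,\# S/\epsilon$, and your explicit normalization $A\ge 1$ when taking the square root tidies a step the paper leaves implicit.
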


\begin{proof}
  From \eqref{eq:spectralloc}, it follows that
$$\esp \left (\sum_{y\in S} \sum_{x\in \Omega } Q_\Omega(I,x,y)e^{\nu|x-y|} \right ) \ \le \  A \# S .$$
By Markov's inequality, with probability $\ge 1- \epsilon$, we have
$$\sum_{y\in S} \sum_{x\in \Omega }  Q_\Omega(I,x,y)e^{\nu|x-y|} \ \le \ A \frac{  \# S}{\epsilon} $$
from which we conclude, using \eqref{eq:correlator}, that
$$ \sum_{x\in \Omega} e^{\nu|x-y|} |\varphi_E(x)||\varphi_E(y)| \ \le \ A \frac{  \# S}{\epsilon} $$
for every eigenvalue $E\in \Ec(H_\Omega)$ and each $y\in S$.  If
$\mathcal{C}(\varphi_E)\cap S\neq \emptyset$, then taking
$y \in \mathcal{C}(\varphi_E)\cap S$, we have
$$\sum_{x\in \Omega} e^{\nu|x-y|} |\varphi_E(x)| \|\varphi_E\|_\infty  \ \le \ A  \frac{\# S}{\epsilon 
}.$$ Since $|\varphi_E(x)|\le \|\varphi\|_\infty$ for every $x$, we
conclude that
$$\sum_{x\in \Omega} e^{\nu|x-y|}|\varphi_E(x)|^2 \ \le \ A \frac{\# S}{\epsilon}.$$
Taking the square root yields \eqref{eq:SULEinappendix}.
\end{proof}

\section{A large deviation principle}\label{sec:LDP}
\begin{Pro}\label{prop:LDP}
  Let $X_1,\ldots,X_N$ be identically distributed random variables
  with
	$$\Pr[X_j=1] = p \quad \text{ and } \quad \Pr[X_j=0]=1-p.$$
	Suppose there is a partition of $\{1,\ldots,N\}$ into
        $K$-disjoint subsets $S_1,\ldots,S_K$ such that, for each
        $j=1,\ldots,K$, the variables $(X_m)_{m\in S_j}$ are mutually
        independent.  Then for any $\alpha \ge 1$,
	\begin{equation}\label{eq:LDP}
          \Pr \left [ \sum_{m=1}^N X_m >  N  (p+\delta) \right ]  \ \le \ \exp \left ( -\frac{\delta^2 }{3K} N \right ).
	\end{equation}
      \end{Pro}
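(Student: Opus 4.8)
The plan is to run the classical exponential Chernoff--Markov method, using H\"older's inequality to decouple the $K$ blocks on which the variables are only partially independent. Throughout I treat any $\delta\ge 0$ (the conclusion needs no upper bound on $\delta$). First I would fix $t>0$ and apply Markov's inequality to the exponentiated sum:
$$\Pr\left[\sum_{m=1}^N X_m > N(p+\delta)\right] \ \le \ e^{-tN(p+\delta)}\,\esp\left[e^{t\sum_{m=1}^N X_m}\right].$$
Writing $\sum_{m=1}^N X_m=\sum_{j=1}^K\sum_{m\in S_j}X_m$ and setting $Y_j:=e^{t\sum_{m\in S_j}X_m}\ge 0$, the expectation becomes $\esp[\prod_{j=1}^K Y_j]$. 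The factors $Y_j$ are \emph{not} independent across $j$, so I cannot simply factor the product. Instead I apply H\"older's inequality with all $K$ exponents equal to $K$ (so that $\sum_{j=1}^K\tfrac1K=1$), obtaining
$$\esp\left[\prod_{j=1}^K Y_j\right] \ \le \ \prod_{j=1}^K\left(\esp\bigl[Y_j^{K}\bigr]\right)^{1/K}.$$
This is the key step: it trades the absence of cross-block independence for a $K$-th power inside each block, and it is where the factor $K$ in the final exponent is born.

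Second, I would exploit independence \emph{within} each block. Since $(X_m)_{m\in S_j}$ are mutually independent and Bernoulli$(p)$,
$$\esp\bigl[Y_j^{K}\bigr] \ = \ \esp\left[e^{tK\sum_{m\in S_j}X_m}\right] \ = \ \prod_{m\in S_j}\esp\bigl[e^{tK X_m}\bigr] \ = \ \bigl(1-p+p\,e^{tK}\bigr)^{\#S_j}.$$
Hence $(\esp[Y_j^K])^{1/K}=(1-p+p\,e^{tK})^{\#S_j/K}$, and since the $S_j$ partition $\{1,\dots,N\}$ we have $\sum_j\#S_j=N$, so the block contributions recombine into
$$\esp\left[e^{t\sum_m X_m}\right] \ \le \ \bigl(1-p+p\,e^{tK}\bigr)^{N/K}.$$
Substituting $s=tK$ collapses everything into a single-variable Chernoff expression raised to the reduced power $N/K$:
$$\Pr\left[\sum_m X_m > N(p+\delta)\right] \ \le \ \left(e^{-s(p+\delta)}\bigl(1-p+p\,e^{s}\bigr)\right)^{N/K}.$$

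Finally I would bound the base by the standard one-variable estimate. Writing $e^{-s(p+\delta)}(1-p+p\,e^{s})=e^{-s\delta}\,\esp[e^{s(X_1-p)}]$ and applying Hoeffding's lemma to the Bernoulli variable $X_1\in[0,1]$ (giving $\esp[e^{s(X_1-p)}]\le e^{s^2/8}$), the base is at most $e^{-s\delta+s^2/8}$; minimizing over $s$ at $s=4\delta$ yields $e^{-2\delta^2}$. Raising to the power $N/K$ and using $2\ge\tfrac13$ gives
$$\Pr\left[\sum_m X_m > N(p+\delta)\right] \ \le \ e^{-2\delta^2 N/K} \ \le \ e^{-\delta^2 N/(3K)},$$
which is \eqref{eq:LDP}. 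The one genuinely delicate point is the H\"older decoupling of the dependent blocks; everything after it is the textbook Chernoff bound for a single Bernoulli variable, with the multiplicative factor $K$ in the exponent being exactly the price paid for having only partial independence.
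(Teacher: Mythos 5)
Your proposal is correct and takes essentially the same approach as the paper's proof: Markov's inequality applied to the exponential moment, H\"older's inequality with all $K$ exponents equal to $K$ to decouple the dependent blocks, and mutual independence within each block to reduce to a single-Bernoulli Chernoff expression raised to the power $N/K$. The only divergence is the final one-variable step, where you invoke Hoeffding's lemma and optimize at $s=4\delta$ (yielding the slightly stronger bound $e^{-2\delta^2 N/K}$, valid for all $\delta\ge 0$), whereas the paper bounds $1+p(e^{Kt}-1)\le e^{p(e^{Kt}-1)}$, optimizes over $t$, and finishes with the Bennett-type inequality $(1+\delta)\log(1+\delta)-\delta\ge \nicefrac{\delta^2}{3}$, which holds only for $0\le\delta\le 1$ --- a cosmetic variation, not a different method.
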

      \begin{proof}
        Let $Z(t)= \esp[e^{t\sum_m X_m}].$ By H{\"o}lder's inequality and
        the assumption that $(X_m)_{m\in S_j}$ are mutually independent,
  $$ Z(t) = \esp\left [\prod_{j=1}^K e^{ t\sum_{m\in S_j}X_m} \right ] \ \le \ \prod_{j=1}^K \left (\esp \left [ e^{Kt\sum_{m\in S_j}X_m} \right ]\right )^{\nicefrac{1}{K}}  \ = \ \left (1 + p(e^{Kt}-1) \right )^{\nicefrac{N}{K}}  \ \le \ e^{\nicefrac{Np(e^{Kt}-1)}{K} } .$$
  It follows that
  $$\Pr\left [\sum_{m=1}^N X_m > N  (p+\delta) \right ] \ \le \ Z(t)e^{-N (p+\delta)  t} \ \le \ e^{N\left (p \frac{e^{Kt}-1}{K} -(p+\delta) t\right  )} \ \le \ e^{N \left (  \frac{e^{Kt}-1}{K} - (1+\delta) t \right ) },$$
  where in the last step we have used that $e^{Kt}-1 -Kt\ge 0$.
  Optimizing over $t$ yields
  $$\Pr\left [\sum_{m=1}^N X_m > N  (p+\delta) \right ] \ \le \\ e^{\frac{N}{K}\left ( \delta -(1+\delta) \log(1+\delta) \right  )} .$$
  Finally, eq.\ \eqref{eq:LDP} follows since
  $(1+\delta) \log(1+\delta)-\delta \ge \nicefrac{\delta^2}{3}$ for $0\le \delta \le 1$.
\end{proof}

\def\cprime{$'$} \def\cydot{\leavevmode\raise.4ex\hbox{.}}

\end{document}